\newtheorem{theorem}{Theorem}
\newtheorem{corollary}[theorem]{Corollary}
\newtheorem{lemma}[theorem]{Lemma}
\newtheorem{proposition}[theorem]{Proposition}
\newtheorem{remark}[theorem]{Remark}
\newtheorem{definition}[theorem]{Definition}
\numberwithin{theorem}{section}
\numberwithin{conjecture}{section}
\numberwithin{corollary}{section}
\numberwithin{lemma}{section}
\numberwithin{proposition}{section}
\numberwithin{remark}{section}
\numberwithin{definition}{section}
\numberwithin{figure}{section}
\numberwithin{equation}{section}
\begin{document}

\title{Crossing Probabilities of Multiple Ising Interfaces}

\vspace{5cm}

\begin{center}
\LARGE \bf Crossing Probabilities of Multiple Ising Interfaces
\end{center}

\vspace{0.25cm}

\begin{center}
{\bf Eveliina Peltola}\\
{\footnotesize{\texttt{eveliina.peltola@hcm.uni-bonn.de}}}\\
{\small{ \vspace*{1mm}
Department of Mathematics and Systems Analysis, Aalto University, Finland
\\ and \\ 
Institute for Applied Mathematics, University of Bonn, Germany}}

\bigskip

{\bf Hao Wu} \\
{\footnotesize{\texttt{hao.wu.proba@gmail.com}}}\\
{\small{\vspace*{1mm} Yau Mathematical Sciences Center, Tsinghua University, China}}
\end{center}

\vspace{0.75cm}

\begin{center}
\begin{minipage}{0.95\textwidth}
\abstract{
We prove that in the scaling limit,
the crossing probabilities of multiple interfaces in the critical planar Ising model with alternating boundary conditions
are conformally invariant expressions given by the pure partition functions of multiple $\textnormal{SLE}_\kappa$ with $\kappa=3$.
In particular, this identifies the scaling limits with ratios of specific correlation functions of conformal field theory.
}
\end{minipage}
\end{center}

\vspace{0.75cm}

\tableofcontents
\newpage

\global\long\def\ud{\mathrm{d}}
\global\long\def\dist{\mathrm{dist}}

\global\long\def\domainofdef{\mathfrak{U}}
\global\long\def\SLE{\textnormal{SLE}}
\global\long\def\CLE{\textnormal{CLE}}

\global\long\def\LP{\mathrm{LP}}
\global\long\def\DP{\mathrm{DP}}
\global\long\def\DPleq{\preceq} 
\global\long\def\DPgeq{\succeq} 
\newcommand{\wedgeat}[1]{\lozenge_#1} 
\newcommand{\upwedgeat}[1]{\wedge^#1}
\newcommand{\downwedgeat}[1]{\vee_#1}
\newcommand{\slopeat}[1]{\times_#1}
\newcommand{\removewedge}[1]{\setminus \wedgeat{#1}}
\newcommand{\removeupwedge}[1]{\setminus \upwedgeat{#1}}
\newcommand{\removedownwedge}[1]{\setminus \downwedgeat{#1}}
\newcommand{\wedgelift}[1]{\uparrow \wedgeat{#1}} 
\global\long\def\Mmat{\mathcal{M}}
\global\long\def\Minv{\mathcal{M}^{-1}}
\global\long\def\link#1#2{\{#1,#2\}}
\global\long\def\removeLink{/}
\global\long\def\nested{\boldsymbol{\underline{\Cap}}}
\global\long\def\unnested{\boldsymbol{\underline{\cap\cap}}}

\newcommand{\KWleq}{\stackrel{\scriptscriptstyle{()}}{\scriptstyle{\longleftarrow}}}
\newcommand{\CItilingsof}{\mathcal{C}}

\global\long\def\Rpos{\R_{> 0}}
\global\long\def\Znn{\Z_{\geq 0}}
\global\long\def\im#1{\operatorname{Im}(#1)}

\global\long\def\localSLE{\mathsf{P}}

\global\long\def\FKdual{\mathcal{L}}

\global\long\def\Schwartz_space{S}

\global\long\def\CCleft{D_j^L}
\global\long\def\CCright{D_j^R}

\global\long\def\gff{\Gamma}
\global\long\def\Height{\LH}

\global\long\def\graph{\mathcal{G}}

\newcommand{\nestedtilingof}{T_0}
\newcommand{\pathfromto}[2]{#1 \rightsquigarrow #2}

\global\long\def\SLE{\mathrm{SLE}}
\global\long\def\GFF{\mathrm{GFF}}
\global\long\def\LERW{\mathrm{LERW}}
\global\long\def\Ising{\mathrm{Ising}}
\global\long\def\free{\mathrm{free}}

\global\long\def\C{\mathbb{C}}
\global\long\def\E{\mathbb{E}}
\global\long\def\HH{\mathbb{H}}
\global\long\def\R{\mathbb{R}}
\global\long\def\PP{\mathbb{P}}
\global\long\def\U{\mathbb{U}}
\global\long\def\Z{\mathbb{Z}}
\global\long\def\one{\mathbf{1}}

\global\long\def\LA{\mathcal{A}}
\global\long\def\LB{\mathcal{B}}
\global\long\def\LE{\mathcal{E}}
\global\long\def\LF{\mathcal{F}}
\global\long\def\CobloF{\mathcal{U}}
\global\long\def\PartF{\mathcal{Z}}

\global\long\def\eps{\epsilon}
\global\long\def\cond{\;|\;}

\newcommand{\conn}{\vartheta}

\newcommand{\realpt}{\smash{\mathring{x}}}
\global\long\def\edge#1#2{\langle #1,#2 \rangle}

\newcommand{\swal}{\tau}

\allowdisplaybreaks

\section{Introduction}
\label{sec::intro}
\begin{figure}
\includegraphics[width=3cm, angle=45]{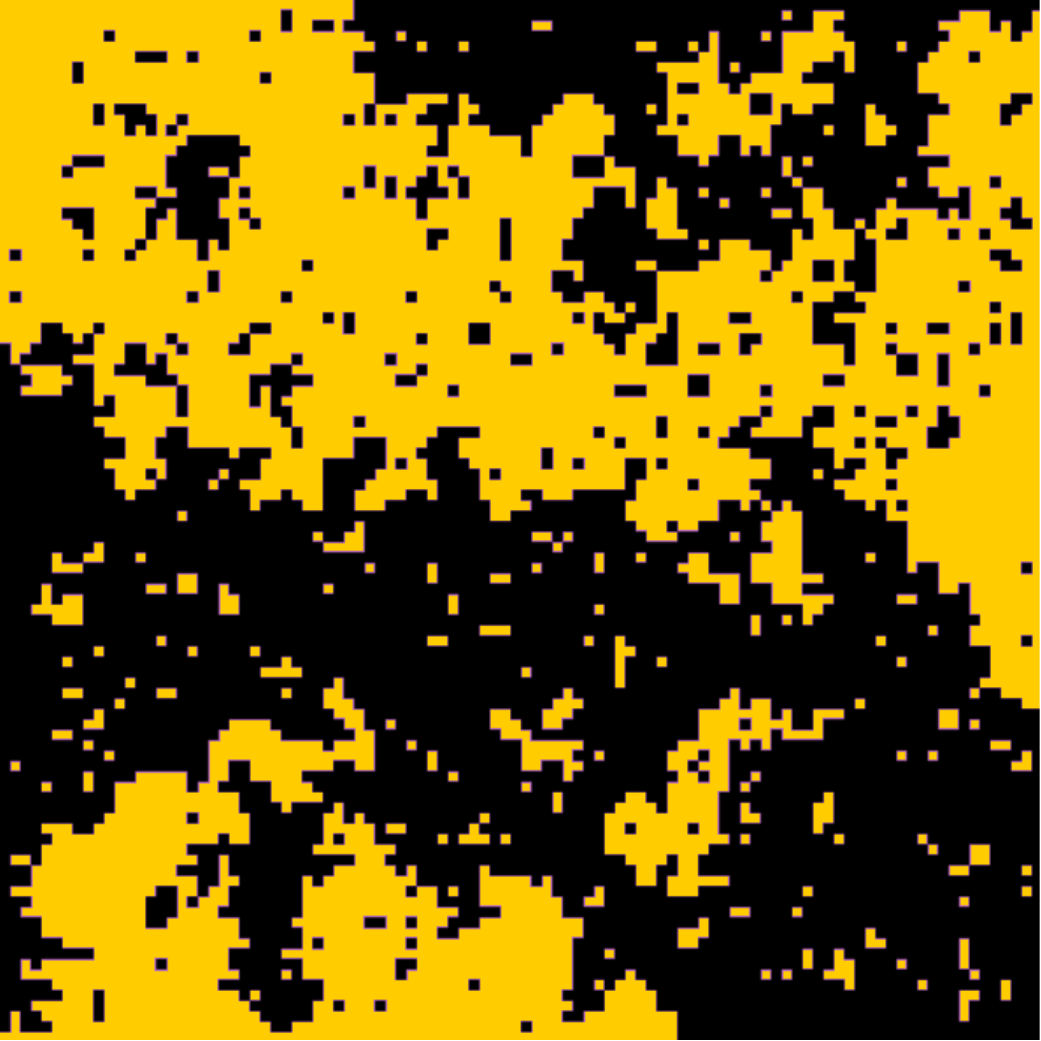} $\quad$
\includegraphics[width=3cm, angle=45]{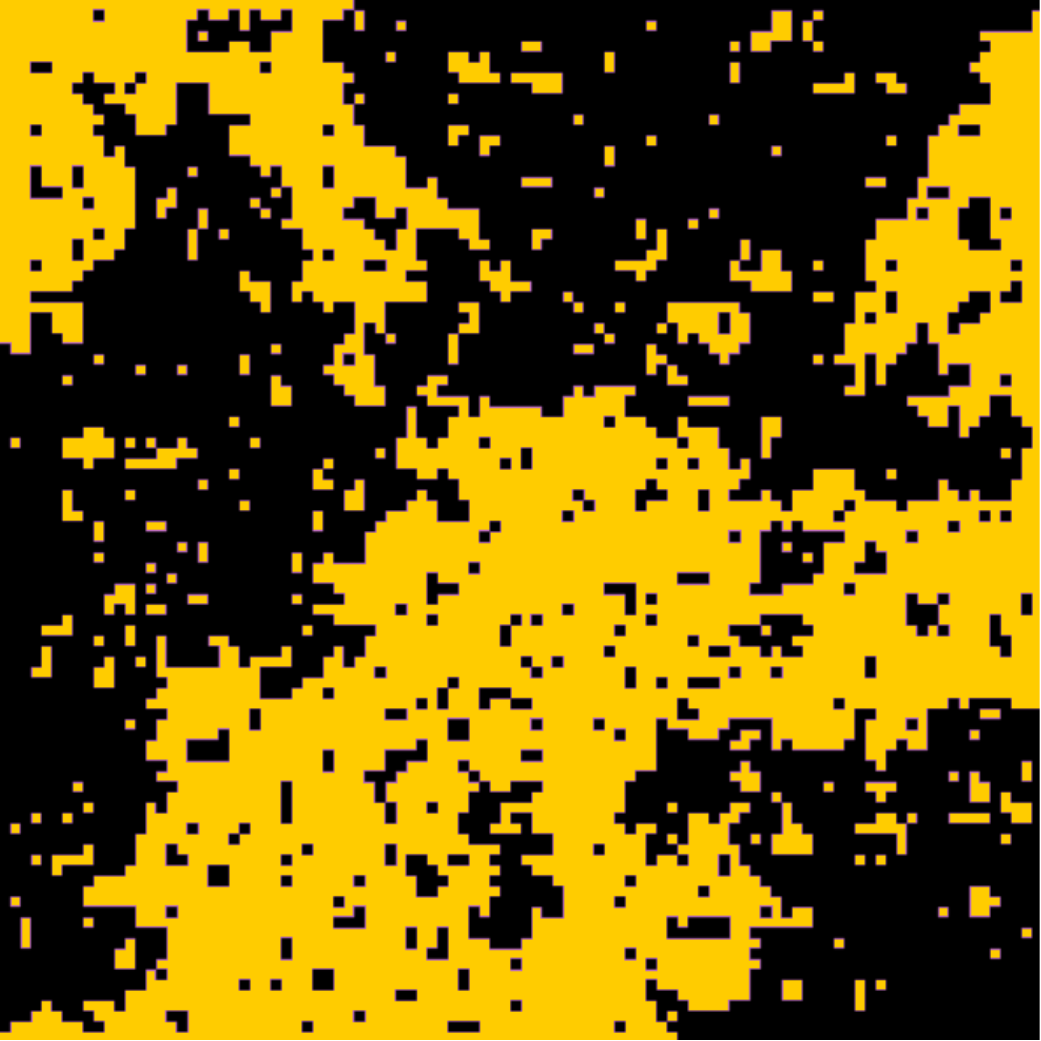} $\quad$
\includegraphics[width=3cm, angle=45]{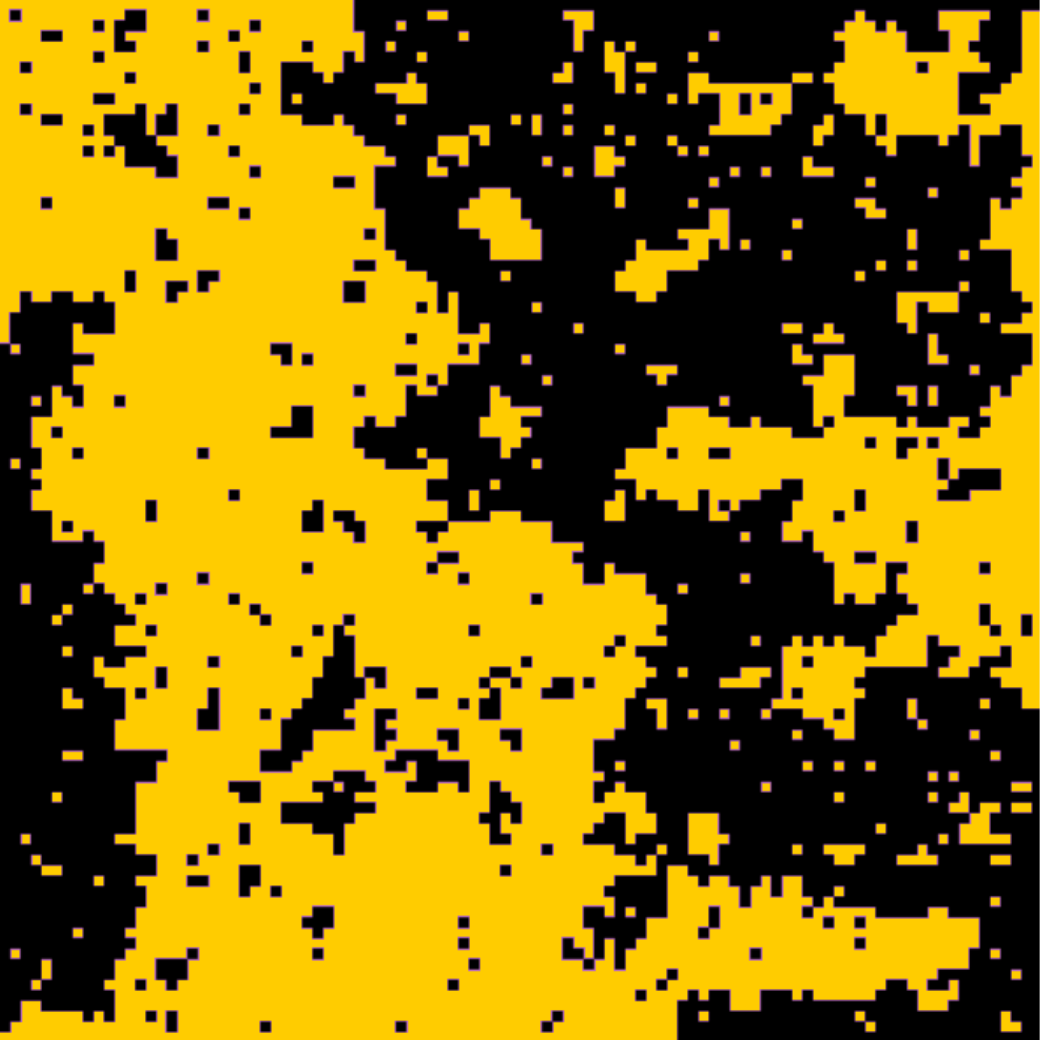} 
\\ \vspace{-1.5cm}
\includegraphics[width=3cm, angle=45]{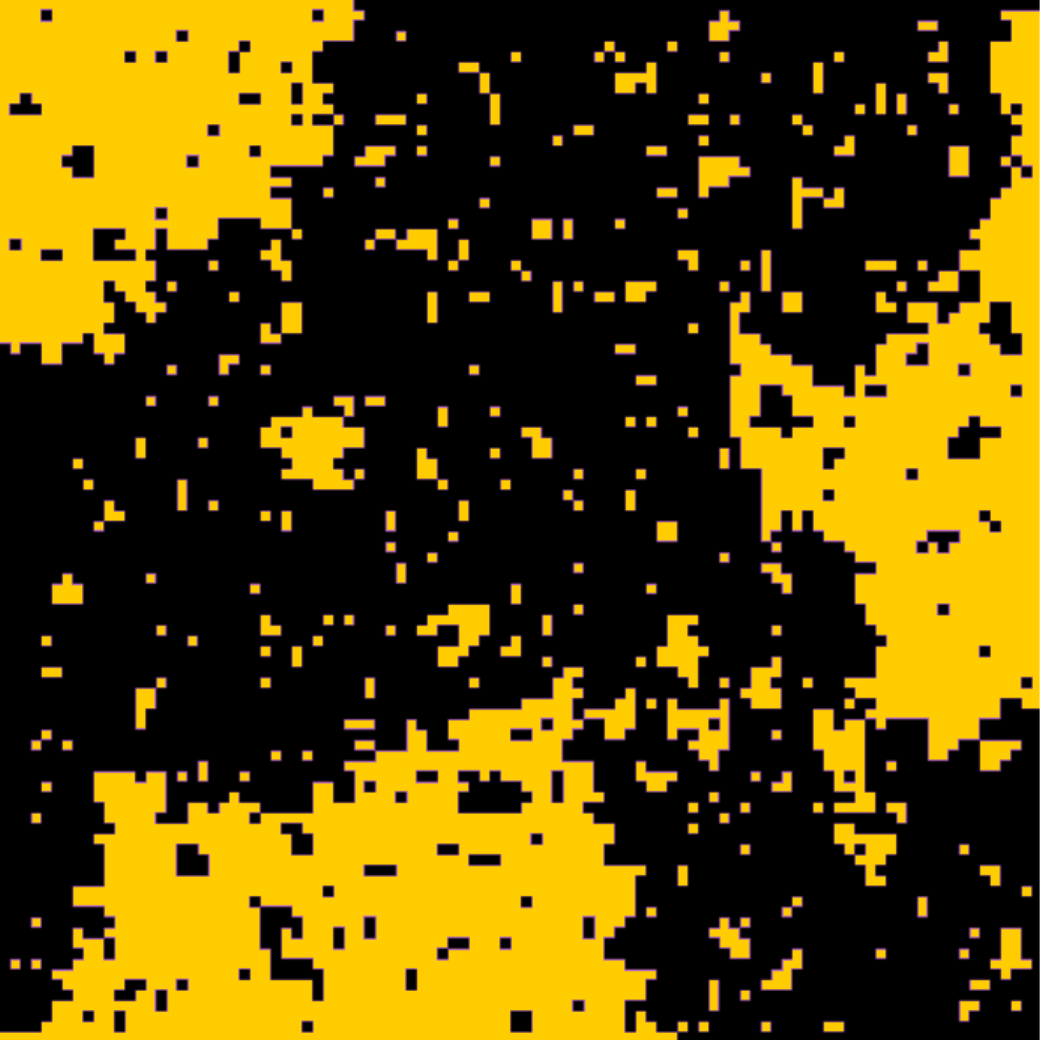} $\quad$
\includegraphics[width=3cm, angle=45]{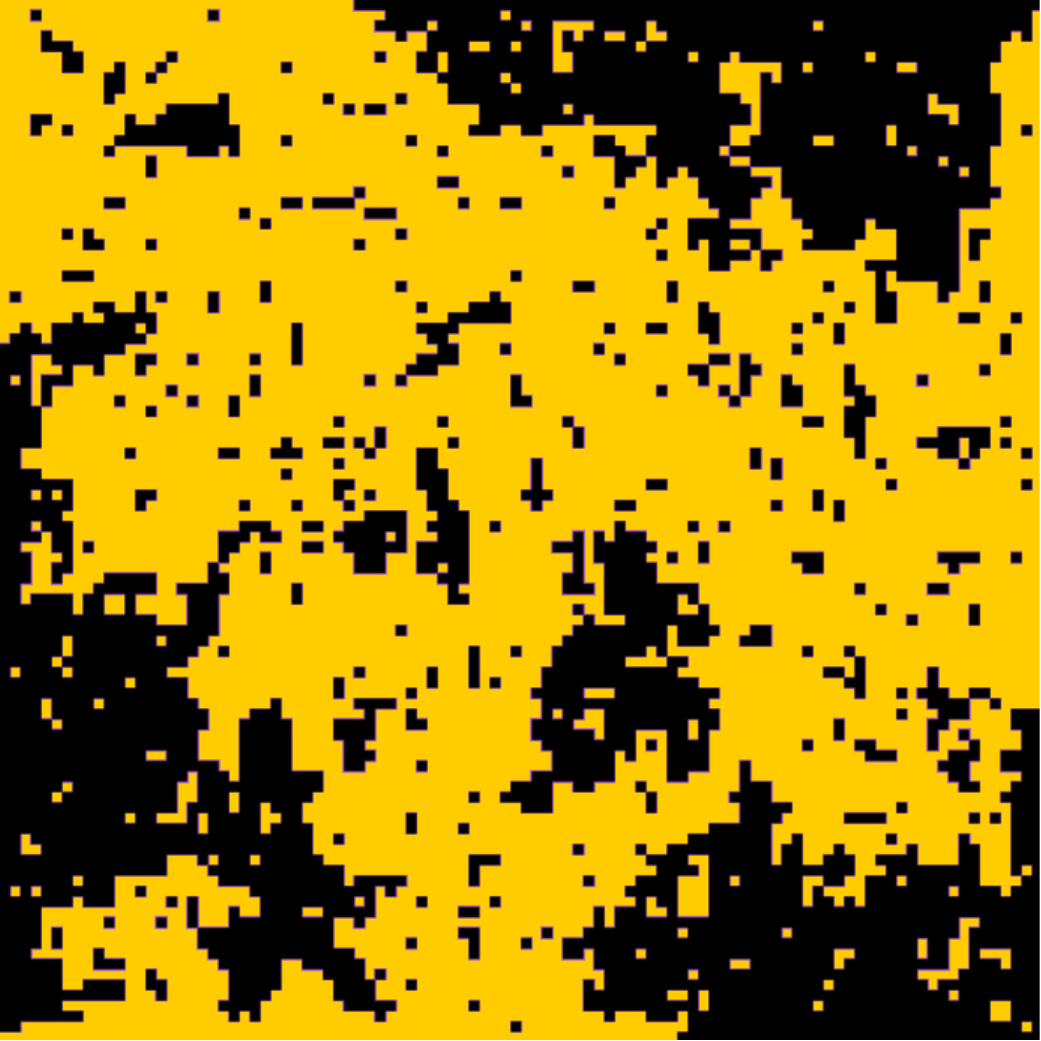} 
\caption{\label{fig::Ising}
Critical Ising model configurations on a $100\times100$ square with alternating boundary conditions: there are six marked boundary points at which the boundary condition changes from $\oplus$ (black) to $\ominus$ (yellow).  
In this case, there are five possible planar topological connectivities of the three chordal interfaces, labeled by $\alpha$ (see Figure~\ref{fig::patterns}). 
The total partition function has the symmetric form 
$\PartF_{\Ising} = \sum_\alpha \PartF_{\alpha}$, see Equation~\eqref{eq::TotalPartPos}, 
also featuring rotational symmetry of the boundary conditions under cyclic permutations of the marked boundary points (combined with a global spin flip $\oplus \leftrightarrow \ominus$ if necessary):
the crossing probabilities remain invariant under such a cyclic permutation.
}
\end{figure} 

The Ising model is arguably one of the most studied lattice models in statistical physics.
It was introduced in the 1920s by W. Lenz as a statistical model of 
two types of spins ($\oplus, \ominus$) defined on a lattice (e.g., $\Z^d$),
describing magnetic material.  
This model was further studied by Lenz's student, E.~Ising, who proved that there is no phase transition in dimension one,
leading him to conjecture that this is the case also in higher dimensions. However, as R.~Peierls showed in 1936, 
in two (and higher) dimensions an order-disorder phase transition in fact occurs at some critical temperature,
identified soon thereafter by a duality argument in 1941 by H.~Kramers and G.~Wannier (and later rigorously proven by L.~Onsager, 1944). 
During the next decades, renormalization group arguments and the introduction of conformal field theory (CFT)
suggested that, due to its (continuous, i.e., second-order) phase transition, at the critical temperature
the planar Ising model should enjoy \textit{conformal invariance} in the scaling 
limit as the lattice mesh tends to zero~\cite{BelavinPolyakovZamolodchikovConformalSymmetry, CardyRenorm}.
Ever since, there has been active research pertaining to understanding the planar Ising model at criticality, 
with recent success towards proving the conformal invariance via methods of discrete complex 
analysis~\cite{SmirnovConformalInvariance,SmirnovConformalInvarianceAnnals, ChelkakSmirnovIsing, 
ChelkakLzyurovSpinorIsing, HonglerKytolaIsingFree, HonglerSmirnovIsingEnergy, CDCHKSConvergenceIsingSLE, ChelkakHonglerLzyurovConformalInvarianceCorrelationIsing, ChelkakHonglerLzyurovConformalInvarianceCorrelationIsingGeneral}.

From the physics point of view, the scaling limit of the critical Ising model should be described by
some CFT.  
Not only such a statement lacks mathematical rigor, but also the limiting object(s) may not even be well-defined.
Interestingly, the scaling limit of the random field formed by the spin variables can be described 
--- not as a random function, but --- as a random distribution~\cite{CGN:Planar_Ising_magnetization_field1, ChelkakHonglerLzyurovConformalInvarianceCorrelationIsing}.
However, e.g.~the energy density only has a rigorous description in the scaling limit 
at the level of its correlation functions~\cite{HonglerSmirnovIsingEnergy},  
and it seems unclear, if even possible, to interpret it probabilistically as a random field.

Important geometric information about the model is encoded in its \textit{crossing probabilities}.
The goal of the present article is to identify the scaling limits of boundary-to-boundary 
crossing probabilities in the critical planar Ising model
as (ratios of) specific correlation functions of CFT
(namely those of a so-called degenerate field on the boundary with conformal weight $h_{1,2} = 1/2$ in a CFT of central charge $c = 1/2$). 
We emphasize that, although such a CFT field has not been mathematically defined, 
its \textit{correlation functions} can be understood as functions of several complex or real variables,
uniquely determined as solutions to certain PDE boundary value problems. 
The appropriate partial differential equations (given in Equation~\eqref{eq:PDEIsing})
are a special case of PDEs known in the physics literature as ``level-two BPZ'' or ``null-field''  equations~\cite{BelavinPolyakovZamolodchikovConformalSymmetry}. They are hypoelliptic equations with singularities on the diagonals (that is, when variables collide).
These singularities can be used to determine, in a sense, boundary behavior of the solutions of interest: by imposing specific asymptotic properties motivated by CFT fusion rules
--- see~\cite[Section~8]{KytolaMultipleSLE} for heuristic derivations ---
one can single out functions that describe the crossing probabilities in the scaling limit (see Theorem~\ref{thm::ising_crossingproba} for the precise statement). 
Interestingly, these correlation functions are also reminiscent of those of the free fermion field, or energy density, 
on the boundary, cf.~\cite{ID_book, DMS:CFT, KytolaMultipleSLE, HonglerThesis, HonglerSmirnovIsingEnergy, 
FSKZ-A_formula_for_crossing_probabilities_of_critical_systems_inside_polygons, PeltolaICMP}.

\bigskip

We shall next give the precise statements of our result. 
We consider the critical Ising model on a $\delta$-scaled square lattice approximation $\Omega^{\delta}$ 
of a simply connected subdomain $\Omega$ of the plane. 
One could, in principle, consider much more general graphs
(e.g.,~isoradial~\cite{CS-discrete_complex_analysis_on_isoradial, ChelkakSmirnovIsing}) 
and thus also address \textit{universality} for the Ising model: its scaling limit should not depend on the microscopic details of the approximation scheme. This would be, however, highly technical, so we stick to the simplest setup.

We impose \textit{alternating boundary conditions}: 
we divide the boundary into $2N$ segments, 
$N$ of which have spin $\oplus$
and $N$ spin $\ominus$, and the different segments alternate
--- see Equation~\eqref{eq::alternating} and Section~\ref{sec::Ising_crossing_proba} for details.
With such boundary conditions, $N$ random macroscopic \textit{interfaces} connect pairwise the $2N$ marked boundary points, as illustrated in Figure~\ref{fig::Ising}.
Supplementing the celebrated results of~\cite{CDCHKSConvergenceIsingSLE} in the case of $N=1$, it was proved 
in~\cite{IzyurovIsingMultiplyConnectedDomains} that ``locally'' (i.e., up to a stopping time for the growth of the curve), 
these interfaces converge (weakly in terms of probability measures on curves) in the scaling limit $\delta \to 0$ to 
a \textit{multiple Schramm-Loewner evolution} process (a version of $N$-$\SLE_\kappa$ with $\kappa = 3$). 

\begin{figure}
\includegraphics[scale=.9]{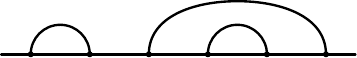} \quad
\includegraphics[scale=.9]{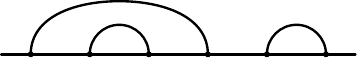} \quad
\includegraphics[scale=.9]{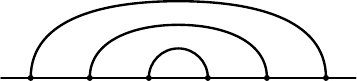} \\[2em]
\includegraphics[scale=.9]{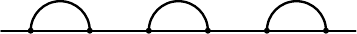} \quad
\includegraphics[scale=.9]{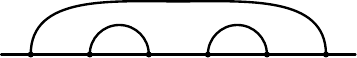} \\[1em]
\caption{\label{fig::patterns}
Illustration of the five possible planar topological connectivities $\alpha \in \LP_N$ of  the three chordal interfaces in Figure~\ref{fig::Ising} transformed to the upper half-plane (serving as a common reference domain by conformal invariance). 
Here, the first marked point on the left corresponds to the lower corner of the rotated square in Figure~\ref{fig::Ising},
and we follow the marked points counterclockwise. We call the labels $\alpha \in \LP_N$ ``link patterns''.
}
\end{figure}

The $N$ interfaces can form a Catalan number $\frac{1}{N+1} \binom{2N}{N}$ of possible planar connectivities, 
as illustrated in Figure~\ref{fig::Ising}. 
We label them by \textit{(planar) link patterns}
$\alpha  =  \{  \link{a_1}{b_1}, \ldots,  \link{a_N}{b_N} \}$,  
or in other words, planar pair partitions of the set
$\{1,2,\ldots,2N\} = \{a_1, b_1,\ldots,  a_N, b_N  \}$.
We denote by $\LP_N$ the set of such link patterns comprising $N$ \textit{links} (i.e., pairs) $\link{a_s}{b_s}$. 
In this article, we are interested in the \textit{crossing probabilities}
of these interfaces as functions of the domain and the marked boundary points. Our main result, Theorem~\ref{thm::ising_crossingproba}, identifies the scaling limits of these crossing probabilities in terms of 
so-called partition functions of multiple $\SLE_3$ curves. In particular, the limit crossing probabilities are conformally invariant and match with their CFT predictions --- see~\cite[Section~8.3]{KytolaMultipleSLE} for a detailed discussion.

\bigskip

We call $(\Omega; x_1, \ldots, x_{2N})$ a \textit{(topological) polygon} if $\Omega \subsetneq \C$ is a simply connected domain such that $\partial\Omega$ is locally connected and $x_1, \ldots, x_{2N}\in \partial \Omega$ are $2N$ distinct boundary points lying on $\partial\Omega$ in counterclockwise order. 
Suppose also that $\Omega$ is bounded. A sequence $( (\Omega^{\delta}; x_1^{\delta}, \ldots,  x_{2N}^{\delta}) )_{\delta > 0}$ 
of discrete polygons, defined precisely in Section~\ref{sec::Ising_crossing_proba},
is said to \textit{converge} to $(\Omega; x_1, \ldots, x_{2N})$ as $\delta \to 0$ \textit{in  the Carath\'{e}odory  sense} 
if  there exist  conformal maps $\varphi_{\delta}$  (resp.~$\varphi$)  from 
the upper-half plane $\HH = \{z\in\C \colon \im{z}>0\}$ to $\Omega^{\delta}$ (resp.~from  $\HH$ to $\Omega$) 
such that $\varphi_{\delta}\to \varphi$ uniformly on any compact subset of $\HH$, 
and $\smash{\underset{\delta   \to   0}{\lim} \, \varphi_\delta^{-1}(x_j^{\delta}) = \varphi^{-1}(x_j)}$, for  all $j$.
The points $x_1^{\delta}, \ldots,  x_{2N}^{\delta}$  indicate $\oplus / \ominus$ boundary condition changes.

\smallbreak

\noindent\textbf{Update.}
In addition, as pointed out recently by 
A.~Karrila~\cite{KarrilaConformalImage}, for the convergence near the marked points we need a slightly stronger notion, \textit{close-Carath\'{e}odory convergence}, discussed in Section~\ref{sub::Ising_notation}. 
Namely, the Carath\'{e}odory convergence allows wild behavior of the boundary approximations, 
while in order to obtain precompactness of the random interfaces as $\delta \to 0$, a slightly stronger convergence which guarantees good approximations around the marked boundary points is required.

\begin{restatable}{theorem}{IsingTHM}
\label{thm::ising_crossingproba}
Let $(\Omega; x_1, \ldots, x_{2N})$ be a bounded polygon with $N \geq 1$, and suppose that discrete polygons
$(\Omega^{\delta}; x_1^{\delta}, \ldots, x_{2N}^{\delta})$ on $\delta\Z^2$ 
converge to $(\Omega; x_1, \ldots, x_{2N})$ as $\delta\to 0$ in the close-Carath\'{e}odory sense.
Consider the critical Ising model on $\Omega^{\delta}$ with alternating boundary conditions.  
Denote by $\conn^{\delta}$ the random connectivity pattern in $\LP_N$
formed by the $N$ discrete interfaces with law $\PP^{\delta}$.
Then, we have
\begin{align} \label{eqn::ising_crossing_proba}
\lim_{\delta\to 0} \PP^{\delta}[\conn^{\delta}=\alpha] 
= \frac{\PartF_{\alpha}(\Omega;x_{1},\ldots,x_{2N})}{\PartF^{(N)}_{\Ising}(\Omega;x_{1},\ldots,x_{2N})} , \quad \text{for all }\alpha\in\LP_N,
\quad \text{where} \quad \PartF^{(N)}_{\Ising} := \sum_{\alpha\in\LP_N}\PartF_{\alpha} ,
\end{align}
and $\{\PartF_{\alpha} \colon \alpha \in \LP_N\}$
is the collection of functions uniquely determined as the solution to the PDE boundary value problem
given in Definition~\ref{defn:PPF}, 
also known as pure partition functions of multiple $\SLE_\kappa$ with $\kappa = 3$. 
\end{restatable}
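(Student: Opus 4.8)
The plan is to prove convergence of the crossing probabilities $\PP[\LA^\delta = \alpha]$ by combining three ingredients: the local convergence of the interfaces to a multiple $\SLE_3$ (established in~\cite{IzyurovIsingMultiplyConnectedDomains}), the precompactness and regularity properties of the discrete interfaces, and a careful characterization of the limiting object via the pure partition functions $\PartF_\alpha$. First I would set up the discrete framework precisely in Section~\ref{sec::Ising_crossing_proba}, and then reduce the statement to a known convergence of a single marginal: by the spatial Markov property of the Ising model, conditionally on the first interface $\gamma_1^\delta$ (say the one started from $x_1^\delta$ and ended at some marked point $x_j^\delta$), the remaining configuration is a critical Ising model in two subpolygons with alternating boundary conditions of smaller size. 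The idea is to run an induction on $N$, using at each step the convergence of the driving function of the traced interface to the Loewner chain associated with the multiple $\SLE_3$ partition function $\PartF^{(N)}_{\Ising}$, which by the cascade/Markov property of pure partition functions factorizes correctly into the two subdomains.

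The key technical steps, in order, are: (1) prove uniform (in $\delta$) moment or Russo--Seymour--Welsh-type estimates guaranteeing that the family of discrete interfaces is precompact in a suitable topology (e.g.\ the metric on curves up to reparametrization), and that subsequential limits are supported on simple, non-self-touching curves touching the boundary only at the endpoints --- here one can invoke the crossing estimates of Chelkak--Duminil-Copin--Hongler and related RSW technology for the critical Ising/FK model; (2) identify the marginal law of the first interface in any subsequential limit: using holomorphic observables (the discrete fermionic/martingale observable of Hongler--Smirnov and Chelkak--Izyurov, extended to the $2N$-point setting), show that the Loewner driving function of $\gamma_1^\delta$ converges to the solution of the SDE with drift $\partial_1 \log \PartF^{(N)}_{\Ising}$, i.e.\ the first-curve marginal of the global multiple $\SLE_3$; (3) invoke the cascade relation for the pure partition functions $\PartF_\alpha$ (their defining recursive boundary behavior as $x_j \to x_{j\pm1}$), which matches the recursive structure obtained by conditioning in the Ising model --- namely that the event $\{\LA^\delta = \alpha\}$ after revealing $\gamma_1^\delta$ connecting $x_1$ to $x_j$ decomposes into crossing events in the two complementary polygons, with the link pattern $\alpha$ restricted accordingly; (4) run the induction: the base case $N=1$ is the classical convergence of the single Ising interface to $\SLE_3$, and the inductive step combines the marginal convergence from (2) with the conditional convergence from the induction hypothesis applied to the (Carath\'eodory-converging) subpolygons, using continuity of $\PartF_\alpha$ and $\PartF^{(N)}_{\Ising}$ in the marked points and the fact that $\gamma_1^\delta$ a.s.\ avoids the other marked points in the limit.

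The main obstacle I expect is step (2) together with the uniformity needed to pass the conditioning to the limit: one must show not only that a single interface converges but that the \emph{conditional} law of the remaining interfaces, given a good approximation of $\gamma_1^\delta$, converges to the correct conditional multiple $\SLE_3$ in the two random subdomains --- this requires controlling the interfaces uniformly near the (random, fluctuating) tip and near the other marked points, and showing the subdomains converge in the Carath\'eodory sense along the coupling. A second, more structural difficulty is establishing that the limiting marginal driving function is governed precisely by $\partial_1 \log \PartF^{(N)}_{\Ising}$ rather than by some other partition function: this needs the identification of the relevant Ising fermionic correlation function with $\PartF^{(N)}_{\Ising}$ (up to an explicit conformally covariant prefactor), which amounts to checking that this correlation satisfies the second-order BPZ/null-vector PDEs, the conformal covariance with weights $h_{1,2}=1/2$, and the correct asymptotics as marked points collide --- properties that, by the uniqueness results for solutions of the multiple $\SLE_3$ boundary value problem, pin it down to $\PartF^{(N)}_{\Ising} = \sum_{\alpha} \PartF_\alpha$. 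Once these are in place, the ratio $\PartF_\alpha / \PartF^{(N)}_{\Ising}$ emerges as the probability of the connectivity $\alpha$ in the global multiple $\SLE_3$, and transience of the curves follows from the regularity established in step (1).
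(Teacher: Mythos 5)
Your overall skeleton --- induction on $N$, local convergence of the first interface from Izyurov's observable, precompactness and continuity of subsequential limits via RSW and Aizenman--Burchard/Kemppainen--Smirnov, conditioning on the first curve and matching the resulting recursion with the cascade relation for the pure partition functions --- is essentially the strategy of the paper, and you correctly flag the convergence of the \emph{conditional} law given the first interface as a delicate point (the paper handles it by the argument of \cite[Lemma~4.3]{BeffaraPeltolaWuUniqueness}). The genuine gap is on the continuum side. You claim that transience and continuity of the limit ``follow from the regularity established in step (1)'', i.e.\ from discrete crossing estimates. Those estimates give continuity of subsequential limits $\tilde\eta$ of the discrete interfaces, but they do not give continuity, up to and including the swallowing time $T$, of the abstract Loewner chain $\eta$ driven by $\partial_1\log\PartF^{(N)}_{\Ising}$; both facts are needed to remove the cutoff $r$ in the local convergence and identify $\tilde\eta$ with $\eta$ on all of $[0,T]$ (Proposition~\ref{prop::cvg_Ztotal_global}). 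Moreover, your inductive step needs precisely the law of the terminal point $\eta(T)$ and the conditional law of $\eta$ given $\{\eta(T)=x_k\}$ in order to match the discrete recursion against the cascade relation~\eqref{eqn::purepartition_CAS}; these are the content of Theorem~\ref{thm::loewner_Ztotal_continuity}, which the paper proves by a separate induction purely on the continuum side and which constitutes the bulk of the work.

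That theorem, and the evaluation at time $T$ of the bounded martingale $M_t=\PartF_\alpha/\PartF^{(N)}_{\Ising}$ used in the paper's proof, rest on quantitative properties of $\PartF^{(N)}_{\Ising}$ for which your proposal offers no substitute: the two-sided comparison $\frac{1}{\sqrt{N!}}\,\LB^{(N)}\le\PartF^{(N)}_{\Ising}\le(2N-1)!!\,\LB^{(N)}$ (Proposition~\ref{prop::Z_total_ineq}), the refined cascade asymptotics (Proposition~\ref{prop::Z_total_asy_refined}), and the fact that $\PartF_\alpha/\PartF^{(N)}_{\Ising}\to0$ along the Loewner flow when the curve closes a connection not present in $\alpha$ (Proposition~\ref{prop::mart_wrong_zero}), which is what kills the contribution of the ``wrong'' terminal events in the optional-stopping identity. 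All of these are derived from the Pfaffian formula for $\PartF^{(N)}_{\Ising}$ and the Hafnian identity of Lemma~\ref{lem::Haffnian}, which are special to $\kappa=3$. Without this analysis the passage to the limit $t\to T$ in the conditioning step cannot be justified, so the proposal as written does not close the induction.
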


In the first nontrivial case of $N=2$, the crossing formula~\eqref{eqn::ising_crossing_proba} in Theorem~\ref{thm::ising_crossingproba} was predicted by
L.-P.~Arguin and Y.~Saint-Aubin~\cite{ASA02}: 
the pure partition functions in this case are given by
\begin{align*}
\PartF_{\vcenter{\hbox{\includegraphics[scale=0.2]{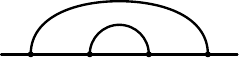}}}}(\HH;x_1,x_2,x_3,x_4)
= \; & \frac{2 \Gamma(4/3)}{\Gamma(8/3) \Gamma(5/3)} \;
\left( \frac{(x_2-x_1)(x_4-x_3)}{(x_4-x_2)(x_3-x_1)} \right)^{2/3} \, 
\frac{{}_2F_1\left(\frac{4}{3}, -\frac{1}{3}, \frac{8}{3}; \frac{(x_2-x_1)(x_4-x_3)}{(x_4-x_2)(x_3-x_1)} \right)}{(x_4-x_1)(x_3-x_2)} \\ 
\PartF_{\vcenter{\hbox{\includegraphics[scale=0.2]{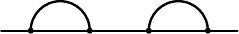}}}}(\HH;x_1,x_2,x_3,x_4)
= \; & \frac{2 \Gamma(4/3)}{\Gamma(8/3) \Gamma(5/3)} \;
\left( \frac{(x_4-x_1)(x_3-x_2)}{(x_4-x_2)(x_3-x_1)} \right)^{2/3} \, 
\frac{{}_2F_1\left(\frac{4}{3}, -\frac{1}{3}, \frac{8}{3}; \frac{(x_4-x_1)(x_3-x_2)}{(x_4-x_2)(x_3-x_1)} \right)}{(x_2-x_1)(x_4-x_3)} ,
\end{align*}
for $x_1<x_2<x_3<x_4$, where
we use the upper-half plane $\HH = \{z\in\C \colon \im{z}>0\}$ as a reference domain (cf.~\eqref{eq:conformal_image}), 
and denote 
$\vcenter{\hbox{\includegraphics[scale=0.2]{figures_arXiv2022/link-1.pdf}}}=\{\link{1}{2}, \link{3}{4}\}$ and $\vcenter{\hbox{\includegraphics[scale=0.2]{figures_arXiv2022/link-2.pdf}}}=\{\link{1}{4}, \link{2}{3}\}$ for the two possible link patterns. 
These formulas and certain other special cases appear in~\cite{KytolaMultipleSLE, IzyurovObservableFree}.
In general, explicit formulas for the probability amplitudes $\PartF_{\alpha}$ are not known, 
while the total partition function 
$\PartF_{\Ising}$ does have an explicit Pfaffian formula, already well-known in the physics literature
(as a correlation function of the free fermion field, or energy density)
and appearing, e.g., 
in~\cite{KytolaPeltolaPurePartitionSLE, IzyurovIsingMultiplyConnectedDomains, PeltolaWuGlobalMultipleSLEs} 
in the $\SLE$ context. 
The functions $\PartF_{\alpha}$ are defined as appropriately chosen solutions to the PDE system~\eqref{eq:PDEIsing} stated below,
which appears in the CFT literature as a ``null-field'', or ``BPZ'' equation~\cite{BelavinPolyakovZamolodchikovConformalSymmetry, ID_book, DMS:CFT, KytolaMultipleSLE}. 
From the $\SLE$ point of view, the functions $\PartF_{\alpha}$ are total masses of ``pure'' multiple SLE 
measures, see~\cite{KytolaPeltolaPurePartitionSLE, PeltolaWuGlobalMultipleSLEs}.

Knowing that the Ising interfaces converge (at least in a local sense, cf.~Section~\ref{subsec::ising_Ztotal_global} and~\cite{IzyurovIsingMultiplyConnectedDomains}) to $\SLE_3$ type curves,
the gist of our proof for Theorem~\ref{thm::ising_crossingproba} is a relatively standard $\SLE$ martingale argument. 
Indeed, due to the PDEs~\eqref{eq:PDEIsing}, the ratio $\PartF_{\alpha} / \PartF_{\Ising}$ gives rise to a martingale with respect to exploring one of the chordal Ising interfaces (see Equation~\eqref{eq: Ising martingale}). 
By investigating the terminal value of this martingale, an induction argument proves~\eqref{eqn::ising_crossing_proba}. 
The subtleties in the proof are related to the analysis of the martingale as well as of the $\SLE_3$ type curve as it approaches one of the marked boundary points.

\begin{corollary} \label{cor::CI}
For any $N \geq 1$, $\alpha \in \LP_N$, and bounded polygon $(\Omega; x_1, \ldots, x_{2N})$, the crossing probability 
\begin{align*}
P_\alpha(\Omega;x_{1},\ldots,x_{2N}) := \lim_{\delta\to 0}\PP^{\delta}[\conn^{\delta}=\alpha] 
\end{align*}
has the following properties:
\begin{itemize}
\item 
{\bf Conformal invariance:} 
For any conformal map $\varphi \colon \Omega \to \varphi(\Omega)$, we have  
\begin{align*}
P_\alpha(\varphi(\Omega); \varphi(x_1), \ldots, \varphi(x_{2N})) = P_\alpha(\Omega; x_1, \ldots, x_{2N}) .
\end{align*}

\item 
{\bf Asymptotics:} We have 
\begin{align*}
\lim_{x_j , x_{j+1} \to \xi} 
P_\alpha(\Omega;x_1 , \ldots , x_{2N})
= \begin{cases}
0 , \quad &
    \text{if } \link{j}{j+1} \notin \alpha , \\
P_{\hat{\alpha}}(\Omega;x_{1},\ldots,x_{j-1},x_{j+2},\ldots,x_{2N}) , &
    \text{if } \link{j}{j+1} \in \alpha ,
\end{cases} 
\end{align*}
for all $N \geq 1$, $\alpha \in \LP_N$, $j \in \{1,2, \ldots, 2N-1 \}$, and $\xi \in (x_{j-1}, x_{j+2})$, 
with initial condition $P_\emptyset = 1$, and where $\hat{\alpha} = \alpha \removeLink \link{j}{j+1} \in \LP_{N-1}$ denotes
the link pattern obtained from $\alpha$ by removing the link $\link{j}{j+1}$ and relabeling 
the remaining indices by the first $2(N-1)$ positive integers. 

\item {\bf PDEs:}
The probability amplitude $\PartF_\alpha$ in~\eqref{eqn::ising_crossing_proba}
satisfies a system of $2N$ partial differential equations, 
which in the upper-half plane $\HH=\{z\in\C \colon \im{z}>0\}$ are given by
\begin{align} \label{eq:PDEIsing}
\hspace*{-5mm}
\left[ \frac{3}{2} \partial_j + \sum_{i \neq j}\left(\frac{2}{x_{i}-x_{j}} \partial_i - 
\frac{1}{(x_{i}-x_{j})^{2}}\right) \right] 
\PartF_\alpha(\HH;x_1,\ldots,x_{2N}) =  0 , \quad \text{for all } j \in \{1,\ldots,2N\} .
\end{align}
\end{itemize}
\end{corollary}

Let us also remark that our results imply that the scaling limit of the Ising interfaces depicted in Figure~\ref{fig::Ising} 
is the ``global'' multiple $\SLE_3$ process whose law is given by
the sum of the extremal multiple $\SLE_3$ probability measures associated 
to the various possible connectivity patterns $\alpha$ of the interfaces 
(cf.~\cite{IzyurovIsingMultiplyConnectedDomains, BeffaraPeltolaWuUniqueness}). 
To avoid introducing more definitions in the present article, 
for this we only refer the reader to the literature for details, discussed,
e.g., in recent works by A.~Karrila~\cite{Karrila_multiple_SLE, KarrilaUSTBranches}.

\bigskip

Analogues of Theorem~\ref{thm::ising_crossingproba} should also hold for other critical planar statistical mechanics models 
(with other $\kappa > 0$).  
In the appendices, 
we discuss the following known examples, 
whose boundary conditions are symmetric under cyclic permutations of the marked boundary points (i.e., rotationally symmetric):
\begin{itemize}[leftmargin=*]
\item Gaussian free field, whose level lines are $\SLE_\kappa$ type curves with $\kappa = 4$
(see~\cite{SchrammSheffieldContinuumGFF});

\item chordal loop-erased random walks, which converge in the scaling limit to $\SLE_\kappa$ type curves with $\kappa=2$ 
(see~\cite{SchrammScalinglimitsLERWUST, LawlerSchrammWernerLERWUST, Zhan-scaling_limits_of_planar_LERW, Karrila_multiple_SLE} for various setups). 
\end{itemize}

Both of these examples are exactly solvable: 
explicit formulas for connection probabilities for loop-erased random walks and level lines of the Gaussian free field
(as well as for crossing probabilities in the double-dimer model), 
were found by R.~Kenyon and D.~Wilson in~\cite{Kenyon-Wilson:Boundary_partitions_in_trees_and_dimers} and further related to $\SLE$s 
in~\cite{PeltolaWuGlobalMultipleSLEs, KarrilaKytolaPeltolaCorrelationsLERWUST}.
Other lattice models seem ``less exactly solvable'' --- 
formulas for discrete crossing probabilities have not been found, and in the scaling limit only certain very special cases are known. 
In critical percolation, the case of $N=2$ is given by Cardy's formula~\cite{CardyPercolation, SmirnovPercolationConformalInvariance}
and the case of $N=3$ was solved by J.~Dub\'edat~\cite[Section~4.4]{Dubedat:Euler_integrals_for_commuting_SLEs}. 
For general $N$, crossing probabilities in critical percolation 
have not been exactly solved 
even in the scaling limit, 
but nevertheless, they do admit a characterization by multiple $\SLE_\kappa$ partition functions with $\kappa = 6$, 
(cf.~\cite{FSKZ-A_formula_for_crossing_probabilities_of_critical_systems_inside_polygons, lpwperco2022}). 
Recently in~\cite{lpwUST2021}, explicit scaling limits of crossing formulas for Peano curves tracking frontiers of uniform spanning trees were found. 
These curves are described by $\SLE_\kappa$ type curves with $\kappa=8$ (dual to loop-erased random walks).

In the random-cluster representation of the Ising model (i.e., FK-Ising model), 
each connection probability describes a natural percolation event, but the alternating boundary conditions in this model are not rotationally symmetric: 
flipping each wired boundary arc to a free boundary arc and vice versa changes the connection probabilities in the model --- this is due to the fact that such a flip is not a global symmetry of the FK-Ising model, but rather a duality. 
This in particular implies that the total partition function $\PartF_{\mathrm{FK}}$ expanded as a linear combination of the pure partition functions $\PartF_\alpha$ with $\kappa=16/3$ does not have the form of Equation~\eqref{eq::TotalPartPos},  but rather 
$\PartF_{\mathrm{FK}} = \sum_\alpha c_\alpha \PartF_\alpha$, where $c_\alpha>0$ are non-trivial coefficients. 
These coefficients can be solved explicitly: they are given by entries of the so-called meander matrix~\cite{FPW22} 
(see also~\cite{FSKZ-A_formula_for_crossing_probabilities_of_critical_systems_inside_polygons}).
For example, with four marked points of cross-ratio 
\begin{align*}
\chi = \frac{(x_4-x_1)(x_3-x_2)}{(x_3-x_1)(x_4-x_2)} ,
\end{align*}
we have\footnote{In~\cite{FPW22}, the total partition function for alternating boundary conditions is denoted $\LF_{\{\link{1}{2}, \link{3}{4}, \ldots, \link{2N-1}{2N}\}}$.}
\begin{align*}
\PartF_{\mathrm{FK}} (\HH; x_1, x_2, x_3, x_4)
= \; & \sqrt{2} \, (x_2-x_1)^{-1/8} (x_4-x_3)^{-1/8} ( \chi^{1/4} + 
\chi^{-1/4} )^{1/2} \\
= \; & 2 \, \PartF_{\vcenter{\hbox{\includegraphics[scale=0.2]{figures_arXiv2022/link-1.pdf}}}} (\HH; x_1, x_2, x_3, x_4) + \sqrt{2} \, \PartF_{\vcenter{\hbox{\includegraphics[scale=0.2]{figures_arXiv2022/link-2.pdf}}}} (\HH; x_1, x_2, x_3, x_4) .
\end{align*}
For the FK-Ising model, 
K.~Izyurov showed in~\cite{IzyurovObservableFree} that at criticality, probabilities of certain unions of connection events have conformally invariant scaling limits, expressed by quadratic irrational functions.
He later clarified and improved these results in~\cite{IzyurovMultipleFKIsing} while still being unable to find explicit expressions for general connection probabilities.
(In fact, predictions did appear in the physics literature~\cite{FSKZ-A_formula_for_crossing_probabilities_of_critical_systems_inside_polygons}.)
The general case is completely solved in the recent work~\cite{FPW22}, 
where also an analogue of Theorem~\ref{thm::ising_crossingproba} is proven for the critical FK-Ising model. 
Also more general boundary conditions are treated in~\cite{FPW22}.

\bigskip

\noindent\textbf{Outline.} 
The article is organized as follows. Section~\ref{sec::pre} is devoted to preliminaries:
we briefly discuss $\SLE$s, their basic properties, and
define the multiple $\SLE$ (pure) partition functions.
In the next Section~\ref{sec::Ising_partition_function}, we focus on the case of $\kappa = 3$ and prove crucial results concerning 
the multiple $\SLE_3$ partition function 
$\PartF_{\Ising}$.
Section~\ref{sec::Loewner_chain} consists of the analysis of the Loewner chain associated to this partition function,
leading to Theorem~\ref{thm::loewner_Ztotal_continuity}: the Loewner chain is indeed 
generated by a transient curve.  
This is one of the main difficulties in the proof, and relies on fine properties of the partition function 
$\PartF_{\Ising}$ from Section~\ref{sec::Ising_partition_function}.

In Section~\ref{sec::Ising_crossing_proba}, we first briefly discuss the Ising model and some existing results on the convergence of Ising interfaces,
and then prove the main result of this article, Theorem~\ref{thm::ising_crossingproba}.
For the proof, we need the following inputs.
First, we use the convergence of multiple Ising interfaces in a local sense from~\cite{IzyurovIsingMultiplyConnectedDomains}.
Second, we need the continuity of the scaling limit curves up to and including the swallowing time of the marked points,
which follows by standard Russo-Seymour-Welsh estimates 
(cf.~\cite{ChelkakDuminilHonglerCrossingprobaFKIsing})
and Aizenman-Burchard \& Kemppainen-Smirnov theory~\cite{AizenmanBurchardHolderRegularity, KemppainenSmirnovRandomCurves}. 
Third, a crucial technical ingredient to the proof 
is the continuity of the Loewner chain associated to the multiple $\SLE_3$ partition function 
$\PartF_{\Ising}$ up to and including the 
swallowing time of the marked points, that we establish in Theorem~\ref{thm::loewner_Ztotal_continuity}.
To finish the proof of Theorem~\ref{thm::ising_crossingproba}, we combine all of these inputs in Section~\ref{ProofSec} with detailed 
analysis of the martingale given by the ratio $\PartF_\alpha / \PartF_{\Ising}$, 
relying on properties of these partition functions from Section~\ref{sec::Ising_partition_function}.

Lastly, Appendices~\ref{app::appendix_gff} and~\ref{app::appendix_lerw} 
briefly summarize results similar to Theorem~\ref{thm::ising_crossingproba} for Gaussian free field and loop-erased random walks, respectively. 
The results from Appendix~\ref{app::appendix_gff} are needed to prove Theorem~\ref{thm::ising_crossingproba},
while Appendix~\ref{app::appendix_lerw} only serves as another example case of a rotationally symmetric model where analogous results hold (for proofs, see~\cite{KarrilaUSTBranches, KarrilaKytolaPeltolaCorrelationsLERWUST}).

\bigskip

\noindent\textbf{Acknowledgments.}
We thank K.~Izyurov for pointing out the useful identity~\eqref{eq::Hafnian} and discussing the limitations of his work.
We have also enjoyed discussions with V.~Beffara, N.~Berestycki, A.~Karrila, and K.~Kyt\"ol\"a on this topic. 
We wish to thank anonymous referees for their comments, which greatly helped to improve the presentation of this article.

H.W. is supported by the Beijing Natural Science Foundation (JQ20001).
E.P. is supported by the Academy of Finland grant number 340461 ``Conformal invariance in planar random geometry'', by the Academy of Finland Centre of Excellence Programme grant number 346315 ``Finnish centre of excellence in Randomness and STructures (FiRST)'', 
and by the Deutsche Forschungsgemeinschaft (DFG, German Research Foundation) under Germany's Excellence Strategy EXC-2047/1-390685813, as well as the DFG collaborative research centre ``The mathematics of emerging effects'' CRC-1060/211504053.
During this work, E.P. was 
affiliated at the University of Geneva 
and supported by the ERC AG COMPASP, the NCCR SwissMAP, and the Swiss~NSF.
A part of this work was completed during E.P.'s visit at the Institut Mittag-Leffler,
and during the authors' visit to the Oberwolfach Research Institute for Mathematics (MFO), 
which we cordially thank for hospitality.
The first version of this paper was finished during and after the workshop 
``Random Conformal Geometry and Related Fields'' at Seoul, and we kindly thank the organizers for the inspiring meeting.

\section{Partition Functions of Multiple SLEs}
\label{sec::pre}
In this section, we briefly discuss Schramm-Loewner evolutions ($\SLE$) and their partition functions.
For more background, the reader may consult~\cite{SchrammScalinglimitsLERWUST, LawlerConformallyInvariantProcesses, RohdeSchrammSLEBasicProperty, LawlerPartitionFunctionsSLE}, for instance.

\smallbreak

Recall that by a \textit{polygon} $(\Omega; x_1, \ldots, x_{2N})$  we refer to a simply connected domain $\Omega \subsetneq \C$ such that $\partial\Omega$ is locally connected and $x_1, \ldots, x_{2N}\in \partial \Omega$ are distinct boundary points in counterclockwise order along $\partial\Omega$. 
When $N=1$, we also call $(\Omega; x_1, x_2)$ a \textit{Dobrushin domain}. 
We say that $U \subset \Omega$ is a \textit{sub-polygon of $\Omega$} if $U$ is simply connected 
and $U$ and $\Omega$ agree in neighborhoods of $x_1, \ldots, x_{2N}$,
and in the case of $N=1$, we also call $(U; x_1, x_2)$ a \textit{Dobrushin subdomain}.
Finally, we say that a polygon $(\Omega; x_1, \ldots, x_{2N})$ is \textit{nice} if 
its boundary $\partial\Omega$ is $C^{1+\eps}$-regular for some $\eps > 0$ in neighborhoods of $x_1, \ldots, x_{2N}$. 
An example of a nice polygon is the upper half-plane
$\HH = \{z\in\C \colon \im{z}>0\}$ with any boundary points $x_1 < \cdots < x_{2N}$. 

\smallbreak

Planar curves are continuous mappings from $[0,1]$ to $\C$ modulo reparameterization. 
For a simply connected domain $\Omega \subsetneq \C$, we will consider curves in $\overline{\Omega}$.
For definiteness, we map $\Omega$ onto the unit disc 
$\U = \{ z \in \C \colon |z| < 1 \}$: for this we shall fix any conformal map $\Phi$ from $\Omega$ onto $\U$. 
Then, we endow the curves with the metric
\begin{align} \label{def::curves_metric}
\dist(\gamma_1, \gamma_2) 
:= \inf_{\psi_1, \psi_2} \sup_{t\in [0,1]} |\Phi(\gamma_1(\psi_1(t)))-\Phi(\gamma_2(\psi_2(t)))| ,
\end{align}
where the infimum is taken over all increasing homeomorphisms $\psi_1, \psi_2 \colon [0,1]\to[0,1]$.
The space of continuous curves on $\C$ modulo reparameterizations then becomes a complete separable metric space.
While the metric~\eqref{def::curves_metric} depends on the choice of the conformal map $\Phi$, the induced topology does not depend on this choice. This topology is important in Section~\ref{subsec::ising_Ztotal_global} for convergence of discrete interfaces.

\subsection{Schramm-Loewner Evolutions}
\label{subsec::pre_SLE}

For $\kappa \geq 0$, the (chordal) \textit{Schramm-Loewner evolution}, $\SLE_{\kappa}$, can be thought of as
a family of probability measures $\mathsf{P}(\Omega; x,y)$ on curves, indexed by Dobrushin domains $(\Omega; x, y)$.
Each measure $\mathsf{P}(\Omega; x,y)$ is supported on continuous unparameterized curves in $\overline{\Omega}$ from $x$ to $y$. 
It will be convenient to choose a parameterized representative 
$\gamma \colon [0,\infty) \to \C$ for each such curve, requiring that 
$\gamma(0) = x$ and $\gamma(t) \to y$ as $t\to\infty$. 
By re-scaling time, we regard $\gamma$ as a representative of an element in the above curve space.

Explicitly, the $\SLE_{\kappa}$ curves can be generated using random Loewner evolutions.
Consider a family of maps $(g_{t}, t\ge 0)$ obtained by solving the Loewner differential equation: for each $z \in \HH$,
\begin{align*}
\partial_{t}{g}_{t}(z)=\frac{2}{g_{t}(z)-W_{t}} 
\qquad \text{and} \qquad 
g_{0}(z)=z ,
\end{align*}
where $(W_t, t\ge 0)$ is a real-valued continuous \textit{driving function}. 
By general ODE theory, for each $z \in \HH$ this initial value problem has a unique solution $(g_{t}(z), 0 \leq t < \swal_z)$ with maximal lifetime
\begin{align*} 
\swal_z := \sup \big\{ t > 0 \colon \inf_{ s \in[0,t]} |g_s(z)-W_s| > 0 \big\} ,
\end{align*} 
the \textit{swallowing time} of $z$. 
Set $K_{t}:=\overline{\{z\in\HH \colon \swal_z \le t\}}$.
Then, $g_{t} \colon \HH\setminus K_t \to \HH$ is the unique conformal map (biholomorphic function) normalized in such a way that $|g_t(z)-z| \to 0$ as $z\to\infty$.
We call the growing sets $(K_{t}, t\ge 0)$ associated with these maps  a \textit{Loewner chain}.
Note that $g_{t}$ is also well-defined on $\R \setminus K_t$, and thus, 
the swallowing time $\swal_z$ can be defined for each $z \in \overline{\HH}$.

Now, the chordal $\SLE_{\kappa}$ in $(\HH;0,\infty)$ is first defined as the random Loewner chain $(K_{t}, t\ge 0)$ driven by 
$W_t=\sqrt{\kappa}B_t$, where $(B_t, t\ge 0)$ is the standard Brownian motion.
This Loewner chain~\cite{RohdeSchrammSLEBasicProperty} is almost surely generated by a continuous transient curve 
$\eta \colon [0,\infty) \to \overline{\HH}$ in the sense that 
for each $t$, the domain $\HH \setminus K_t$ is the unbounded component of $\HH \setminus \eta[0,t]$,
and we have $\eta(0) = 0$ and $|\eta(t)| \to \infty$ as $t\to\infty$. 
We regard $\eta$ as a random curve in $(\HH;0,\infty)$ 
and denote its probability measure by $\mathsf{P}(\HH; 0, \infty)$.
It is known~\cite{RohdeSchrammSLEBasicProperty} that when $\kappa \in (0,4]$ (for instance, $\kappa=3$ for the Ising interfaces), almost surely, 
the $\SLE_{\kappa}$ curve $\eta$ is simple and $K_t = \eta[0,t]$ for all $t$. 
Thus, we may also refer to $(\eta(t), t\ge 0)$ itself as a Loewner chain.
(When $\kappa>4$, $\SLE_{\kappa}$ curves are not simple and $K_t$ also includes regions swallowed by the curve.)

\smallbreak

For any Dobrushin domain $(\Omega; x, y)$, we extend the definition of the $\SLE_{\kappa}$ via \textit{conformal invariance}: 
given any conformal map $\varphi \colon \HH \to \Omega$ such that $\varphi(0)=x$ and $\varphi(\infty)=y$, 
the law $\mathsf{P}(\Omega; x,y)$ of the $\SLE_{\kappa}$ curve $\tilde{\eta}$ in $\Omega$ from $x$ to $y$ is the pushforward by $\varphi$ of the law $\mathsf{P}(\HH; 0,\infty)$ of $\eta = \varphi^{-1}(\tilde{\eta})$.
Because the law $\mathsf{P}(\HH; 0,\infty)$ is scale-invariant (by Brownian scaling), 
the law $\mathsf{P}(\Omega; x, y)$ is independent of the choice of $\varphi$.

The motivation for O.~Schramm to introduce $\SLE$s in his celebrated work~\cite{SchrammScalinglimitsLERWUST} was indeed their conformal invariance --- crucial for the description of critical interfaces in many statistical mechanics models. 
The $\SLE_\kappa$ curves have another important feature,  \textit{domain Markov property}: 
if $\tau$ is a stopping time for the growing $\SLE_\kappa$ curve $\eta \sim \mathsf{P}(\Omega;x,y)$, then, 
given an initial segment $\eta[0,\tau]$,
the conditional law of the remaining piece $\eta[\tau,\infty)$ is $\mathsf{P}(\Omega_\tau; \eta(\tau),y)$, where
$\Omega_\tau$ is the unbounded component of 
the remaining domain $\Omega \setminus \eta[0,\tau]$ containing the target point $y$ on its boundary.
The domain Markov property lies at the heart of many martingale arguments applicable to problems involving $\SLE$ curves, such as Theorem~\ref{thm::ising_crossingproba}.

\subsection{Partition Functions of Multiple SLEs}
\label{subsec::pre_partitionfucntions}

Next, we discuss the crossing probability amplitudes $\PartF_\alpha$ in Theorem~\ref{thm::ising_crossingproba}.
We frequently use the following parameters (mostly focusing on the case of $\kappa = 3$, with $h = 1/2$):
\begin{align*} 
\kappa\in (0,6] \qquad \qquad \text{and} \qquad \qquad h = \frac{6-\kappa}{2\kappa}.
\end{align*}

The functions $\PartF_\alpha$ are examples of \textit{multiple $\SLE_\kappa$ partition functions}. 
For $\HH$, these are defined as positive smooth functions $\PartF$ of $2N$ real variables $x_1 < \cdots < x_{2N}$ 
satisfying the following two properties: 
\begin{itemize}[leftmargin=3.5em]
\item[$\mathrm{(PDE)}$] \textit{Partial differential equations of second order}: 
We have
\begin{align}\label{eq: multiple SLE PDEs}
\left[ \frac{\kappa}{2} \partial_j + \sum_{i \neq j}\left(\frac{2}{x_{i}-x_{j}} \partial_i - 
\frac{2h}{(x_{i}-x_{j})^{2}}\right) \right] 
\PartF(x_1,\ldots,x_{2N}) =  0 , \quad \text{for all } j \in \{1,\ldots,2N\} .
\end{align}
\item[$\mathrm{(COV)}$] \textit{M\"obius covariance}: 
For all M\"obius maps $\varphi \colon \HH \to \HH$ 
such that $\varphi(x_{1}) < \cdots < \varphi(x_{2N})$, we have
\begin{align}\label{eq: multiple SLE Mobius covariance}
\PartF(x_{1},\ldots,x_{2N}) = 
\prod_{i=1}^{2N} \varphi'(x_{i})^{h} 
\times \PartF(\varphi(x_{1}),\ldots,\varphi(x_{2N})) .
\end{align}
\end{itemize}

Multiple $\SLE_\kappa$ partition functions have been  
studied in many works, e.g.,~\cite{KytolaMultipleSLE, Dubedat:Euler_integrals_for_commuting_SLEs, DubedatCommutationSLE,
KozdronLawlerMultipleSLEs, FloresKlebanPDE4, KytolaPeltolaPurePartitionSLE, PeltolaWuGlobalMultipleSLEs, WuHyperSLE}.  
They give rise to $\SLE_\kappa$ variants known as multiple $\SLE$s
(or $N$-$\SLE_\kappa$ processes),
which are also termed ``commuting $\SLE$s'' due to J.~Dub\'edat~\cite{DubedatCommutationSLE}.
For the purposes of the present article, we shall only need the following description of the marginal law of one curve.

If $\PartF$ is such a partition function, then for each $j \in \{1,\ldots, 2N\}$, a \textit{Loewner chain associated to $\PartF$ 
with launching points $(x_1, \ldots, x_{2N})$ and 
starting from $x_j$}
is defined as the Loewner chain growing from $x_j$ with spectator points $(x_1, \ldots, x_{j-1}, x_{j+1}, \ldots, x_{2N})$ 
whose driving function $W_t$ satisfies the SDEs 
\begin{align} \label{eqn::loewnerchain_partition}
\begin{split}
\ud W_t 
= \; & \sqrt{\kappa} \, \ud B_t 
+\kappa \, \partial_{j}\log\PartF \big(V_t^1, \ldots, V_t^{j-1}, W_t, V_t^{j+1},\ldots, V_t^{2N}\big) \ud t, 
\qquad W_0 = x_j , \\
\ud V_t^i = \; & \frac{2 \,  \ud t}{V_t^i-W_t}, \qquad 
\text{and} \qquad V_0^i = x_i, \quad \text{for } i\neq j. 
\end{split}
\end{align}
This process is well-defined up to the first time $\swal_{x_{j-1}} \wedge \swal_{x_{j+1}}$ when either $x_{j-1}$ or $x_{j+1}$ is swallowed. 
Note that $V_t^i$ is the time-evolution of the spectator point $x_i$, which coincides with $g_t(x_i)$ for $t$ smaller than the swallowing time of $x_i$.
From the PDEs~\eqref{eq: multiple SLE PDEs} and It\^o calculus, we know that 
the following process is a local martingale with respect to the growth of the $\SLE_{\kappa}$ curve in 
$(\HH; x_j,\infty)$:  
\begin{align*} 
M_t(\PartF):=\prod_{i\neq j}g_t'(x_i)^h\times\PartF(g_t(x_1), \ldots, g_t(x_{j-1}), W_t, g_t(x_{j+1}), \ldots, g_t(x_{2N})) .
\end{align*}
Moreover, for any stopping time $\tau$ for which $M_{t\wedge\tau}(\PartF)$ is a martingale, the law of the Loewner chain associated to $\PartF$ starting from $x_j$ is the same as $\mathsf{P}(\HH; x_j,\infty)$ 
weighted by $M_{t\wedge\tau}(\PartF)/M_0(\PartF)$ 
by Girsanov theorem, which introduces the drift in~\eqref{eqn::loewnerchain_partition}. 
We say that the measure has been \textit{tilted} by the local martingale $M(\PartF)$.  
See~\cite{KytolaMultipleSLE, DubedatCommutationSLE, KozdronLawlerMultipleSLEs, LawlerPartitionFunctionsSLE} for more details.

\smallbreak

Amongst solutions to~(\ref{eq: multiple SLE PDEs}--\ref{eq: multiple SLE Mobius covariance}), 
the \textit{pure partition functions} 
$\PartF_{\alpha}$ will be singled out by the following asymptotics property
(that serves as a boundary condition for the PDE system~\eqref{eq: multiple SLE PDEs}):
\begin{itemize}[leftmargin=3.5em]
\item[$\mathrm{(ASY)}$] \textit{Asymptotics}: 
Denoting by $\emptyset$ the link pattern in $\LP_0$, we have $\PartF_\emptyset = 1$, 
and for all $N \geq 1$, $\alpha \in \LP_N$, $j \in \{1, 2, \ldots, 2N-1 \}$, and $\xi \in (x_{j-1}, x_{j+2})$ (with the convention that $x_0=-\infty$ and $x_{2N}=\infty$), we have
\begin{align}\label{eq: multiple SLE asymptotics}
\lim_{x_j , x_{j+1} \to \xi} 
\frac{\PartF_\alpha(x_1 , \ldots , x_{2N})}{(x_{j+1} - x_j)^{-2h}} 
=\begin{cases}
0, \quad &
    \text{if } \link{j}{j+1} \notin \alpha , \\
\PartF_{\hat{\alpha}}(x_{1},\ldots,x_{j-1},x_{j+2},\ldots,x_{2N}), &
    \text{if } \link{j}{j+1} \in \alpha ,
\end{cases} 
\end{align}
where $\hat{\alpha} = \alpha \removeLink \link{j}{j+1} \in \LP_{N-1}$ denotes
the link pattern obtained from $\alpha$ by removing the link $\link{j}{j+1}$ and relabeling 
the remaining indices by the first $2(N-1)$ positive integers. 
\end{itemize}

The term ``pure'' partition function is motivated by the multiple $\SLE_\kappa$ \textit{pure geometries} introduced in the physics literature~\cite{KytolaMultipleSLE} by M.~Bauer, D.~Bernard \&~K.~Kyt{\"o}l{\"a}
(see also~\cite{KytolaPeltolaPurePartitionSLE}), predicting (correctly) that Loewner chains associated to the partition functions $\PartF_\alpha$ correspond to critical interfaces joining together according to the given topological connectivity $\alpha \in \LP_N$ (cf.~Figure~\ref{fig::Ising}). 
In fact, the collection 
$\{ \PartF_\alpha \colon \alpha \in \LP_N\}$ forms 
a basis for a 
space of multiple $\SLE_\kappa$ partition functions of dimension $\frac{1}{N+1} \binom{2N}{N}$.
Each partition function $\PartF$ determines, up to normalization
that vanishes in the logarithmic derivative in~\eqref{eqn::loewnerchain_partition}, 
a (local) multiple $\SLE_\kappa$ process. 
Convex combinations of partition functions give rise to convex combinations of probability measures of such multiple $\SLE_\kappa$ processes --- see~\cite[Appendix~A]{KytolaPeltolaPurePartitionSLE} and~\cite[Section~4.2]{PeltolaWuGlobalMultipleSLEs} for details, and~\cite{Karrila_multiple_SLE} for  a general discussion. 
The pure partition functions $\PartF_\alpha$ correspond to extremal points of a convex set of multiple $\SLE_\kappa$ probability measures.

\begin{definition} \label{defn:PPF}
Fix $\kappa\in (0,6]$. 
The \textit{pure partition functions} of multiple $\SLE_{\kappa}$ for $\HH$
comprise the recursive collection $\{\PartF_{\alpha} \colon \alpha \in \smash{\underset{N\geq 0}{\bigsqcup}} \LP_N\}$ 
of functions of boundary points,  
\begin{align*} 
\PartF_{\alpha} \colon \{ (x_{1},\ldots,x_{2N}) \in \R^{2N} \colon x_{1} < \cdots < x_{2N} \}  \to \Rpos ,
\end{align*}
uniquely determined by 
the properties $\mathrm{(PDE)}$~\eqref{eq: multiple SLE PDEs}, 
$\mathrm{(COV)}$~\eqref{eq: multiple SLE Mobius covariance},
$\mathrm{(ASY)}$~\eqref{eq: multiple SLE asymptotics},
and the following power-law bound: there exist constants $C>0$ and $p>0$ such that for all $N \geq 1$ and $\alpha \in \LP_N$, we have
\begin{align}\label{eqn::powerlawbound}
|\PartF_\alpha(x_1, \ldots, x_{2N})| \le C \prod_{1 \leq i<j \leq 2N}(x_j-x_i)^{\mu_{ij}(p)}, 
\quad \text{where } 
\mu_{ij}(p) :=
\begin{cases}
p, &\text{if } |x_j-x_i| > 1, \\
-p, &\text{if } |x_j-x_i| < 1.
\end{cases}
\end{align}
\end{definition}

The existence and uniqueness of the pure partition functions in this form was shown in~\cite[Theorem~1.1]{PeltolaWuGlobalMultipleSLEs} for $\kappa \leq 4$ and 
in~\cite[Theorem~1.1]{WuHyperSLE} for $\kappa \leq 6$, based on and supplementing various earlier results cited above. 
For the case of $\kappa=3$, which is the central concern of the present article, the Coulomb gas approach of~\cite{KytolaMultipleSLE, Dubedat:Euler_integrals_for_commuting_SLEs, DubedatCommutationSLE, FloresKlebanPDE4, KytolaPeltolaPurePartitionSLE} has problems (because $\kappa$ is rational), while the configurational probabilistic approach of~\cite{KozdronLawlerMultipleSLEs, LawlerPartitionFunctionsSLE, PeltolaWuGlobalMultipleSLEs, WuHyperSLE} 
gives an explicit construction in terms of total masses of multiple $\SLE_\kappa$ measures.
We will not need the explicit construction in this work and thus refer the reader to the literature for more details.

\smallbreak

More generally, the multiple $\SLE_{\kappa}$ partition functions are defined for any 
nice polygon $(\Omega; x_1, \ldots, x_{2N})$ via their conformal images: 
if $\varphi \colon \Omega \to \HH$ is any conformal map such that $\varphi(x_1)<\cdots<\varphi(x_{2N})$, we set
\begin{align} \label{eq:conformal_image}
\PartF(\Omega; x_1, \ldots, x_{2N}) := 
\prod_{i=1}^{2N} |\varphi'(x_i)|^h \times \PartF(\varphi(x_1), \ldots, \varphi(x_{2N})) .
\end{align}
By the M\"obius covariance~\eqref{eq: multiple SLE Mobius covariance} property, this definition is independent of the choice of $\varphi$. 
When $N=1$, there exists only one multiple $\SLE_\kappa$ pure partition function, namely
\begin{align*}
\PartF_{\link{1}{2}}(\Omega; x_1, x_2)=H_{\Omega}(x_1, x_2)^{h}, 
\end{align*}
where $H_{\Omega}(x,y)$ is the \textit{boundary Poisson kernel}, that is, the unique function determined by the properties
\begin{align*}
H_{\HH}(x,y) = |y-x|^{-2}  \qquad \text{and} \qquad H_{\Omega}(x,y) = |\varphi'(x)| |\varphi'(y)| \, H_{\varphi(\Omega)}(\varphi(x),\varphi(y)),
\end{align*}
for any conformal map $\varphi \colon \Omega\to\varphi(\Omega)$.
Let us also note that the boundary Poisson kernel 
has the following useful monotonicity property: for any Dobrushin subdomain $(U; x, y)$ of $(\Omega; x, y)$, we have
\begin{align}\label{eqn::poissonkernel_mono}
H_U(x,y)\le H_{\Omega}(x,y) .
\end{align}

Lastly, we remark that \textit{ratios} of partition functions can be defined for general polygons (i.e., when considering ratios, the niceness assumption can be dropped). 
Namely, suppose that $\PartF_1$ and $\PartF_2$ are two partition functions, and denote their ratio as 
$P(x_1, \ldots, x_{2N}) := \PartF_1(x_1, \ldots, x_{2N})/\PartF_2(x_1, \ldots, x_{2N})$. 
Since both $\PartF_1$ and $\PartF_2$ satisfy~\eqref{eq: multiple SLE Mobius covariance}, the ratio $P$ is M\"{o}bius-invariant (in $\HH$).
For any polygon $(\Omega; x_1, \ldots, x_{2N})$, we set 
\begin{align*}
P(\Omega; x_1, \ldots, x_{2N}) := P(\varphi(x_1), \ldots, \varphi(x_{2N})) ,
\end{align*}
where $\varphi \colon \Omega\to \HH$ is any conformal map such that $\varphi(x_1)<\cdots<\varphi(x_{2N})$. 
The above definition of $P$ is independent of the choice of $\varphi$ thanks to~\eqref{eq: multiple SLE Mobius covariance}. 
We view $P(\Omega; x_1, \ldots, x_{2N})$ as the ratio of $\PartF_1(\Omega; x_1, \ldots, x_{2N})$ and $\PartF_2(\Omega; x_1, \ldots, x_{2N})$, although these two latter functions may be not well-defined, writing 
\begin{align*}
P(\Omega; x_1, \ldots, x_{2N}) = \frac{\PartF_1(\Omega; x_1, \ldots, x_{2N})}{\PartF_2(\Omega; x_1, \ldots, x_{2N})} .
\end{align*}

\subsection{Useful Properties and Bounds}
\label{sub::BoundFunctions}

To end this section, we collect properties of the multiple $\SLE_\kappa$ partition functions that will be needed for a priori estimates in Section~\ref{sec::Ising_partition_function}.
First, we set $\LB_\emptyset := 1$ and define, for all $N \ge 1$ and  $x_1 < \cdots < x_{2N}$, the following \textit{bound functions}: 
\begin{align} \label{eqn::b_def}
\begin{split}
\LB_{\alpha}(x_1, \ldots, x_{2N})  := \; & \prod_{\link{a}{b} \in \alpha} |x_{b}-x_{a}|^{-1} , \quad \alpha \in \LP_N , \\
\LB^{(N)}(x_1, \ldots, x_{2N}) := \; & \prod_{1\leq i<j \leq 2N}(x_j-x_i)^{(-1)^{j-i}} .
\end{split}
\end{align}
More generally, for each nice polygon $(\Omega; x_1, \ldots, x_{2N})$, we set
\begin{align*}
\LB_{\alpha}(\Omega; x_1, \ldots, x_{2N}) := &\; \prod_{\link{a}{b} \in \alpha} H_{\Omega}(x_{a}, x_{b})^{1/2} \; 
=  \; \prod_{i=1}^{2N} |\varphi'(x_i)|^{1/2} \times \LB_{\alpha}(\varphi(x_1),\ldots, \varphi(x_{2N})), \\
\LB^{(N)}(\Omega; x_1, \ldots, x_{2N}) := &\; \prod_{i=1}^{2N} |\varphi'(x_i)|^{1/2} \times \LB^{(N)}(\varphi(x_1),\ldots, \varphi(x_{2N})),
\end{align*}
where $\varphi$ is again any conformal map from $\Omega$ onto $\HH$ such that $\varphi(x_1)<\cdots<\varphi(x_{2N})$.
Note that, in the above definition, $\LB_{\alpha}(\Omega; x_1, \ldots, x_{2N})$ and $\LB^{(N)}(\Omega; x_1, \ldots, x_{2N})$ do not depend on the choice of the conformal map $\varphi$ because $\LB_{\alpha}$ and $\LB^{(N)}$ in~\eqref{eqn::b_def} satisfy the M\"{o}bius covariance~\eqref{eq: multiple SLE Mobius covariance} with $h=1/2$. 

In applications, the following strong bound for the pure partition functions
is very important: for $\kappa\in (0,6]$ and for any nice polygon $(\Omega; x_1, \ldots, x_{2N})$, we have
\begin{align} \label{eqn::optimal_bound_polygon}
0 < \PartF_{\alpha}(\Omega; x_1, \ldots, x_{2N}) \leq \LB_{\alpha}(\Omega; x_1, \ldots, x_{2N})^{2h}. 
\end{align}
This bound was proved in~\cite[Theorem~1.1]{PeltolaWuGlobalMultipleSLEs} for $\kappa \leq 4$
and~\cite[Theorem~1.6]{WuHyperSLE} for $\kappa \leq 6$,
and it was used in~\cite{PeltolaWuGlobalMultipleSLEs} to prove that the pure partition functions with $\kappa=4$
give formulas for connection probabilities of the level lines of the Gaussian free field with alternating boundary data
(see also Appendix~\ref{app::appendix_gff} of the present article).
Properties of the bound functions $\LB_{\alpha}$ were crucial in that proof, and they will also play an essential role in the present article, focusing on the case where $\kappa=3$. 
Note in particular that the bound~\eqref{eqn::optimal_bound_polygon} implies the power law bound~\eqref{eqn::powerlawbound}. 

\smallbreak

Another useful property of the collection 
$\{\PartF_{\alpha} \colon \alpha \in \smash{\underset{N\geq 0}{\bigsqcup}} \LP_N\}$ 
is the following refinement of $\mathrm{(ASY)}$~\eqref{eq: multiple SLE asymptotics}:
\begin{align} \label{eqn::partf_alpha_asy_refined}
\lim_{\substack{\tilde{x}_j , \tilde{x}_{j+1} \to \xi, \\ \tilde{x}_i\to x_i \text{ for } i \neq j, j+1}} 
\frac{\PartF_\alpha(\tilde{x}_1 , \ldots , \tilde{x}_{2N})}{(\tilde{x}_{j+1} - \tilde{x}_j)^{-2h}} 
=\begin{cases}
0 , \quad &
    \text{if } \link{j}{j+1} \notin \alpha , \\
\PartF_{\hat{\alpha}}(x_{1},\ldots,x_{j-1},x_{j+2},\ldots,x_{2N}) , &
    \text{if } \link{j}{j+1} \in \alpha ,
\end{cases}
\end{align}
for all $N \geq 1$, $\alpha \in \LP_N$, $j \in \{1, 2, \ldots, 2N-1 \}$, and $\xi \in (x_{j-1}, x_{j+2})$. 
The property~\eqref{eqn::partf_alpha_asy_refined} 
is proved in~\cite[Lemma~4.3]{PeltolaWuGlobalMultipleSLEs} for $\kappa \leq 4$ and \cite[Corollary~6.9]{WuHyperSLE} for $\kappa \leq 6$.

\smallbreak

Lastly, we define the \textit{symmetric (total) partition function} as
\begin{align}\label{eqn::zsymmetric_def}
\PartF^{(N)}:=\sum_{\alpha\in\LP_N}\PartF_{\alpha}.
\end{align}
By linearity, it satisfies the PDE system~\eqref{eq: multiple SLE PDEs} and M\"obius covariance~\eqref{eq: multiple SLE Mobius covariance}.
Also,~\eqref{eqn::partf_alpha_asy_refined} implies the following 
asymptotics property for the symmetric partition function $\smash{\PartF^{(N)}}$ (see also~\cite[Section~4]{KytolaPeltolaPurePartitionSLE}):
\begin{align}\label{eqn::partf_total_asy_refined}
\lim_{\substack{\tilde{x}_j , \tilde{x}_{j+1} \to \xi, \\ \tilde{x}_i\to x_i \text{ for } i \neq j, j+1}} 
\frac{\PartF^{(N)}(\tilde{x}_1 , \ldots , \tilde{x}_{2N})}{(\tilde{x}_{j+1} - \tilde{x}_j)^{-2h}} 
=
\PartF^{(N-1)}(x_{1},\ldots,x_{j-1},x_{j+2},\ldots,x_{2N}).
\end{align}

\section{Analyzing the Ising Partition Function}
\label{sec::Ising_partition_function}
In this section, we consider the symmetric partition function~\eqref{eqn::zsymmetric_def} for $\kappa=3$,
denoted by $\PartF_{\Ising} = \PartF^{(N)}_{\Ising}$, 
which appears in the denominator of the formula~\eqref{eqn::ising_crossing_proba} for the Ising crossing probabilities.
We prove key results needed in the proof of the main Theorem~\ref{thm::ising_crossingproba},
concerning the following properties of $\PartF_{\Ising}$:

\begin{proposition}\label{prop::Z_total_asy_refined}
Fix $N \geq 1$. For all $\ell \in \{ 1,2, \ldots, N \}$
and $\xi < x_{2\ell+1} < x_{2\ell+2} < \cdots < x_{2N}$, we have 
\begin{align} \label{eqn::cascade_asy}
\lim_{\substack{\tilde{x}_1, \ldots, \tilde{x}_{2\ell} \to \xi, \\ \tilde{x}_{i} \to x_i \text{ for } 2\ell < i \le  2N}} 
\frac{\PartF^{(N)}_{\Ising}(\tilde{x}_1, \ldots, \tilde{x}_{2N})}{\PartF^{(\ell)}_{\Ising}(\tilde{x}_1, \ldots, \tilde{x}_{2\ell})} 
= \PartF^{(N-\ell)}_{\Ising}(x_{2\ell+1},\ldots, x_{2N}).
\end{align}
\end{proposition}

\begin{proposition} \label{prop::Z_total_ineq}
Fix $N \geq 1$. We have
\begin{align} \label{eqn::Z_total_ineq} 
\frac{1}{\sqrt{N!}}  \; 
\LB^{(N)}(x_1,\ldots, x_{2N}) \le \PartF_{\Ising}^{(N)}(x_1,\ldots, x_{2N}) \le (2N-1)!! \; \LB^{(N)}(x_1,\ldots, x_{2N}) .
\end{align}
\end{proposition}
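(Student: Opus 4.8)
The plan is to combine the explicit Pfaffian formula for the symmetric partition function with the strong bound~\eqref{eqn::optimal_bound_polygon} and the asymptotics in Proposition~\ref{prop::Z_total_asy_refined}. Since $\kappa=3$ gives $h=\tfrac12$, so $2h=1$, recall the well-known Pfaffian formula for the $\kappa=3$ symmetric partition function:
\[
\PartF^{(N)}_{\Ising}(x_1,\ldots,x_{2N}) \;=\; \mathrm{Pf}\big(\big(\tfrac{1}{x_j-x_i}\big)_{i,j=1}^{2N}\big),
\]
which for $N=1$ is $(x_2-x_1)^{-1}=\LB^{(1)}(x_1,x_2)=\PartF_{\{1,2\}}(x_1,x_2)$, as in Section~\ref{subsec::pre_partitionfucntions}. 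Expanding the Pfaffian over the $(2N-1)!!$ perfect matchings $\mu$ of $\{1,\ldots,2N\}$, and writing $\LB_\mu:=\prod_{\link ab\in\mu}|x_b-x_a|^{-1}$ (so $\LB_\mu=\LB_\alpha$ when $\mu=\alpha\in\LP_N$), we have $\PartF^{(N)}_{\Ising}=\sum_\mu\mathrm{sgn}(\mu)\,\LB_\mu$ with $\mathrm{sgn}(\mu)=1$ for non-crossing $\mu$; cofactor expansion along the first row also gives the recursion $\PartF^{(N)}_{\Ising}(x_1,\ldots,x_{2N})=\sum_{j=2}^{2N}(-1)^j(x_j-x_1)^{-1}\PartF^{(N-1)}_{\Ising}(x_2,\ldots,x_{j-1},x_{j+1},\ldots,x_{2N})$.

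For the upper bound I would avoid the signs. By~\eqref{eqn::optimal_bound_polygon}, $\PartF^{(N)}_{\Ising}=\sum_{\alpha\in\LP_N}\PartF_\alpha\le\sum_{\alpha\in\LP_N}\LB_\alpha\le\Catalan_N\max_{\alpha\in\LP_N}\LB_\alpha$, so since $\Catalan_N\le(2N-1)!!$ it suffices to prove the elementary inequality $\LB_\alpha\le\LB^{(N)}$ for every $\alpha\in\LP_N$. I would show this by induction on $N$: pick an innermost link $\link j{j+1}\in\alpha$, write $\LB_\alpha=(x_{j+1}-x_j)^{-1}\LB_{\hat\alpha}(x_1,\ldots,x_{j-1},x_{j+2},\ldots,x_{2N})$ with $\hat\alpha=\alpha\removeLink\link j{j+1}\in\LP_{N-1}$, use $\LB_{\hat\alpha}\le\LB^{(N-1)}$, and reduce to
\[
\LB^{(N-1)}(x_1,\ldots,x_{j-1},x_{j+2},\ldots,x_{2N}) \;\le\; (x_{j+1}-x_j)\,\LB^{(N)}(x_1,\ldots,x_{2N}) ;
\]
the quotient here is M\"obius invariant, tends to $1$ as $x_{j+1}\to x_j$ (a direct computation from~\eqref{eqn::b_def}), and an elementary estimate in the gap coordinates $u_k=x_{k+1}-x_k$ shows it stays $\ge 1$. (Alternatively, $\PartF^{(N)}_{\Ising}\le\sum_\mu\LB_\mu\le(2N-1)!!\,\LB^{(N)}$ by the triangle inequality and the same estimate applied to all matchings.)

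The lower bound is the heart of the matter, and here the signs must be used rather than discarded. I would study the M\"obius-invariant, positive, smooth ratio $R_N:=\PartF^{(N)}_{\Ising}/\LB^{(N)}$ on the space of configurations $x_1<\cdots<x_{2N}$ modulo M\"obius maps, with $R_1\equiv 1$. Iterating Proposition~\ref{prop::Z_total_asy_refined}, together with the corresponding factorization of $\LB^{(N)}$ under clustering (immediate from its definition, since $\sum_{i=1}^{2n}(-1)^{j-i}=0$ for each $j$), shows that as the points degenerate into clusters of even sizes $2k_1,\ldots,2k_m$ with $\sum_i k_i=N$, the ratio extends continuously and factorizes as $\prod_i R_{k_i}$; so by induction $R_N\ge\prod_i(k_i!)^{-1/2}\ge(N!)^{-1/2}$ on these degenerate strata, the last step because $N!/\prod_i k_i!$ is a multinomial coefficient. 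The remaining, genuinely hard step is to propagate this into the interior. One route is a minimum principle for the second-order system~\eqref{eq: multiple SLE PDEs} satisfied by $\PartF^{(N)}_{\Ising}/\LB^{(N)}$; another is to run the induction directly off the minor recursion, grouping its even-$j$ and odd-$j$ terms so that the negative contributions are absorbed by cancellation, at the cost of a factor $N^{-1/2}$ per step — exactly what produces the constant $(N!)^{-1/2}$. I expect this control of the interior — equivalently, the bookkeeping of the signs in the Pfaffian expansion, or the monotonicity of $y\mapsto\PartF^{(N-1)}_{\Ising}(\ldots,y,\ldots)/(y-x_1)$ in the point $y$ — to be the main obstacle.
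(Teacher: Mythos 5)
Your upper bound is correct and essentially the paper's own argument: the paper combines $\PartF_\alpha \le \LB_\alpha$ (the case $2h=1$ of~\eqref{eqn::optimal_bound_polygon}) with the inductive estimate $\sum_{\alpha\in\LP_N}\LB_\alpha/\LB^{(N)}\le (2N-1)!!$ of Lemma~\ref{lem::b_total_ineq}, whose proof is exactly your reduction --- remove a link $\link{j}{j+1}$, identify the resulting ratio of $\LB$-functions with the probability $P^{(j,j+1)}\le 1$ of~\eqref{eqn::levellines_proba_lk}, and induct. Your variant via $\Catalan_N\max_\alpha\LB_\alpha$ and $\Catalan_N\le(2N-1)!!$ is a cosmetic difference.

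The lower bound, however, is not proved in your proposal, and you say so yourself: this is a genuine gap, not a detail. Your cluster-degeneration analysis via Proposition~\ref{prop::Z_total_asy_refined} only establishes $\PartF^{(N)}_{\Ising}/\LB^{(N)}\ge (N!)^{-1/2}$ on the boundary strata of configuration space, and neither of your two routes for propagating this to the interior is an argument. A minimum principle for the system~\eqref{eq: multiple SLE PDEs} is not available (each equation is degenerate, containing only the single second derivative $\partial_i^2$, and the ratio $\PartF^{(N)}_{\Ising}/\LB^{(N)}$ satisfies no single elliptic equation); and controlling the signs in the cofactor expansion of the Pfaffian is precisely the difficulty rather than a method --- the paper itself notes that $\PartF_\alpha/\LB_\alpha$ can be arbitrarily small, so no term-by-term comparison can work. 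The idea you are missing is the identity $(\mathrm{Pf})^2=\mathrm{Hf}$ of Lemma~\ref{lem::Haffnian}: squaring the Pfaffian yields the Haffnian $\sum_{\varpi\in\Pi_N}\prod_{\{a,b\}\in\varpi}(x_b-x_a)^{-2}$, a sum of nonnegative terms in which the signs have disappeared entirely. The paper then runs the induction on the \emph{square}: grouping the Haffnian sum by the partner $j$ of the last point gives $\big(\PartF^{(N)}_{\Ising}\big)^2=\sum_{j=1}^{2N-1}(x_{2N}-x_j)^{-2}\big(\PartF^{(N-1)}_{\Ising}(y_1^j,\ldots,y_{2N-2}^j)\big)^2$; the induction hypothesis and the explicit ratio $\LB^{(N-1)}/\LB^{(N)}$ of Lemma~\ref{lem: compare Bs} reduce the problem to a sum over odd $j$ of $\big(P^{(j,2N)}\big)^2$; and since $\sum_{j\text{ odd}}P^{(j,2N)}=1$, Cauchy--Schwarz produces the factor $1/N$, hence $1/N!$ for the square. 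Without this identity (or some substitute that genuinely tames the cancellations), your interior step does not go through, so the lower bound remains open in your write-up.
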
 
These bounds are not sharp in general, but they are nevertheless sufficient for our purposes. 

\smallbreak

Another important result in this section is Proposition~\ref{prop::mart_wrong_zero}, 
concerning the boundary behavior of the ratios 
$\PartF_\alpha/\PartF_{\Ising}$ of partition functions
when the variables move under a Loewner evolution.
Next, in Section~\ref{subsec::properties_boundfunctions} 
we collect general identities concerning the bound functions $\LB_{\alpha}$ and $\smash{\LB^{(N)}}$.
Then, we focus on the case of $\kappa=3$. 
We prove Propositions~\ref{prop::Z_total_asy_refined} and~\ref{prop::Z_total_ineq}
respectively in Sections~\ref{subsec::proof_Z_total_asy_refined} and~\ref{subsec::proof_Z_total_ineq}. 
Finally, we state and prove 
Proposition~\ref{prop::mart_wrong_zero} in Section~\ref{subsec::technical_Z_total}. 
Its proof relies on Propositions~\ref{prop::Z_total_asy_refined} and~\ref{prop::Z_total_ineq}.

\subsection{Properties of the Bound Functions}
\label{subsec::properties_boundfunctions}

To begin, we consider the bound functions $\LB_{\alpha}$ and $\smash{\LB^{(N)}}$ defined in Section~\ref{sub::BoundFunctions}.
In particular, they satisfy properties similar to those appearing in Propositions~\ref{prop::Z_total_asy_refined} and~\ref{prop::Z_total_ineq} --- 
see Lemmas~\ref{lem::b_total_asy_refined} and~\ref{lem::b_total_ineq}.
These results can in fact be applied to analyze multiple $\SLE_\kappa$ partition functions for any $\kappa \in (0,6]$. 
For instance, in~\cite{PeltolaWuGlobalMultipleSLEs} related results were used to 
prove that connection probabilities of level lines of the Gaussian free field are given by multiple $\SLE_\kappa$ 
pure partition functions with $\kappa=4$ (see also Appendix~\ref{app::appendix_gff}).

\begin{lemma}\label{lem::b_total_asy_refined}
\textnormal{\cite[Lemma~A.2]{PeltolaWuGlobalMultipleSLEs}}
Fix $N \geq 1$. For all $\ell \in \{ 1, \ldots, N \}$
and $\xi < x_{2\ell+1} < \cdots < x_{2N}$,~we~have 
\begin{align*}
\lim_{\substack{\tilde{x}_1, \ldots, \tilde{x}_{2\ell} \to \xi, \\ \tilde{x}_{i} \to x_i \text{ for } 2\ell < i \le  2N}} 
\frac{\LB^{(N)}(\tilde{x}_1, \ldots, \tilde{x}_{2N})}{\LB^{(\ell)}(\tilde{x}_1, \ldots, \tilde{x}_{2\ell})} 
= \LB^{(N-\ell)}(x_{2\ell+1},\ldots, x_{2N}).
\end{align*}
\end{lemma}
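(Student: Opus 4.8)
The plan is to work directly from the definition $\LB^{(N)}(\tilde{x}_1, \ldots, \tilde{x}_{2N}) = \prod_{1 \le i < j \le 2N}(\tilde{x}_j - \tilde{x}_i)^{(-1)^{j-i}}$ and to split this product into three blocks according to whether each of the indices $i < j$ lies in the collapsing set $\{1, \ldots, 2n\}$ or in the stable set $\{2n+1, \ldots, 2N\}$. Concretely, I would write $\LB^{(N)}(\tilde{x}_1, \ldots, \tilde{x}_{2N}) = P_{\mathrm{in}} \cdot P_{\mathrm{out}} \cdot P_{\mathrm{cross}}$, where $P_{\mathrm{in}}$ collects the factors with $i < j \le 2n$, the term $P_{\mathrm{out}}$ collects the factors with $2n < i < j \le 2N$, and $P_{\mathrm{cross}}$ collects the factors with $i \le 2n < j$, each factor being $(\tilde{x}_j - \tilde{x}_i)^{(-1)^{j-i}}$.

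First I would observe that $P_{\mathrm{in}}$ is, verbatim, $\LB^{(n)}(\tilde{x}_1, \ldots, \tilde{x}_{2n})$, so it cancels the denominator in the ratio under consideration. Next, $P_{\mathrm{out}}$ depends only on the variables $\tilde{x}_{2n+1}, \ldots, \tilde{x}_{2N}$, which converge to the distinct fixed points $x_{2n+1}, \ldots, x_{2N}$; hence $P_{\mathrm{out}}$ converges to the corresponding product, and since the exponent $(-1)^{j-i}$ depends only on the difference $j-i$, the reindexing $i \mapsto i - 2n$, $j \mapsto j - 2n$ identifies this limit with $\LB^{(N-n)}(x_{2n+1}, \ldots, x_{2N})$.

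The only step with any content is showing $P_{\mathrm{cross}} \to 1$. Here I would fix $j \in \{2n+1, \ldots, 2N\}$ and examine $\prod_{i=1}^{2n}(\tilde{x}_j - \tilde{x}_i)^{(-1)^{j-i}}$. The key arithmetic fact is that $\sum_{i=1}^{2n}(-1)^{j-i} = 0$, since $\{1, \ldots, 2n\}$ contains equally many indices of each parity; consequently this product regroups as a product of $n$ ratios of the form $(\tilde{x}_j - \tilde{x}_i)/(\tilde{x}_j - \tilde{x}_{i'})$. Since $\tilde{x}_j \to x_j$, which stays bounded away from $\xi$, while $\tilde{x}_i, \tilde{x}_{i'} \to \xi$, each such ratio tends to $1$, so $P_{\mathrm{cross}} \to 1$. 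Multiplying the three limits then yields the assertion, with the case $n = N$ reducing to the trivial identity $\LB^{(N)}/\LB^{(N)} \to 1$.

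I do not expect a genuine obstacle: the whole lemma distills to the parity cancellation $\sum_{i=1}^{2n}(-1)^i = 0$ that trivializes the cross block in the limit, which is precisely the mechanism producing the multiplicative ``product form'' asymptotics and parallels the asymptotics property $\mathrm{(ASY)}$ for the genuine partition functions. Care is needed only in the bookkeeping of the three blocks, and in checking that no individual factor in $P_{\mathrm{out}}$ or $P_{\mathrm{cross}}$ develops a singularity, which is immediate from the strict ordering $\xi < x_{2n+1} < \cdots < x_{2N}$.
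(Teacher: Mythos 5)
Your proof is correct; the paper itself gives no argument for this lemma but simply cites \cite[Lemma~A.2]{PeltolaWuGlobalMultipleSLEs}, and your three-block decomposition with the parity cancellation $\sum_{i=1}^{2n}(-1)^{j-i}=0$ in the cross terms is exactly the standard computation behind that reference. Nothing further is needed.
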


\begin{lemma}\label{lem::b_total_ineq}
Fix $p \ge 0$ and $N\ge 1$. We have
\begin{align}\label{eqn::b_total_ineq}
\frac{1}{((2N-1)!!)^{p} }
\le \sum_{\alpha\in\LP_N} \bigg(\frac{\LB_{\alpha}(x_1,\ldots, x_{2N})}{\LB^{(N)}(x_1,\ldots, x_{2N})}\bigg)^{p} 
\le (2N-1)!! .
\end{align}
\end{lemma}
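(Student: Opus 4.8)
Write $r_\alpha := \LB_\alpha(x_1,\ldots,x_{2N})/\LB^{(N)}(x_1,\ldots,x_{2N}) > 0$, so that the middle term of~\eqref{eqn::b_total_ineq} is $\sum_{\alpha\in\LP_N} r_\alpha^p$. The plan is to reduce everything to the two pointwise estimates
\begin{align*}
\text{(A)}\quad \sum_{\alpha\in\LP_N} r_\alpha \;\ge\; 1 , \qquad\qquad \text{(B)}\quad r_\alpha \;\le\; 1 \quad\text{for every }\alpha\in\LP_N ,
\end{align*}
together with the elementary bound $\Catalan_N \le (2N-1)!!$, which holds because $\Catalan_N/(2N-1)!! = 2^N/(N+1)! \le 1$. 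Granting (A) and (B): since $r_\alpha \le 1$ we get $r_\alpha^p \le 1$ for all $p\ge 0$, hence $\sum_\alpha r_\alpha^p \le \# \LP_N = \Catalan_N \le (2N-1)!!$, which is the upper bound. For the lower bound, when $0 \le p \le 1$ the subadditivity of $t\mapsto t^p$ and (A) give $\sum_\alpha r_\alpha^p \ge \big(\sum_\alpha r_\alpha\big)^p \ge 1 \ge \big((2N-1)!!\big)^{-p}$; when $p > 1$, convexity of $t\mapsto t^p$ (applied with $n := \Catalan_N$ summands) gives $\sum_\alpha r_\alpha^p \ge n^{1-p}\big(\sum_\alpha r_\alpha\big)^p \ge n^{1-p} \ge \big((2N-1)!!\big)^{1-p} \ge \big((2N-1)!!\big)^{-p}$, using $n \le (2N-1)!!$ and $1-p<0$. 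So it suffices to prove (A) and (B).

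For (B) I would induct on $N$, the case of the empty pattern being $r_\emptyset = 1$. Fix $N\ge 1$ and $\alpha\in\LP_N$; then $\alpha$ contains a link $\link{k}{k+1}$ of consecutive indices, and, using that the endpoints of every link of a planar pattern span an odd number of indices, a direct computation from~\eqref{eqn::b_def} yields
\begin{align*}
\LB_\alpha(x_1,\ldots,x_{2N}) = \frac{\LB_{\hat\alpha}(\ldots)}{x_{k+1}-x_k} , \qquad
\LB^{(N)}(x_1,\ldots,x_{2N}) = \frac{F(x_k)}{F(x_{k+1})}\cdot\frac{\LB^{(N-1)}(\ldots)}{x_{k+1}-x_k} ,
\end{align*}
where $\hat\alpha = \alpha\removeLink\link{k}{k+1}$, the omitted arguments are $(x_1,\ldots,x_{k-1},x_{k+2},\ldots,x_{2N})$ (relabelled; note that removing two consecutive indices preserves the parities of all index-differences), and $F(t) := \prod_{i<k}(t-x_i)^{(-1)^{k-i}}\prod_{j>k+1}(x_j-t)^{(-1)^{j-k}}$. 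Hence $r_\alpha = r_{\hat\alpha}\cdot F(x_{k+1})/F(x_k)$, and since $r_{\hat\alpha} \le 1$ by induction it remains to check $F(x_k) \ge F(x_{k+1})$. This holds because $F$ is non-increasing on $(x_{k-1},x_{k+2})$: its logarithmic derivative there equals $\sum_{i<k}(-1)^{k-i}(t-x_i)^{-1} - \sum_{j>k+1}(-1)^{j-k}(x_j-t)^{-1}$, and each of these alternating sums has the favourable sign --- the first is $\le 0$ since its summands $(t-x_i)^{-1}$ are positive and decreasing in $i$ with leading sign $-$, and the second is $\ge 0$ for the analogous reason with leading sign $+$.

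For (A), equivalently $\sum_{\alpha\in\LP_N}\LB_\alpha \ge \LB^{(N)}$, I would clear denominators: multiplying through by $\prod_{i<j,\,j-i \text{ odd}}(x_j-x_i)$ and using again that links span odd index-distances, (A) becomes the polynomial inequality
\begin{align*}
\sum_{\alpha\in\LP_N}\ \prod_{\substack{i<j,\ j-i \text{ odd}\\ \link{i}{j}\notin\alpha}}(x_j-x_i) \;\ge\; \prod_{\substack{i<j,\ j-i \text{ even}}}(x_j-x_i) \;=\; \prod_{1\le a<b\le N}(x_{2b-1}-x_{2a-1})(x_{2b}-x_{2a}) ,
\end{align*}
both sides being homogeneous of degree $N(N-1)$. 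I would attack this by induction on $N$ via the Catalan-type recursion that splits $\alpha$ according to the link $\link{m}{2N}$ through the last index (with $m$ odd), which on the left factors the sum as $\sum_{m}(\ldots)$ over products over the two arcs; the base cases $N\le 2$ are in fact equalities. I expect this inequality to be the main obstacle: unlike the $N\le 2$ cases, for $N\ge 3$ it is strict and no single summand on the left dominates the right-hand side, so the argument must genuinely exploit the sum (e.g.\ through the recursion above, or via a Cauchy / Lindström--Gessel--Viennot--type comparison), and the bookkeeping of which differences appear in each summand requires care.
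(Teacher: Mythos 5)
Your reduction of \eqref{eqn::b_total_ineq} to the two claims (A) $\sum_{\alpha}r_\alpha\ge 1$ and (B) $r_\alpha\le 1$, where $r_\alpha=\LB_\alpha/\LB^{(N)}$, is logically sound, and your proof of (B) is correct: the factorization $r_\alpha=r_{\hat\alpha}\cdot F(x_{k+1})/F(x_k)$ is exactly the identity the paper uses, the ratio $F(x_{k+1})/F(x_k)$ being the quantity $P^{(k,k+1)}$ of \eqref{eqn::levellines_proba_lk}. The paper gets $P^{(k,k+1)}\le 1$ for free from its interpretation as a connection probability for GFF level lines, while your monotonicity argument for $F$ proves the same fact by hand (modulo the slip that the summands $(t-x_i)^{-1}$ are \emph{increasing} in $i$; the alternating-sum argument you intend is still valid). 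The convexity and subadditivity manipulations and the bound $\Catalan_N\le(2N-1)!!$ are also fine.

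The gap is claim (A), which you explicitly leave unproven (``I expect this inequality to be the main obstacle''). Your lower bound rests entirely on an inequality you have only conjectured and sketched an attack on; the cases $N\le 2$ being equalities does not carry an induction, and the Catalan-type recursion you propose does not obviously close. So the proof is incomplete. It is worth seeing how the paper avoids needing (A) altogether: it restricts the sum to those $\alpha$ containing a fixed consecutive link $\link{j}{j+1}$, for which the factorization gives $\sum_{\alpha\ni\link{j}{j+1}}r_\alpha^p=(P^{(j,j+1)})^p\sum_{\hat\alpha\in\LP_{N-1}}r_{\hat\alpha}^p$, and then uses only the much weaker facts $\sum_{j=1}^{2N-1}P^{(j,j+1)}\ge 1$ (every planar connectivity contains at least one consecutive link, and the $P^{(j,j+1)}$ are probabilities of non-disjoint events covering everything), hence $\max_j P^{(j,j+1)}\ge(2N-1)^{-1}$. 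The loss of a factor $(2N-1)^{-p}$ at each induction step is precisely what produces the constant $((2N-1)!!)^{-p}$ in \eqref{eqn::b_total_ineq}: the stated lower bound is deliberately weak enough that (A) is never needed. If you do want to pursue (A) --- which, if true, would give the stronger lower bound $\Catalan_N^{1-p}$ for $p>1$ and $1$ for $p\le 1$ --- a more promising route than your recursion is the Cauchy determinant identity $\LB^{(N)}(x_1,\ldots,x_{2N})=\det\big((x_{2b}-x_{2a-1})^{-1}\big)_{a,b=1}^N$, after which (A) amounts to showing that the non-planar odd--even pairings contribute non-positively to the determinant expansion; but that is genuinely additional work, not a detail to be filled in.
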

\begin{proof}
We prove~\eqref{eqn::b_total_ineq} by induction on $N \geq 1$. The initial case $N=1$ is trivial. 
Let then $N \ge 2$ and assume that Equation~\eqref{eqn::b_total_ineq} holds up to $N-1$. 
A straightforward calculation shows that
\begin{align*}
\frac{\LB_{\alpha}(x_1, \ldots, x_{2N})}{\LB^{(N)}(x_1, \ldots, x_{2N})} 
= \frac{\LB_{\alpha \removeLink \link{j}{j+1}}(x_1, \ldots, x_{j-1}, x_{j+2}, \ldots, x_{2N})}{\LB^{(N-1)}(x_1, \ldots, x_{j-1}, x_{j+2}, \ldots, x_{2N})}
\prod_{\substack{1 \leq i \leq 2N , \\ i \neq j, j+1}} \bigg| \frac{x_i - x_{j+1}}{x_i - x_j} \bigg|^{(-1)^{i+j}} ,
\end{align*}
for any $1\le j\le 2N-1$ and $\alpha \in \LP_N$ such that $\link{j}{j+1}\in\alpha$.
The product expression in the above formula is the same as the probability $P^{(j,j+1)} = P^{(j+1,j)}$ in~\eqref{eqn::levellines_proba_lk}
in Appendix~\ref{app::appendix_gff} in Appendix~\ref{app::appendix_gff}. 
Thus, we have 
\begin{align*}
\frac{\LB_{\alpha}(x_1, \ldots, x_{2N})}{\LB^{(N)}(x_1, \ldots, x_{2N})} 
= \frac{\LB_{\alpha \removeLink \link{j}{j+1}}(x_1, \ldots, x_{j-1}, x_{j+2}, \ldots, x_{2N})}{\LB^{(N-1)}(x_1, \ldots, x_{j-1}, x_{j+2}, \ldots, x_{2N})} \;
P^{(j,j+1)}(x_1, \ldots, x_{2N}). 
\end{align*}
Using the above observation, we first prove the upper bound in~\eqref{eqn::b_total_ineq}:
\begin{align*}
& \; \sum_{\alpha\in\LP_N}\bigg(\frac{\LB_{\alpha}(x_1, \ldots, x_{2N})}{\LB^{(N)}(x_1, \ldots, x_{2N})}\bigg)^{p} \\
\le & \; \sum_{j=1}^{2N-1}
\sum_{\substack{\alpha \in \LP_N , \\ \link{j}{j+1} \in \alpha}} 
\bigg(\frac{\LB_{\alpha}(x_1, \ldots, x_{2N})}{\LB^{(N)}(x_1, \ldots, x_{2N})}\bigg)^{p} \\
\le & \; \sum_{j=1}^{2N-1} \sum_{\substack{\alpha \in \LP_N , \\ \link{j}{j+1} \in \alpha}} 
\bigg(\frac{\LB_{\alpha \removeLink \link{j}{j+1}}(x_1, \ldots, x_{j-1}, x_{j+2}, \ldots, x_{2N})}{\LB^{(N-1)}(x_1, \ldots, x_{j-1}, x_{j+2}, \ldots, x_{2N})}\bigg)^{p}
&& \text{[since $P^{(j,j+1)}\le 1$ in~\eqref{eqn::levellines_proba_lk}]} \\ 
\le & \; \sum_{j=1}^{2N-1}(2N-3)!!=(2N-1)!!.
&& \text{[by the ind.~hypothesis]} 
\end{align*}
To prove the lower bound in~\eqref{eqn::b_total_ineq}, we first estimate
\begin{align*}
& \; \sum_{\alpha\in\LP_N}
\bigg(\frac{\LB_{\alpha}(x_1, \ldots, x_{2N})}{\LB^{(N)}(x_1, \ldots, x_{2N})}\bigg)^{p} \\
\ge & \; \max_{1\le j\le 2N-1}\sum_{\alpha \colon \link{j}{j+1} \in\alpha}\bigg(\frac{\LB_{\alpha}(x_1, \ldots, x_{2N})}{\LB^{(N)}(x_1, \ldots, x_{2N})}\bigg)^{p}\\
=& \; \max_{1\le j\le 2N-1}
\sum_{\substack{\alpha \in \LP_N , \\ \link{j}{j+1} \in \alpha}} 
\bigg(\frac{\LB_{\alpha \removeLink \link{j}{j+1}}(x_1, \ldots, x_{j-1}, x_{j+2}, \ldots, x_{2N})}{\LB^{(N-1)}(x_1, \ldots, x_{j-1}, x_{j+2}, \ldots, x_{2N})}\bigg)^{p} \big(P^{(j,j+1)}(x_1, \ldots, x_{2N})\big)^{p} \\
\ge & \;  \big((2N-3)!!\big)^{-p} \max_{1\le j\le 2N-1} \big(P^{(j,j+1)}(x_1, \ldots, x_{2N})\big)^{p}.
\qquad \qquad \qquad \quad \text{[by the ind.~hypothesis]}
\end{align*}
We now note that $\sum_j P^{(j,j+1)}\ge 1$, 
which implies 
$\underset{j}{\max} \, P^{(j,j+1)}\ge \frac{1}{2N-1}$ 
and shows the lower bound. 
\end{proof}

\begin{remark} \label{rem::btotal_mono}
Suppose $(U; x_1, \ldots, x_{2N})$ is a nice sub-polygon of $(\Omega; x_1, \ldots, x_{2N})$. Then by~\eqref{eqn::poissonkernel_mono}, we have
\begin{align*} 
\LB_{\alpha}(U; x_1, \ldots, x_{2N})\le \LB_{\alpha}(\Omega; x_1, \ldots, x_{2N}). 
\end{align*}
Combining this with~\eqref{eqn::b_total_ineq}, we see that 
\begin{align} \label{eqn::btotal_mono}
\LB^{(N)}(U; x_1, \ldots, x_{2N})\le ((2N-1)!!)^2 \, \LB^{(N)}(\Omega; x_1, \ldots, x_{2N}). 
\end{align}
\end{remark}

\begin{corollary}\label{cor::Ztotal_upper}
Fix $\kappa\in (0,6]$ and let $\PartF^{(N)}$ be the symmetric partition function~\eqref{eqn::zsymmetric_def}. Then, we have 
\begin{align*}
\PartF^{(N)}(x_1, \ldots, x_{2N})\le (2N-1)!! \big(\LB^{(N)}(x_1, \ldots, x_{2N})\big)^{2h}.
\end{align*} 
\end{corollary}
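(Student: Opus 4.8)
The idea is simply to bound the sum $\PartF^{(N)}=\sum_{\alpha\in\LP_N}\PartF_{\alpha}$ term by term using the strong bound~\eqref{eqn::optimal_bound_polygon} on each pure partition function, and then to control the resulting combinatorial sum by Lemma~\ref{lem::b_total_ineq}. First I would observe that, since $\kappa\in(0,6]$, the exponent $h=\frac{6-\kappa}{2\kappa}$ is nonnegative, so that $p:=2h\ge 0$ and Lemma~\ref{lem::b_total_ineq} is applicable with this value of $p$.

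Next I would specialize~\eqref{eqn::optimal_bound_polygon} to the upper half-plane $\HH$ with real points $x_1<\cdots<x_{2N}$. Taking $\varphi=\mathrm{id}$ in~\eqref{eq:conformal_image} gives $\PartF_{\alpha}(\HH;x_1,\ldots,x_{2N})=\PartF_{\alpha}(x_1,\ldots,x_{2N})$, and from the definition~\eqref{eqn::b_def} together with $H_{\HH}(x,y)=|y-x|^{-2}$ we have $\LB_{\alpha}(\HH;x_1,\ldots,x_{2N})=\prod_{\link{a}{b}\in\alpha}|x_b-x_a|^{-1}=\LB_{\alpha}(x_1,\ldots,x_{2N})$. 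Hence~\eqref{eqn::optimal_bound_polygon} reads $0<\PartF_{\alpha}(x_1,\ldots,x_{2N})\le \LB_{\alpha}(x_1,\ldots,x_{2N})^{2h}$ in this setting.

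Finally I would combine these inputs:
\begin{align*}
\PartF^{(N)}(x_1,\ldots,x_{2N})
= \sum_{\alpha\in\LP_N}\PartF_{\alpha}(x_1,\ldots,x_{2N})
\le \sum_{\alpha\in\LP_N}\LB_{\alpha}(x_1,\ldots,x_{2N})^{2h}
= \big(\LB^{(N)}(x_1,\ldots,x_{2N})\big)^{2h}\sum_{\alpha\in\LP_N}\left(\frac{\LB_{\alpha}(x_1,\ldots,x_{2N})}{\LB^{(N)}(x_1,\ldots,x_{2N})}\right)^{2h},
\end{align*}
and then apply the upper bound in~\eqref{eqn::b_total_ineq} with $p=2h$ to conclude $\PartF^{(N)}\le (2N-1)!!\,(\LB^{(N)})^{2h}$, as desired.

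\textbf{Main obstacle.} There is essentially none: both of the nontrivial facts used (the pointwise bound~\eqref{eqn::optimal_bound_polygon} and Lemma~\ref{lem::b_total_ineq}) are already established above, and the only thing to be careful about is the bookkeeping that identifies the half-plane boundary versions of $\PartF_{\alpha}$ and $\LB_{\alpha}$ with the intrinsic polygon versions, which is immediate from the definitions~\eqref{eqn::b_def} and~\eqref{eq:conformal_image}. (If one wished, the statement for a general nice polygon $(\Omega;x_1,\ldots,x_{2N})$ would follow identically after multiplying through by $\prod_i|\varphi'(x_i)|^{2h}$.)
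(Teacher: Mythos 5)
Your proof is correct and is essentially identical to the paper's one-line argument: combine the pointwise bound~\eqref{eqn::optimal_bound_polygon} with the upper bound of Lemma~\ref{lem::b_total_ineq} for $p=2h$. Your additional bookkeeping (checking $h\ge 0$ for $\kappa\in(0,6]$ and identifying the half-plane versions of $\PartF_\alpha$ and $\LB_\alpha$) is a harmless elaboration of the same route.
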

\begin{proof}
This follows by combining~\eqref{eqn::optimal_bound_polygon} with the upper bound in~\eqref{eqn::b_total_ineq} for $p = 2h$.
\end{proof}

Corollary~\ref{cor::Ztotal_upper} with $\kappa=3$ immediately gives the upper bound in Proposition~\ref{prop::Z_total_ineq}.
However, the ratio $\PartF_\alpha/\LB_\alpha$ can be arbitrarily small, so the lower bound in Proposition~\ref{prop::Z_total_ineq} 
cannot be derived easily from the lower bound in Lemma~\ref{lem::b_total_ineq}. 
To establish the lower bound, 
we first prove a useful identity for $\smash{\LB^{(N)}}$ in Lemma~\ref{lem: compare Bs}.

\begin{remark} \label{rem::crossratio}
Let us record a trivial but helpful inequality here: for $x_1<x_2<x_3<x_4$, we have
\begin{align} \label{eqn::crossratio}
\frac{(x_4-x_1)(x_3-x_2)}{(x_3-x_1)(x_4-x_2)}\le 1. 
\end{align}
\end{remark}

\begin{lemma} \label{lem: compare Bs}
Fix $N \geq 1$ and $j \in \{1, 2, \ldots,2N-1\}$, 
and denote by 
$y_i^{j} := x_i$ for $1 \leq i \leq j-1$ and $y_i^{j} := x_{i+1}$ for $j \leq i \leq 2N-2$. 
Then, we have 
\begin{align} \label{eq: compare Bs}
\begin{split}
\; & \frac{\LB^{(N)}(x_1,\ldots, x_{2N})}{\LB^{(N-1)}(y_1^{j},\ldots, y_{2N-2}^{j})} \\
= \; & (x_{2N}-x_j)^{(-1)^{j}} 
\prod_{\substack{1 \leq l < 2N , \\ l \neq j}} |x_l-x_j|^{(-1)^{l-j}} \prod_{\substack{1 \leq k < 2N , \\ k \neq j}} (x_{2N}-x_k)^{(-1)^{k}}
\Bigg( \prod_{\substack{1\leq k < j , \\ j \leq l < 2N-1}} (y_l^{j} - y_k^{j})^{(-1)^{l-k+1}} \Bigg)^2 .
\end{split}
\end{align}
In particular, by taking $j=2N-1$, we have
\begin{align}
\label{eqn::btotal_compare}
\begin{split}
\frac{\LB^{(N)}(x_1, \ldots, x_{2N})}{\LB^{(N-1)}(x_1, \ldots, x_{2N-2})}
= \; & \frac{1}{x_{2N}-x_{2N-1}}
\; \prod_{i=1}^{2N-2}\bigg(\frac{x_{2N}-x_i}{x_{2N-1}-x_i}\bigg)^{(-1)^i}\\
\le \; & \frac{x_{2N}-x_{2N-2}}{(x_{2N}-x_{2N-1})(x_{2N-1}-x_{2N-2})}.
\end{split}
\end{align}

\end{lemma}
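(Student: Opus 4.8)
The plan is to prove~\eqref{eq: compare Bs} by expanding both sides as products over pairs of indices and cancelling the factors common to numerator and denominator, and then to deduce~\eqref{eqn::btotal_compare} as the special case $j = 2N-1$ together with an elementary monotonicity estimate. The first observation is that $(y_1^{j},\ldots,y_{2N-2}^{j})$ is exactly $(x_1,\ldots,x_{2N})$ with the two entries $x_j$ and $x_{2N}$ deleted, so that the map $\psi$ defined by $\psi(m)=m$ for $m<j$ and $\psi(m)=m-1$ for $j<m<2N$ is the order-preserving bijection $\{1,\ldots,2N\}\setminus\{j,2N\}\to\{1,\ldots,2N-2\}$ satisfying $y_{\psi(m)}^{j}=x_m$. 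Using this, I would write
\begin{align*}
\LB^{(N)}(x_1,\ldots,x_{2N}) &= \prod_{1\le m<m'\le 2N}(x_{m'}-x_m)^{(-1)^{m'-m}}, \\
\LB^{(N-1)}(y_1^{j},\ldots,y_{2N-2}^{j}) &= \prod_{\substack{m<m' \\ m,m'\notin\{j,2N\}}}(x_{m'}-x_m)^{(-1)^{\psi(m')-\psi(m)}},
\end{align*}
and form the quotient factor by factor.

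A pair $\{m,m'\}$ with $m<m'$ and $m,m'\notin\{j,2N\}$ occurs on both sides; since $\psi(m')-\psi(m)$ equals $m'-m$ when $m$ and $m'$ lie on the same side of $j$ and equals $m'-m-1$ when $m<j<m'$, its contribution to the quotient is $1$ in the former case and $(x_{m'}-x_m)^{2(-1)^{m'-m}}$ in the latter. The pairs meeting $\{j,2N\}$ occur only in $\LB^{(N)}$: the pairs containing $x_j$ give $\prod_{i<j}(x_j-x_i)^{(-1)^{j-i}}\prod_{j<l\le 2N}(x_l-x_j)^{(-1)^{l-j}}$, and splitting off the $l=2N$ term---which equals $(x_{2N}-x_j)^{(-1)^j}$ because $2N$ is even---together with the identity $(-1)^{l-j}=(-1)^{j-l}$ reassembles the first two products on the right-hand side of~\eqref{eq: compare Bs}; the pairs $\{i,2N\}$ with $i\ne j$ give $\prod_{1\le i<2N,\,i\ne j}(x_{2N}-x_i)^{(-1)^i}$, the third product; and the straddling pairs give $\prod_{m<j<m'<2N}(x_{m'}-x_m)^{2(-1)^{m'-m}}$, which is $\big(\prod_{1\le k<j,\,j\le l<2N-1}(y_l^{j}-y_k^{j})^{(-1)^{l-k+1}}\big)^{2}$ after substituting $m=k$, $m'=l+1$ and using $y_l^{j}=x_{l+1}$ for $j\le l\le 2N-2$ and $(-1)^{(l+1)-k}=(-1)^{l-k+1}$. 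Collecting these four groups gives exactly~\eqref{eq: compare Bs}.

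For~\eqref{eqn::btotal_compare}, I would take $j=2N-1$, so that $(y_1^{2N-1},\ldots,y_{2N-2}^{2N-1})=(x_1,\ldots,x_{2N-2})$ and the squared product in~\eqref{eq: compare Bs} is empty; grouping the leftover factors $(x_{2N}-x_k)^{(-1)^k}$ with $(x_{2N-1}-x_k)^{-(-1)^k}$ yields the displayed equality $\LB^{(N)}/\LB^{(N-1)}=(x_{2N}-x_{2N-1})^{-1}\prod_{i=1}^{2N-2}\big((x_{2N}-x_i)/(x_{2N-1}-x_i)\big)^{(-1)^i}$. To get the inequality, I would bound this alternating product by grouping, for $m=1,\ldots,N-2$, the factor with even index $k=2m$ with the factor with odd index $k=2m+1$: by~\eqref{eqn::crossratio} applied to the four points $x_{2m}<x_{2m+1}<x_{2N-1}<x_{2N}$ one has $\frac{x_{2N}-x_{2m}}{x_{2N-1}-x_{2m}}\cdot\frac{x_{2N-1}-x_{2m+1}}{x_{2N}-x_{2m+1}}\le 1$; the remaining factor at $k=1$ equals $\frac{x_{2N-1}-x_1}{x_{2N}-x_1}\le 1$, and the one at $k=2N-2$ equals $\frac{x_{2N}-x_{2N-2}}{x_{2N-1}-x_{2N-2}}$. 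Hence the alternating product is at most $\frac{x_{2N}-x_{2N-2}}{x_{2N-1}-x_{2N-2}}$, and multiplying by $(x_{2N}-x_{2N-1})^{-1}$ gives~\eqref{eqn::btotal_compare}. No step is genuinely delicate; the only thing demanding care is the bookkeeping of exponents under the reindexing $\psi$---in particular, correctly identifying which factors disappear when $x_j$ and $x_{2N}$ are removed, and noting that the straddling pairs are precisely those whose exponent changes.
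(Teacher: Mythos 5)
Your proposal is correct and follows essentially the same route as the paper's proof: both decompose $\LB^{(N)}$ into the factors involving the indices $j$ and $2N$ plus the remaining pairs, observe that only the pairs straddling $j$ change exponent (by a sign flip, hence contributing a square to the quotient) under the reindexing to the $y$-variables, and then obtain~\eqref{eqn::btotal_compare} from the case $j=2N-1$ together with the cross-ratio bound~\eqref{eqn::crossratio}. Your pairing of consecutive even/odd indices to apply~\eqref{eqn::crossratio} is exactly the intended (if unwritten) detail behind the paper's one-line justification of the final inequality.
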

\begin{proof}
First, we use the definition~\eqref{eqn::b_def} to write
\begin{align*}
\LB^{(N)}(x_1,\ldots, x_{2N}) 
= \; & (x_{2N}-x_j)^{(-1)^{j}} 
\prod_{\substack{1 \leq l < 2N , \\ l \neq j}} |x_l-x_j|^{(-1)^{l-j}} \prod_{\substack{1 \leq k < 2N , \\ k \neq j}} (x_{2N}-x_k)^{(-1)^{k}}  \\
\; & \times 
\prod_{1\leq k < l < j}(x_l-x_k)^{(-1)^{l-k}} 
\prod_{j \leq k < l < 2N}(x_l-x_k)^{(-1)^{l-k}} 
\prod_{\substack{1\leq k < j , \\ j < l < 2N}}(x_l-x_k)^{(-1)^{l-k}} .
\end{align*}
Then, we convert the last three products into expressions in the variables $y_1^{j},\ldots, y_{2N-2}^{j}$:
\begin{align*}
\; & \prod_{1\leq k < l < j}(x_l-x_k)^{(-1)^{l-k}} 
\prod_{j \leq k < l < 2N}(x_l-x_k)^{(-1)^{l-k}} 
\prod_{\substack{1\leq k < j , \\ j < l < 2N}}(x_l-x_k)^{(-1)^{l-k}} \\
= \; & \prod_{1\leq k < l < j}(y_l^{j}-y_k^{j})^{(-1)^{l-k}} 
\prod_{j -1 \leq k < l < 2N-1}(y_l^{j}-y_k^{j})^{(-1)^{l-k}} 
\prod_{\substack{1\leq k < j , \\ j \leq l < 2N-1}}(y_l^{j}-y_k^{j})^{(-1)^{l-k+1}} \\
= \; &  \Bigg( \prod_{\substack{1\leq k < j , \\ j \leq l < 2N-1}} (y_l^{j} - y_k^{j})^{(-1)^{l-k+1}}  \Bigg)^2
\LB^{(N-1)}(y_1^{j},\ldots, y_{2N-2}^{j}) .
\end{align*}
Combining these formulas, we get~\eqref{eq: compare Bs}. 
The first line of~\eqref{eqn::btotal_compare} follows from~\eqref{eqn::b_def} and the second from~\eqref{eqn::crossratio}.   
\end{proof}

\subsection{Cascade Asymptotics --- Proof of Proposition~\ref{prop::Z_total_asy_refined}}
\label{subsec::proof_Z_total_asy_refined}

The symmetric partition function $\PartF_{\Ising}$ 
has an explicit Pfaffian formula, already well-known in the physics literature, and appearing, e.g., 
in~\cite{KytolaPeltolaPurePartitionSLE, IzyurovIsingMultiplyConnectedDomains, PeltolaWuGlobalMultipleSLEs} in the context of $\SLE$s. 
To state it, we use the following notation: we let $\Pi_N$ denote the set of all \textit{pair partitions} 
$\varpi = \{\{a_1,b_1\} , \ldots, \{a_N, b_N\}\}$ of the set $\{1,2,\ldots,2N\}$, 
that is, partitions of this set into $N$ disjoint two-element subsets $\{a_j,b_j\} \subset \{1,\ldots,2N\}$, 
with the convention that $a_1 < a_2 < \cdots < a_N$ and $a_j < b_j$ for all $j \in \{ 1, \ldots, N\}$. 
We also denote by $\mathrm{sgn}(\varpi)$ the sign of the partition $\varpi$ 
defined as the sign of the product $\prod (a-c) (a-d) (b-c) (b-d)$ over pairs of distinct elements $\{a,b\},\{c,d\} \in \varpi$.
Note that the set of link patterns $\LP_N$ is the subset of $\Pi_N$ consisting of planar pair partitions. 
With this notation, the function~\eqref{eqn::zsymmetric_def} with $\kappa=3$ reads
(see, e.g.,~\cite[Lemma~4.13]{PeltolaWuGlobalMultipleSLEs} and~\cite[Proposition~4.6]{KytolaPeltolaPurePartitionSLE} for a proof) 
\begin{align}\label{eq: ising symmetric pff}
\PartF^{(N)}_{\Ising}(x_1, \ldots, x_{2N}) 
= \mathrm{pf} \bigg(  \frac{1}{x_{j}-x_{i}} \bigg)_{i,j=1}^{2N}
:= \sum_{\varpi \in \Pi_N} 
\mathrm{sgn}(\varpi) \prod_{\{a,b\} \in \varpi} \frac{1}{x_{b}-x_{a}}  ,
\end{align}
for $\Omega = \HH$, and it is again defined for general nice polygons via conformal covariance:
\begin{align*}
\PartF^{(N)}_{\Ising}(\Omega; x_1, \ldots, x_{2N}) := 
\prod_{i=1}^{2N} |\varphi'(x_i)|^{1/2}\times\PartF^{(N)}_{\Ising}(\varphi(x_1), \ldots, \varphi(x_{2N})) ,
\end{align*}
with any conformal map $\varphi \colon \Omega \to \HH$ such that $\varphi(x_1)<\cdots<\varphi(x_{2N})$. 
The formula~\eqref{eq: ising symmetric pff} shows 
that, up to a sign, the function $\PartF_{\Ising}$ satisfies M\"{o}bius covariance also for maps that move the point $\infty$. Indeed, for any conformal map $\varphi \colon \HH \to \HH$, we have
(see, e.g.,~\cite[Lemma~4.7 and proof of Proposition~4.6]{KytolaPeltolaPurePartitionSLE})
\begin{align} \label{eq:general mobius}
\begin{split}
\PartF^{(N)}_{\Ising}(x_1, \ldots, x_{2N}) 
= \; & \prod_{i=1}^{2N} \varphi'(x_i)^{1/2} \times \PartF^{(N)}_{\Ising}(\varphi(x_{1}), \ldots, \varphi(x_{2N})) \\
= \; & (-1)^{|\iota|} 
\prod_{i=1}^{2N} \varphi'(x_i)^{1/2} \times \PartF^{(N)}_{\Ising}(\varphi(x_{\iota(1)}), \ldots, \varphi(x_{\iota(2N)})) ,
\end{split}
\end{align}
where $\{x_{\iota(1)}, \ldots, x_{\iota(2N)}\} = \{x_{1}, \ldots, x_{2N}\}$ satisfy $\varphi(x_{\iota(1)}) < \cdots < \varphi(x_{\iota(2N)})$ and $|\iota|$ is the number of indices $j$ in $\{2,3,\ldots,2N\}$ such that $\varphi(x_{\iota(j)}) < \varphi(x_{\iota(1)})$. Note that $\iota$ is a cyclic permutation of $\{1,\ldots,2N\}$ and 
the Pfaffian function $\mathrm{Pf}(z_1, \ldots, z_{2N})$ is odd in the sense that for any $i<j$, we have
\begin{align} \label{eqn::odd}
\mathrm{Pf}(z_1, \ldots, z_i, \ldots, z_j, \ldots,  z_{2N}) = - \mathrm{Pf}(z_1, \ldots, z_j, \ldots, z_i, \ldots,  z_{2N}) .
\end{align}

From the Pfaffian formula, it is not obvious that $\PartF_{\Ising} > 0$, but this is indeed the case:
the positivity follows, e.g., from its definition~\eqref{eqn::zsymmetric_def} and the fact that each $\PartF_{\alpha}$ is positive by~\cite[Theorem~1.1]{PeltolaWuGlobalMultipleSLEs}: 
\begin{align} \label{eq::TotalPartPos}
\PartF^{(N)}_{\Ising} := \sum_{\alpha\in\LP_N}\PartF_{\alpha} > 0 .
\end{align}
Alternatively, as pointed out by the referee, 
by using Lemma~\ref{lem::Hafnian} and studying the asymptotics of~\eqref{eq: ising symmetric pff}, one may check that 
$\PartF_{\Ising} > 0$ from the formula~\eqref{eq::Hafnian} below.

\smallbreak

Our next aim is to prove the cascade asymptotics property, Proposition~\ref{prop::Z_total_asy_refined}, for the function 
$\PartF_{\Ising}$.
In general, limiting behavior of functions of several variables is rather delicate, and indeed,
even with the explicit formula~\eqref{eq: ising symmetric pff} for 
$\PartF_{\Ising}$, the analysis of its behavior as the variables tend together is non-trivial. 
The problem with using formula~\eqref{eq: ising symmetric pff} is that it includes
a sum of positive and negative terms (which could in principle lead to cancellations and a signed expression).
However, thanks to the following Hafnian identity, a sum over \textit{non-negative} terms, 
we are able to carry out the analysis required for Proposition~\ref{prop::Z_total_asy_refined}. 
This identity is well-known in the literature~\cite{ID_book, DMS:CFT}: it is a manifestation of ``bosonization identities'' for the Ising model.  
For completeness, we give a proof for it here.

\begin{lemma} \label{lem::Hafnian}
The following identity holds for all $z_1, \ldots, z_{2N} \in \C$ with $z_i \neq z_j$ for all $i \neq j$:
\begin{align}\label{eq::Hafnian}
\left( \mathrm{pf} \bigg(  \frac{1}{z_{j}-z_{i}} \bigg)_{i,j=1}^{2N} \right)^2
= \mathrm{hf} \bigg(  \frac{1}{(z_{j}-z_{i})^2} \bigg)_{i,j=1}^{2N}
:= \sum_{\varpi \in \Pi_N} \prod_{\{a,b\} \in \varpi} \frac{1}{(z_{b}-z_{a})^2}  .
\end{align}
\end{lemma}
\begin{proof}
Expanding the square of the Pfaffian~\eqref{eq: ising symmetric pff}, we have
\begin{align} \label{eq::Expand}
\left( \mathrm{pf} \bigg(  \frac{1}{z_{j}-z_{i}} \bigg)_{i,j=1}^{2N} \right)^2
= \; & \sum_{\varpi, \varpi' \in \Pi_N} 
\mathrm{sgn}(\varpi) \, \mathrm{sgn}(\varpi') \prod_{\{a,b\} \in \varpi} \frac{1}{z_{b}-z_{a}} \prod_{\{c,d\} \in \varpi'} \frac{1}{z_{d}-z_{c}} .
\end{align}
We see from the asserted formula~\eqref{eq::Hafnian} that the diagonal terms $\varpi = \varpi'$ yield the desired Hafnian expression,
and therefore, we only need to prove that all of the off-diagonal terms in~\eqref{eq::Expand} cancel out. 
To establish this, we use induction on $N \geq 1$. The initial case $N = 1$ is clear. Let us then assume that
\begin{align*} 
\sum_{\substack{\varpi, \varpi' \in \Pi_{N-1} , \\ \varpi \neq \varpi' }} \mathrm{sgn}(\varpi) \, \mathrm{sgn}(\varpi')
\prod_{\{a,b\} \in \varpi} \frac{1}{u_{b}-u_{a}} \prod_{\{c,d\} \in \varpi'} \frac{1}{u_{d}-u_{c}} = 0 , \qquad
\left\{\parbox{4.5cm}{\text{ for all $u_1, \ldots, u_{2N-2} \in \C$ } \\ \text{ with $u_i \neq u_j$ for all  $i \neq j$ } }\right\} .
\end{align*}
Fix the variables $z_1, \ldots, z_{2N-1} \in \C$ at arbitrary distinct positions, denote by
\begin{align*}
\mathrm{Pf}(z_1, \ldots, z_{2N})& := \sum_{\varpi \in \Pi_N} 
\mathrm{sgn}(\varpi) \prod_{\{a,b\} \in \varpi} \frac{1}{z_{b}-z_{a}},\\
\mathrm{Hf}(z_1, \ldots, z_{2N}) & := \sum_{\varpi \in \Pi_N}  \prod_{\{a,b\} \in \varpi} \frac{1}{(z_{b}-z_{a})^2} ,
\end{align*}
and consider the following meromorphic function of $z \in \C$:
\begin{align} \label{eq::merom}
\begin{split}
F(z) := \; & (\mathrm{Pf}(z_1, \ldots, z_{2N-1}, z))^2 - \mathrm{Hf}(z_1, \ldots,  z_{2N-1}, z) \\
= \; & \Bigg( \sum_{\substack{\varpi, \varpi' \in \Pi_N , \\ \varpi \neq \varpi' }} 
\frac{\mathrm{sgn}(\varpi) \, \mathrm{sgn}(\varpi')}{(z-z_{\varpi(2N)})(z-z_{\varpi'(2N)})} 
\prod_{\substack{\{a,b\} \in \varpi , \\ b \neq 2N}} \frac{1}{z_{b}-z_{a}} 
\prod_{\substack{\{c,d\} \in \varpi' , \\ d \neq 2N}} \frac{1}{z_{d}-z_{c}} \Bigg) ,
\end{split}
\end{align}
where $\varpi(2N)$ (resp.~$\varpi'(2N)$) is the pair of $2N$ in $\varpi$ (resp.~$\varpi'$), that is, $\{\varpi(2N),2N\} \in \varpi$ (resp. $\{\varpi'(2N),2N\} \in \varpi'$). 
We aim to prove that the function $F$ is identically zero.
$F(z)$ vanishes at $z \to \infty$ and it can only have poles of degree at most two at $z = z_j$ for some $j \in \{1, 2, \ldots,2N-1\}$. 
We will show that these points are in fact not poles. 
By symmetry, it suffices to consider the Laurent series expansion of $F$ at $\varepsilon = z - z_{2N-1}$, 
and by translation invariance, $F$ depends on $z_{2N-1}$ and $z$ only via their difference.
In particular, $F$ is a meromorphic function of $\varepsilon \in \C$ and the Laurent series reads
\begin{align} \label{eq::Laurent}
F(z) = \varepsilon^{-2} A_{-2}(z_1, \ldots, z_{2N-2}) + \varepsilon^{-1} A_{-1}(z_1, \ldots, z_{2N-2}) 
+ \sum_{n = 0}^\infty \varepsilon^n A_{n}(z_1, \ldots, z_{2N-2}) .
\end{align}
To exclude the first order poles, we show that the right-hand side of~\eqref{eq::Laurent} is an even function of $\varepsilon$.
Notice that the Pfaffian function $\mathrm{Pf}(z_1, \ldots, z_{2N})$ is odd in the sense of~\eqref{eqn::odd}, while 
the function $\mathrm{Hf}(z_1, \ldots, z_{2N})$ is even in the sense that for any $i<j$, we have
\begin{align*}
\mathrm{Hf}(z_1, \ldots, z_i, \ldots, z_j, \ldots,  z_{2N}) = \mathrm{Hf}(z_1, \ldots, z_j, \ldots, z_i, \ldots,  z_{2N}) .
\end{align*}
In particular, choosing $i = 2N-1$ and $j = 2N$, we have 
\begin{align*}
F(z) := \; & (\mathrm{Pf}(z_1, \ldots, z_{2N-2}, z_{2N-1}, z))^2 - \mathrm{Hf}(z_1, \ldots, z_{2N-2}, z_{2N-1}, z) \\
= \; & (\mathrm{Pf}(z_1, \ldots, z_{2N-2}, z, z_{2N-1}))^2 - \mathrm{Hf}(z_1, \ldots, z_{2N-2}, z, z_{2N-1}) .
\end{align*}
Now, the Laurent series expansion of the second line as a function of $z_{2N-1} \in \C$ at $\delta = z_{2N-1} - z$ reads
\begin{align*}
\delta^{-2} A_{-2}(z_1, \ldots, z_{2N-2}) + \delta^{-1} A_{-1}(z_1, \ldots, z_{2N-2}) 
+ \sum_{n = 0}^\infty \delta^n A_{n}(z_1, \ldots, z_{2N-2}) = F(z) ,
\end{align*}
and because $\delta = - \varepsilon$, this shows that the Laurent series expansion~\eqref{eq::Laurent} in $\varepsilon = z - z_{2N-1}$ 
is invariant under the flip $\varepsilon \mapsto -\varepsilon$. Therefore, $A_{-1} \equiv 0$ in~\eqref{eq::Laurent}.

The term of order $\varepsilon^{-2}$ in~\eqref{eq::Laurent} 
is given by those terms in~\eqref{eq::merom} which 
satisfy $\varpi(2N) = \varpi'(2N) = 2N-1$, that is, $\{2N-1,2N\} \in \varpi \cap \varpi'$.
Removing this pair from both $\varpi$ and $\varpi'$ results in two different pair partitions of $2N-2$ points, which have the same signs as $\varpi$ and $\varpi'$, respectively.
Therefore, we have 
\begin{align*} 
A_{-2}(z_1, \ldots, z_{2N-2})
= \sum_{\substack{\varpi, \varpi' \in \Pi_{N-1} , \\ \varpi \neq \varpi' }} 
\mathrm{sgn}(\varpi) \, \mathrm{sgn}(\varpi')
\prod_{\{a,b\} \in \varpi} \frac{1}{z_{b}-z_{a}} \prod_{\{c,d\} \in \varpi'} \frac{1}{z_{d}-z_{c}} .
\end{align*}
This expression is zero by the induction hypothesis. Thus, the function $F \colon \C \to \C$ has no poles.

In conclusion, we have shown that $F$ is an entire function with $\lim_{z \to \infty} F(z) = 0$.
Therefore, it is bounded and by Liouville's theorem, $F \equiv 0$. This shows that $(\mathrm{Pf})^2 \equiv \mathrm{Hf}$ and implies~\eqref{eq::Hafnian}. 
\end{proof}

\begin{proof}[Proof of Proposition~\ref{prop::Z_total_asy_refined}]
By the identity~\eqref{eq::TotalPartPos}, we have $\PartF_{\Ising}^{(\ell)} > 0$ for all $\ell \geq 1$, 
so it is sufficient to prove the corresponding statement for the squares, that is, to verify the identity 
\begin{align*}
\lim_{\substack{\tilde{x}_1, \ldots, \tilde{x}_{2\ell} \to \xi, \\ \tilde{x}_{i} \to x_i \text{ for } 2\ell < i \le  2N}} 
\bigg( \frac{\PartF^{(N)}_{\Ising}(\tilde{x}_1, \ldots, \tilde{x}_{2N})}{\PartF^{(\ell)}_{\Ising}(\tilde{x}_1, \ldots, \tilde{x}_{2\ell})} \bigg)^2 
= \big( \PartF^{(N-\ell)}_{\Ising}(x_{2\ell+1},\ldots, x_{2N})  \big)^2 .
\end{align*}
Denoting $J_\ell = \{1,2,\ldots,2\ell\}$ and using the explicit formula from Lemma~\ref{lem::Hafnian}, we write
\begin{align*}
\big( \PartF_{\Ising}^{(N)}(\tilde{x}_1, \ldots, \tilde{x}_{2N}) \big)^2
= \; & \sum_{\varpi \in \Pi_N} \prod_{\{a,b\} \in \varpi} \frac{1}{(\tilde{x}_{b}-\tilde{x}_{a})^2} \\
= \; & \sum_{\varpi \in \Pi_N} 
\bigg( \prod_{\substack{\{a,b\} \in \varpi , \\ a,b \in J_\ell }} \frac{1}{(\tilde{x}_{b}-\tilde{x}_{a})^2} \bigg) 
\bigg( \prod_{\substack{\{a,b\} \in \varpi , \\ a \in J_\ell, \, b \notin J_\ell }} \frac{1}{(\tilde{x}_{b}-\tilde{x}_{a})^2} \bigg) 
\bigg( \prod_{\substack{\{a,b\} \in \varpi , \\ a,b\notin J_\ell }} \frac{1}{(\tilde{x}_{b}-\tilde{x}_{a})^2} \bigg)  .
\end{align*}
In the limit~\eqref{eqn::cascade_asy} of the ratio $( \PartF_{\Ising}^{(N)} / \PartF_{\Ising}^{(\ell)} )^2$, 
only those terms in the sum over $\varpi \in \Pi_N$ in $(\PartF_{\Ising}^{(N)} )^2$ 
for which $\# \{ \{a,b\} \in \varpi \colon a,b \in J_\ell \} \geq \ell$ can give a non-zero contribution. 
These are exactly the pair partitions that factorize into two parts: $\varpi = \varpi_1 \cup \varpi_2$, 
with $\varpi_1 \in \Pi_\ell$ and $\varpi_2 \in \Pi_{N-\ell}$ being pair partitions of $J_\ell$ and $\{2\ell+1,2\ell+2,\ldots,2N\}$, respectively. 
Thus, we have verified that
\begin{align*}
& \lim_{\substack{\tilde{x}_1, \ldots, \tilde{x}_{2\ell} \to \xi, \\ \tilde{x}_{i} \to x_i \text{ for } 2\ell < i \le  2N}} 
\bigg( \frac{\PartF_{\Ising}^{(N)}(\tilde{x}_1, \ldots, \tilde{x}_{2N})}{\PartF_{\Ising}^{(\ell)}(\tilde{x}_1, \ldots, \tilde{x}_{2\ell})} \bigg)^2 \\
= \; & \lim_{\substack{\tilde{x}_1, \ldots, \tilde{x}_{2\ell} \to \xi, \\ \tilde{x}_{i} \to x_i \text{ for } 2\ell < i \le  2N}} 
\frac{\sum_{\varpi_1 \in \Pi_\ell} \Big(\prod_{\{a,b\} \in \varpi_1} (\tilde{x}_{b}-\tilde{x}_{a})^{-2} \Big)
\sum_{\varpi_2 \in \Pi_{N-\ell}} \Big(\prod_{\{a,b\} \in \varpi_2} (\tilde{x}_{b}-\tilde{x}_{a})^{-2} \Big) }{
\sum_{\varpi_1 \in \Pi_\ell} \Big(\prod_{\{a,b\} \in \varpi_1} (\tilde{x}_{b}-\tilde{x}_{a})^{-2} \Big) } \\
= \; & \sum_{\varpi_2 \in \Pi_{N-\ell}} \prod_{\{a,b\} \in \varpi_2} \frac{1}{(x_{b}-x_{a})^2} 
= \big( \PartF_{\Ising}^{(N-\ell)}(x_{2\ell+1},\ldots, x_{2N}) \big)^2 ,
\end{align*}
where we omitted terms that tend to zero in the limit. This proves the asserted identity~\eqref{eqn::cascade_asy}.
\end{proof}

\subsection{Upper and Lower Bounds --- Proof of Proposition~\ref{prop::Z_total_ineq}}
\label{subsec::proof_Z_total_ineq}

The purpose of this section is to finish the proof of Proposition~\ref{prop::Z_total_ineq}, that is, to show~\eqref{eqn::cascade_asy}:
\begin{align} \label{eq::Z_total_ineq}
\frac{1}{\sqrt{N!}} \; \LB^{(N)}(x_1,\ldots, x_{2N}) \le \PartF_{\Ising}^{(N)}(x_1,\ldots, x_{2N}) \le (2N-1)!! \; \LB^{(N)}(x_1,\ldots, x_{2N}) .
\end{align}

\begin{proof}
The upper bound in~\eqref{eq::Z_total_ineq} follows directly from Corollary~\ref{cor::Ztotal_upper} with $\kappa=3$.
To prove the lower bound in~\eqref{eq::Z_total_ineq}, a slightly more clever calculation is needed, where
the Hafnian identity of Lemma~\ref{lem::Hafnian} plays a crucial role.
The lower bound follows from Lemma~\ref{lem::Z_total_ineq_lower} below.
\end{proof}

\begin{lemma} \label{lem::Z_total_ineq_lower}
For any $N \geq 1$, we have
\begin{align}\label{eqn::Z_total_ineq_lower}
\frac{1}{\sqrt{N!}} \; \LB^{(N)}(x_1,\ldots, x_{2N}) \le \PartF_{\Ising}^{(N)}(x_1,\ldots, x_{2N}) .
\end{align}
\end{lemma}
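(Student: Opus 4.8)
The plan is to derive the lower bound from the Haffnian identity of Lemma~\ref{lem::Haffnian} by recognizing $\LB^{(N)}$ as the modulus of a Cauchy determinant and then applying the Cauchy--Schwarz inequality. First I would split the index set $\{1,\ldots,2N\}$ into its odd part $O=\{1,3,\ldots,2N-1\}$ and its even part $E=\{2,4,\ldots,2N\}$. Since $(-1)^{j-i}$ equals $+1$ precisely when $i$ and $j$ have the same parity, the product defining $\LB^{(N)}$ in~\eqref{eqn::b_def} factorizes: the same-parity factors give $\prod_{i<j,\,i,j\in O}(x_j-x_i)\,\prod_{i<j,\,i,j\in E}(x_j-x_i)$ in the numerator, and the mixed-parity factors give $\prod_{i\in O,\,j\in E}|x_i-x_j|$ in the denominator. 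By the classical Cauchy determinant formula applied with $a_i=x_{2i-1}$ and $b_j=x_{2j}$, this ratio is exactly $\big|\det\big((x_{2i-1}-x_{2j})^{-1}\big)_{i,j=1}^{N}\big|$.

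With that identity in hand, I would expand the $N\times N$ determinant over $\sigma\in S_N$. Each $\sigma$ encodes the pair partition $\varpi_\sigma:=\{\link{2i-1}{2\sigma(i)}\colon 1\le i\le N\}\in\Pi_N$ matching each odd-indexed point to an even-indexed one, so the determinant is a signed sum of $N!$ terms, the $\sigma$-th term having modulus $\LB_{\varpi_\sigma}$, where $\LB_{\varpi}$ extends the definition~\eqref{eqn::b_def} to all of $\Pi_N$. Cauchy--Schwarz against the all-ones vector of length $N!$ then gives $\big(\LB^{(N)}\big)^2\le N!\sum_{\sigma\in S_N}\LB_{\varpi_\sigma}^2$. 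Since $\{\varpi_\sigma\colon\sigma\in S_N\}$ is a subset of $\Pi_N$ and every summand in the Haffnian is nonnegative, the right-hand side is at most $N!\sum_{\varpi\in\Pi_N}\LB_{\varpi}^2=N!\,\mathrm{hf}\big((x_j-x_i)^{-2}\big)_{i,j=1}^{2N}=N!\big(\PartF_{\Ising}^{(N)}\big)^2$, using Lemma~\ref{lem::Haffnian} and formula~\eqref{eq: ising symmetric pff}. Taking square roots, and recalling $\PartF_{\Ising}^{(N)}>0$ from~\eqref{eq::TotalPartPos}, yields~\eqref{eqn::Z_total_ineq_lower}.

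The only step requiring an idea rather than bookkeeping is the first one: spotting that the parity-split form of $\LB^{(N)}$ is a Cauchy determinant between the odd- and even-indexed points. Everything afterwards is the elementary bound $\big|\sum_{k=1}^{M}c_k\big|^2\le M\sum_{k=1}^{M}c_k^2$ combined with the already-established Haffnian identity; in particular $\mathrm{sgn}(\sigma)$ is discarded at the Cauchy--Schwarz step, so no compatibility between the permutation sign and the pair-partition sign $\mathrm{sgn}(\varpi_\sigma)$ needs to be checked. I expect no further obstacles; the $\tfrac{1}{\sqrt{N!}}$ in the statement is exactly the constant produced by Cauchy--Schwarz over the $N!$ bipartite pairings.
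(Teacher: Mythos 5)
Your argument is correct, and it takes a genuinely different --- and more direct --- route than the paper. The paper proves the bound by induction on $N$: it expands the Haffnian of Lemma~\ref{lem::Haffnian} according to the partner $j$ of the last point $2N$, obtaining $\big(\PartF_{\Ising}^{(N)}\big)^2=\sum_{j=1}^{2N-1}(x_{2N}-x_j)^{-2}\big(\PartF_{\Ising}^{(N-1)}(y_1^j,\ldots,y_{2N-2}^j)\big)^2$, applies the induction hypothesis to each summand, invokes Lemma~\ref{lem: compare Bs} together with the cross-ratio bound of Remark~\ref{rem::crossratio} to control $\LB^{(N-1)}(y_1^j,\ldots,y_{2N-2}^j)/\LB^{(N)}(x_1,\ldots,x_{2N})$, discards the terms with $j$ even, recognizes the surviving terms as the squared probabilities $P^{(j,2N)}$ of~\eqref{eqn::levellines_proba_lk}, which sum to $1$ over odd $j$, and finishes with Cauchy--Schwarz over those $N$ indices, producing $\frac{1}{(N-1)!}\cdot\frac{1}{N}=\frac{1}{N!}$. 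You avoid the induction altogether: the parity factorization of~\eqref{eqn::b_def} does exhibit $\LB^{(N)}$ as the modulus of the Cauchy determinant $\det\big((x_{2i-1}-x_{2j})^{-1}\big)_{i,j=1}^{N}$ (easily checked from the classical Cauchy identity, the sign being irrelevant since you only use the modulus), the expansion into $N!$ bipartite pairings and the elementary bound $\big|\sum_{k}c_k\big|^2\le N!\sum_k c_k^2$ are fine, and enlarging the sum from the bipartite pairings to all of $\Pi_N$ is legitimate because every Haffnian summand is positive. Your route needs only Lemma~\ref{lem::Haffnian}, the Pfaffian formula~\eqref{eq: ising symmetric pff}, and the positivity~\eqref{eq::TotalPartPos}; it dispenses with Lemma~\ref{lem: compare Bs}, the cross-ratio estimates, and the appeal to the GFF connection probabilities of Appendix~\ref{app::appendix_gff}, with no loss in the constant --- both arguments yield exactly $1/\sqrt{N!}$. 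The paper's scheme has the mild advantage of running parallel to the cascade structure used throughout Sections~3--4, but as a standalone proof of this lemma yours is shorter and cleaner.
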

\begin{proof}
We prove~\eqref{eqn::Z_total_ineq_lower} by induction on $N \geq 1$.
The initial case $N=1$ follows from the equality $\LB^{(1)} = \PartF_{\Ising}^{(1)}$. 
Let then $N \ge 2$ and assume that Equation~\eqref{eqn::Z_total_ineq_lower} holds up to $N-1$. 
By expanding in Lemma~\ref{lem::Hafnian} 
the sum over pair partitions $\varpi \in \Pi_N$ into terms according to the pair $j$ of the last point $2N$, we obtain
\begin{align*}
\big( \PartF_{\Ising}^{(N)}(x_1,\ldots, x_{2N})\big)^2 
= \sum_{j=1}^{2N-1} \frac{1}{(x_{2N}-x_{j})^2} 
\sum_{\hat{\varpi}\in \Pi_{N-1}^{j}} \prod_{\{a,b\} \in \hat{\varpi}} \frac{1}{(x_{b}-x_{a})^2} ,
\end{align*}
where $\hat{\varpi}\in \Pi_{N-1}^{j}$ are pair partitions of $\{1,2,\ldots,2N\} \setminus \{j,2N\}$. 
Again, using Lemma~\ref{lem::Hafnian}, we get
\begin{align*}
\; & \big( \PartF_{\Ising}^{(N)}(x_1,\ldots, x_{2N})\big)^2 
= \sum_{j=1}^{2N-1} \frac{1}{(x_{2N}-x_{j})^{2} }
\big( \PartF_{\Ising}^{(N-1)}(y_1^{j},\ldots, y_{2N-2}^{j}) \big)^2 ,
\end{align*}
where $y_i^{j} := x_i$ for $1 \leq i \leq j-1$ and $y_i^{j} := x_{i+1}$ for $j \leq i \leq 2N-2$.
Now,  we have
\begin{align*}
\; & \bigg( \frac{\PartF_{\Ising}^{(N)}(x_1,\ldots, x_{2N})}{\LB^{(N)}(x_1,\ldots, x_{2N})} \bigg)^2\\
= \; & \sum_{j=1}^{2N-1}
\frac{1}{(x_{2N}-x_{j})^{2}} \Bigg( \frac{\PartF_{\Ising}^{(N-1)}(y_1^{j},\ldots, y_{2N-2}^{j})}{\LB^{(N)}(x_1,\ldots, x_{2N})} \Bigg)^2 \\
= \; & \sum_{j=1}^{2N-1}
\frac{1}{(x_{2N}-x_{j})^{2}} \Bigg( 
\frac{\PartF_{\Ising}^{(N-1)}(y_1^{j},\ldots, y_{2N-2}^{j})}{\LB^{(N-1)}(y_1^{j},\ldots, y_{2N-2}^{j})} \Bigg)^2
\Bigg( \frac{\LB^{(N-1)}(y_1^{j},\ldots, y_{2N-2}^{j})}{\LB^{(N)}(x_1,\ldots, x_{2N})}  \Bigg)^2 \\
\geq \; & \frac{1}{(N-1)!} \sum_{j=1}^{2N-1}
\frac{1}{(x_{2N}-x_{j})^{2}} \Bigg( \frac{\LB^{(N-1)}(y_1^{j},\ldots, y_{2N-2}^{j})}{\LB^{(N)}(x_1,\ldots, x_{2N})}  \Bigg)^2 .
\end{align*}
Recalling Lemma~\ref{lem: compare Bs} and separating the sum according to the parity of $j$, we obtain
\begin{align*}
\; & \sum_{j=1}^{2N-1}
\frac{1}{(x_{2N}-x_{j})^{2}} \Bigg( \frac{\LB^{(N-1)}(y_1^{j},\ldots, y_{2N-2}^{j})}{\LB^{(N)}(x_1,\ldots, x_{2N})}  \Bigg)^2 \\
= \; & \sum_{\substack{1 \leq j < 2N , \\ j \text{ is even}}} \frac{1}{(x_{2N}-x_{j})^{4} }
\prod_{\substack{1 \leq i < 2N , \\ i \neq j}} (x_i - x_j)^{2(-1)^{i+1}} 
\prod_{\substack{1 \leq i < 2N , \\ i \neq j}} (x_{2N}-x_i)^{2(-1)^{i+1}}
\prod_{\substack{1\leq k < j , \\ j \leq l < 2N-1}} (y_l^{j} - y_k^{j})^{4(-1)^{l-k}} \\
\; & + \sum_{\substack{1 \leq j < 2N , \\ j \text{ is odd}}} \;
\prod_{\substack{1 \leq i < 2N , \\ i \neq j}} (x_i - x_j)^{2(-1)^{i}} 
\prod_{\substack{1 \leq i < 2N , \\ i \neq j}} (x_{2N}-x_i)^{2(-1)^{i+1}} 
\prod_{\substack{1\leq k < j , \\ j \leq l < 2N-1}} (y_l^{j} - y_k^{j})^{4(-1)^{l-k}} .
\end{align*}
Because all terms are non-negative, dropping the sum over even $j$ yields the rough lower bound
\begin{align} \label{eq:auxiliary}
\begin{split}
\; & \bigg( \frac{\PartF_{\Ising}^{(N)}(x_1,\ldots, x_{2N})}{\LB^{(N)}(x_1,\ldots, x_{2N})} \bigg)^2 \\
\geq \; & \frac{1}{(N-1)!} \sum_{j=1}^{2N-1}
\frac{1}{(x_{2N}-x_{j})^{2}} \Bigg( \frac{\LB^{(N-1)}(y_1^{j},\ldots, y_{2N-2}^{j})}{\LB^{(N)}(x_1,\ldots, x_{2N})}  \Bigg)^2 \\
\geq \; & \frac{1}{(N-1)!} 
\sum_{\substack{1 \leq j < 2N , \\ j \text{ is odd}}} \;
\prod_{\substack{1 \leq i < 2N , \\ i \neq j}} \bigg( \frac{x_i - x_j}{x_{2N}-x_i} \bigg)^{2(-1)^{i}}
\prod_{\substack{1\leq k < j , \\ j \leq l < 2N-1}} (y_l^{j} - y_k^{j})^{4(-1)^{l-k}} .
\end{split}
\end{align}
Now, using Remark~\ref{rem::crossratio}, we see that the last product is in fact larger than one:
\begin{align*}
\prod_{\substack{1\leq k < j , \\ j \leq l < 2N-1}} |y_l^{j} - y_k^{j}|^{(-1)^{l-k}} 
= \prod_{\substack{1 \leq m \leq \frac{1}{2}(j-1) , \\ \frac{1}{2}(j+1) \leq n \leq N-1}} 
\Bigg| \frac{(y_{2\ell-1}^{j} - y_{2m-1}^{j})(y_{2\ell}^{j} - y_{2m}^{j})}{(y_{2\ell-1}^{j} - y_{2m}^{j})(y_{2\ell}^{j} - y_{2m-1}^{j})} \Bigg| \geq 1 .
\end{align*}
Therefore,~\eqref{eq:auxiliary} gives
\begin{align*} 
\bigg( \frac{\PartF_{\Ising}^{(N)}(x_1,\ldots, x_{2N})}{\LB^{(N)}(x_1,\ldots, x_{2N})} \bigg)^2 
\geq \frac{1}{(N-1)!} 
\sum_{\substack{1 \leq j < 2N , \\ j \text{ is odd}}} \;
\prod_{\substack{1 \leq i < 2N , \\ i \neq j}} \bigg( \frac{x_i - x_j}{x_{2N}-x_i} \bigg)^{2(-1)^{i}} .
\end{align*}
Finally, we note that
\begin{align*}
\prod_{\substack{1 \leq i < 2N , \\ i \neq j}} \bigg| \frac{x_i - x_j}{x_i - x_{2N}} \bigg|^{(-1)^{i}} 
= P^{(j,2N)}(x_1, \ldots, x_{2N}) \in (0,1),
\end{align*}
where $P^{(j,2N)}$ is the probability in~\eqref{eqn::levellines_proba_lk}. In particular, we have
\begin{align*}
\sum_{\substack{1 \leq j < 2N , \\ j \text{ is odd}}} \;
\prod_{\substack{1 \leq i < 2N , \\ i \neq j}} \bigg| \frac{x_i - x_j}{x_i - x_{2N}} \bigg|^{(-1)^{i}} 
= \sum_{\substack{1 \leq j < 2N , \\ j \text{ is odd}}} P^{(j,2N)}(x_1, \ldots, x_{2N}) = 1 .
&& \text{[by~\eqref{eqn::levellines_proba_lk}]}
\end{align*}
Using the Cauchy-Schwarz inequality, we eventually get the following lower bound for the square:
\begin{align*} 
\bigg( \frac{\PartF_{\Ising}^{(N)}(x_1,\ldots, x_{2N})}{\LB^{(N)}(x_1,\ldots, x_{2N})} \bigg)^2 
\geq \; & \frac{1}{(N-1)!} \sum_{\substack{1 \leq j < 2N , \\ j \text{ is odd}}} \big( P^{(j,2N)}(x_1, \ldots, x_{2N}) \big)^2 \\
\geq \; & \frac{1}{(N-1)!} \frac{1}{N} \Bigg( \sum_{\substack{1 \leq j < 2N , \\ j \text{ is odd}}} P^{(j,2N)}(x_1, \ldots, x_{2N}) \Bigg)^2 = \frac{1}{N!} .
\end{align*}
Because by~\eqref{eq::TotalPartPos} and~\eqref{eqn::b_def}, we have 
$\PartF_{\Ising} > 0$ and $\LB > 0$, this gives the lower bound~\eqref{eqn::Z_total_ineq_lower}.
\end{proof}

\subsection{Boundary Behavior}
\label{subsec::technical_Z_total}

The final result of this section  
concerns the boundary behavior of the ratios 
$\PartF_\alpha/\PartF_{\Ising}$ of partition functions
when the variables move under a Loewner evolution. 
It is crucial for proving Theorem~\ref{thm::ising_crossingproba} in Section~\ref{sec::Ising_crossing_proba}.

\begin{proposition}\label{prop::mart_wrong_zero}
Fix $\kappa = 3$,
$\alpha =\{ \link{a_1}{b_1}, \ldots, \link{a_N}{b_N} \} \in \LP_N$ and suppose that $\link{1}{2} \in \alpha$.
Fix an index $\ell \in \{2,3, \ldots, N\}$ and real points $x_1 < \cdots < x_{2N}$.
Suppose $\eta$ is a continuous simple curve in $\HH$ starting from $x_1$ and terminating at $x_{2\ell}$ at time $T$, 
which hits $\R$ only at $\{x_1, x_{2\ell}\}$. Let $(W_t, 0\le t\le T)$ be its Loewner driving function and $(g_t, 0\le t\le T)$ 
the corresponding conformal maps. Then, we have
\begin{align} \label{eqn::mart_wrong_zero}
\lim_{t \to T} 
\frac{\PartF_\alpha (W_t, g_t(x_2), \ldots, g_t(x_{2N})) }{\PartF^{(N)}_{\Ising} (W_t, g_t(x_2), \ldots, g_t(x_{2N})) }
= 0 .
\end{align}
\end{proposition}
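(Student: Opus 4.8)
The plan is to track how the curve $\eta$, which connects $x_1$ to $x_{2n}$, makes the group of points $\{x_1,\dots,x_{2n}\}$ degenerate together in the conformal image as $t\to T$. More precisely, as $t\to T$ the conformal maps $g_t$ send the swallowed hull (bounded by $\eta$ and the boundary arc from $x_1$ to $x_{2n}$) to a small neighborhood of a single point $\xi = \lim_{t\to T} W_t = \lim_{t\to T} g_t(x_{2n})$ on the real line, while the points $g_t(x_{2n+1}),\dots,g_t(x_{2N})$ converge to finite limits $\hat x_{2n+1} < \cdots < \hat x_{2N}$. So the situation is exactly the ``cascade'' limit where $2n$ variables merge: the first group $W_t, g_t(x_2),\dots,g_t(x_{2n})$ all tend to $\xi$, and the remaining $2(N-n)$ stay separated.

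The key steps, in order. First, I would make the geometric setup precise: since $\eta$ hits $\R$ only at its endpoints, the swallowing time of $x_2,\dots,x_{2n}$ is exactly $T$, and $g_t$ contracts the corresponding hull to the point $\xi$; hence $W_t \to \xi$ and $g_t(x_i)\to\xi$ for $2\le i\le 2n$, whereas $g_t(x_i) \to \hat x_i$ finite and distinct for $i > 2n$. Second, apply the cascade asymptotics for the numerator: since $\link{1}{2}\in\alpha$, iterating the refined asymptotics~\eqref{eqn::partf_alpha_asy_refined} $n$ times (peeling off nested links as the $2n$ points collapse) shows that $\PartF_\alpha$ behaves like $c_\alpha \cdot (\text{blow-up factor})$, and crucially, because the group $\{1,\dots,2n\}$ carrying the link $\link12$ need not be ``interiorly closed'' by $\alpha$ --- i.e. $\alpha$ restricted to $\{1,\dots,2n\}$ need not be a link pattern of $\LP_n$, it may have links leaving the block --- the collapse forces some $(x_{j+1}-x_j)^{-2h}$-type factor to be missing, so $\PartF_\alpha$ grows strictly slower than $\PartF^{(n)}_{\Ising}(W_t,g_t(x_2),\dots,g_t(x_{2n}))$ times a bounded quantity. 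Third, handle the denominator with Proposition~\ref{prop::Z_total_asy_refined}: $\PartF^{(N)}_{\Ising} \big/ \PartF^{(n)}_{\Ising}(\text{first }2n\text{ vars}) \to \PartF^{(N-n)}_{\Ising}(\hat x_{2n+1},\dots,\hat x_{2N}) > 0$. Thus the ratio in~\eqref{eqn::mart_wrong_zero} is comparable to $\PartF_\alpha \big/ \big(\PartF^{(n)}_{\Ising}(\text{first }2n) \cdot \text{const}\big)$, and I would bound this using the optimal bound~\eqref{eqn::optimal_bound_polygon}, $\PartF_\alpha \le \LB_\alpha^{2h}$, together with the lower bound $\PartF^{(n)}_{\Ising} \ge (n!)^{-1/2}\LB^{(n)}$ from Proposition~\ref{prop::Z_total_ineq}, reducing everything to showing that $\LB_\alpha(W_t,g_t(x_2),\dots,g_t(x_{2N}))^{2h} \big/ \LB^{(n)}(W_t,g_t(x_2),\dots,g_t(x_{2n}))$ tends to $0$. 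This last ratio is an explicit product of differences of the $g_t(x_i)$; since at least one link of $\alpha$ must join the collapsing block $\{1,\dots,2n\}$ to the outside (because $\link12\in\alpha$ occupies the block but $n\ge 2$ may leave other points unpaired inside --- careful bookkeeping of how many links stay inside versus cross out shows the power of the vanishing factor $(\text{diam of collapsing block})$ is strictly positive), the product genuinely vanishes.

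The main obstacle I expect is the combinatorial bookkeeping in the third step: one has to compare, for a general $\alpha$ with $\link12\in\alpha$, the scaling exponent of $\LB_\alpha$ under the collapse of $\{1,\dots,2n\}$ against that of $\LB^{(n)}$ on the collapsing variables, and verify the net exponent on the small diameter is strictly positive. Equivalently, one needs that $\alpha$ restricted to the block is \emph{not} a full matching of the block whenever the block would otherwise ``absorb'' the right amount --- and the point is that although $\link12\in\alpha$, the links incident to $3,4,\dots,2n$ may or may not stay inside, and one must rule out the boundary case where the ratio tends to a positive constant. The cleanest route is probably to use the refined asymptotics~\eqref{eqn::partf_alpha_asy_refined} directly and inductively rather than the crude bound functions: collapsing $x_1,x_2\to$ together first (using $\link12\in\alpha$) replaces $\PartF_\alpha$ by $\PartF_{\hat\alpha}$ on $2N-2$ points with the factor $(x_2-x_1)^{-2h}$, and then one collapses the remaining $2n-2$ points of the block; at some stage, since those points are no longer all paired among themselves in $\hat\alpha$, the relevant limit in~\eqref{eqn::partf_alpha_asy_refined} hits the ``$0$'' case or produces a strictly smaller blow-up order, and comparing with the clean cascade for $\PartF^{(n)}_{\Ising}$ (which always blows up at the full rate by Proposition~\ref{prop::Z_total_asy_refined} applied with all of $\{1,\dots,2n\}$) yields the claimed limit $0$.
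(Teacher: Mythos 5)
Your reduction coincides with the paper's: bound the numerator by $\PartF_\alpha \le \LB_\alpha$ using \eqref{eqn::optimal_bound_polygon}, bound the denominator from below via the lower bound of Proposition~\ref{prop::Z_total_ineq} together with the cascade limits $\PartF^{(N)}_{\Ising}/\PartF^{(n)}_{\Ising} \to \PartF^{(N-n)}_{\Ising}$ (Proposition~\ref{prop::Z_total_asy_refined}) and $\LB^{(N)}/\LB^{(n)} \to \LB^{(N-n)}$ (Lemma~\ref{lem::b_total_asy_refined}), so that everything hinges on showing that $\LB_\alpha/\LB^{(N)}$ (equivalently, your $\LB_\alpha/\LB^{(n)}$) tends to zero along the flow. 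Up to that point the proposal is correct and matches the paper.

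The gap is in that last step. Your justification --- that at least one link of $\alpha$ must join the collapsing block $\{1,\ldots,2n\}$ to the outside, so the net exponent of the collapsing scale is strictly positive --- is false in general: for $\alpha = \{\link{1}{2},\link{3}{4},\ldots,\link{2N-1}{2N}\}$ the restriction of $\alpha$ to $\{1,\ldots,2n\}$ is a full matching of the block, no link leaves it, and exponent counting gives net exponent zero. Already for $N=n=2$ and $\alpha=\{\link{1}{2},\link{3}{4}\}$ one computes $\LB_\alpha/\LB^{(2)} = \frac{(y_4-y_1)(y_3-y_2)}{(y_3-y_1)(y_4-y_2)}$ with $y_1=W_t$, $y_i=g_t(x_i)$: a cross-ratio which is $\le 1$ by \eqref{eqn::crossratio} but whose vanishing cannot be seen from the collapse of the points alone, since all four gaps shrink at uncontrolled relative rates. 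That this cross-ratio does tend to $0$ is a genuinely geometric fact using the hypothesis that $\eta$ terminates at $x_{2n}$ while staying off $\R$ (the relevant pair of boundary arcs of $H_t$ becomes separated by an annulus of diverging modulus as the domain pinches at $x_{2n}$); the paper imports exactly this statement from \cite[Proposition~B.1]{PeltolaWuGlobalMultipleSLEs}. Your fallback --- iterating the refined asymptotics \eqref{eqn::partf_alpha_asy_refined} to peel off the block pair by pair --- does not repair this: those asymptotics apply when two adjacent points merge while all remaining points converge to \emph{distinct} limits, whereas here all $2n$ points collapse simultaneously, so the pairwise limits cannot be applied in sequence (this is the same obstruction that forces the paper to prove Proposition~\ref{prop::Z_total_asy_refined} through the Haffnian identity rather than by iterating \eqref{eq: multiple SLE asymptotics}). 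To close the argument you must either quote \cite[Proposition~B.1]{PeltolaWuGlobalMultipleSLEs} or reprove the cross-ratio degeneration from the stated hypotheses on $\eta$.
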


\begin{proof}
By~\eqref{eqn::optimal_bound_polygon} and~\eqref{eqn::Z_total_ineq}, we have
\begin{align*}
\frac{\PartF_\alpha}{\PartF^{(N)}_{\Ising}}\le \frac{\LB_{\alpha}}{\PartF^{(N)}_{\Ising}}
\le \sqrt{N!}\frac{\LB_{\alpha}}{\LB^{(N)}} ,
\end{align*}
and $\LB_{\alpha}/\LB^{(N)} \to 0$ in the limit of~\eqref{eqn::mart_wrong_zero} by~\cite[Proposition~B.1]{PeltolaWuGlobalMultipleSLEs}.
\end{proof}

\section{Loewner Chains Associated to Partition Functions When $\kappa = 3$}
\label{sec::Loewner_chain}
Fix $j \in \{1,2,\ldots,2N\}$. Recall from Sections~\ref{subsec::pre_SLE} 
and~\ref{subsec::pre_partitionfucntions} that, for launching points $x_1<\cdots<x_{2N}$ on $\R = \partial \HH$, 
the Loewner chain associated to an $\SLE_{\kappa}$ partition function $\PartF$ 
starting from $x_j$ 
is the process with driving function $W$ given by the SDEs~\eqref{eqn::loewnerchain_partition}. 
In this section, we specialize to the case where $\kappa=3$ 
and consider the Loewner chain associated to 
$\PartF_{\Ising}$ starting from $x_j$. 
The Loewner chain is well-defined up to the first time 
when it swallows a spectator point:
\begin{align} \label{eq:total_swallowing_time}
T = T(j) := \min_{i\neq j} \swal_{x_i} \in (0, +\infty] .
\end{align} 
We will see that the Loewner chain is generated by a continuous curve $\eta$ up to the swallowing time $T$, and it terminates on the real line at time $T$, that is, $\eta(T) \in \R$. 
However, the continuity of the curve as it approaches the swallowing time is difficult to prove in general. 
The main purpose of this section is to analyze the behavior of the Loewner chain when approaching the swallowing time and to prove the continuity 
\textit{up to and including $T$}. 
The main result of this section is Theorem~\ref{thm::loewner_Ztotal_continuity} stated below, which we prove in Section~\ref{subsec::TheoremProof}.
Proposition~\ref{prop::Z_total_ineq} plays an important role in the proof.

\smallbreak

To begin, we fix notation to be used throughout.
If $(\Omega; x_1, \ldots, x_{2N})$ is a polygon and $\gamma$ a simple curve from $x_a$ to $x_b$ 
that only touches $\partial \Omega$ at its endpoints, we denote by $D_{\gamma}^R$ (resp.~$D_{\gamma}^L$)
the connected component of $\Omega\setminus\gamma$ having the counterclockwise boundary arc $(x_{a+1} \, x_{b-1})$ 
(resp.~$(x_{b+1} \, x_{a-1})$) on its boundary. Also, for each $j \in \{1,\ldots,2N\}$, 
we denote by $I_j := \{\ldots, j-3, j-1, j+1, j+3, \ldots\}$ the set of indices in $\{1,\ldots,2N\}$ that have different parity than $j$.

\begin{theorem} \label{thm::loewner_Ztotal_continuity}
Fix $\kappa=3$, $N \geq 1$, and $j \in \{1,2,\ldots,2N\}$. 
The Loewner chain associated to $\PartF_{\Ising}$ with launching points $(x_1, \ldots, x_{2N})$ starting from $x_j$
is almost surely generated by a continuous simple curve
$(\eta(t), 0\le t\le T)$ up to and including $T$. 
This curve almost surely terminates at one of the points 
$\{x_k \colon k \in I_j\}$ and touches $\R$ only at its two endpoints. 
Furthermore, for any $k \in I_j$, the probability of 
terminating at $x_k$ is given by 
\begin{align}\label{eqn::Loewner_Ztotal_proba}
\PP[\eta(T)=x_k] 
= \sum_{\alpha\in\LP_N} \frac{\PartF_{\alpha}(x_1, \ldots, x_{2N})}{\PartF_{\Ising}^{(N)}(x_1,\ldots, x_{2N})} \, \one \{ \link{j}{k} \in \alpha\} . 
\end{align}
Moreover, conditionally on the event $\{\eta(T)=x_k\}$, the law of $\eta$ is that of the $\SLE_3$ curve $\gamma$ in 
$(\HH;x_j,x_k)$ 
weighted by the Radon-Nikodym derivative
\begin{align}\label{eqn::Loewner_Ztotal_RN}
\frac{\PartF_{\Ising}^{(\ell)}(D_{\gamma}^R; x_{j+1}, \ldots, x_{k-1})\times\PartF_{\Ising}^{(N-\ell-1)}(D_{\gamma}^L; x_{k+1}, \ldots, x_{j-1})}{|x_j-x_k| \; \underset{\alpha\in\LP_N}{\sum} \PartF_{\alpha}(x_1, \ldots, x_{2N}) 
\, \one \{ \link{j}{k} \in \alpha \} } ,
\end{align}
where $\ell = \ell(j,k) = \frac{1}{2}(|j - k| - 1)$. 
\end{theorem}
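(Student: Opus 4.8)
The plan is to study the local martingale
\[
M_t := \prod_{i \neq j} g_t'(x_i)^{1/2} \, \PartF^{(N)}_{\Ising}\bigl(g_t(x_1), \ldots, g_t(x_{j-1}), W_t, g_t(x_{j+1}), \ldots, g_t(x_{2N})\bigr),
\]
with respect to which the law $\PP$ of the Loewner chain $\eta$ is that of the $\SLE_3$ in $\HH$ from $x_j$ to $\infty$ reweighted by $M_t/M_0$, and then to read off its behaviour as $t\uparrow T$. First, since every pure partition function $\PartF_\alpha$ solves~\eqref{eq: multiple SLE PDEs}--\eqref{eq: multiple SLE Mobius covariance}, one decomposes $M_t = \sum_{\alpha\in\LP_N} M_t^{\alpha}$ with each $M_t^{\alpha} := \prod_{i\neq j} g_t'(x_i)^{1/2}\,\PartF_\alpha(\ldots,W_t,\ldots)$ a non-negative local martingale; the ratios $N_t^\alpha := M_t^\alpha/M_t = \PartF_\alpha/\PartF^{(N)}_{\Ising}$ then lie in $(0,1)$ and sum to $1$ by~\eqref{eq::TotalPartPos}. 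On $[0,T)$ the measure $\PP$ is locally absolutely continuous with respect to the $\SLE_3$ law, so $\PP$-a.s.\ $\eta$ is a continuous simple curve meeting $\R$ only at $x_j$, and each $N_t^\alpha$ is a bounded $\PP$-martingale, hence converges $\PP$-a.s.\ and in $L^1$ as $t\uparrow T$.

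The heart of the argument is continuity up to and including $T$ together with the identification of $\eta(T)$. I would show that, $\PP$-a.s., $W_t$ and all the $g_t(x_i)$ converge as $t\uparrow T$ and that there is an index $k$ with $k-j$ odd (so $k\in I_j$) for which $|g_t(x_i)-W_t|\to 0$ precisely for $i=k$ and for the $2\ell$ indices strictly between $j$ and $k$, where $\ell = \tfrac12(\max(j,k)-\min(j,k)-1)$. The engine of this step is the two-sided bound of Proposition~\ref{prop::Z_total_ineq}: it sandwiches $M_t$, up to constants depending only on $N$, between multiples of $\prod_{i\neq j}g_t'(x_i)^{1/2}\,\LB^{(N)}(\ldots,W_t,\ldots)$, whose dependence on the colliding variables factors --- using the cascade identity of Lemma~\ref{lem::b_total_asy_refined}, the comparison formula of Lemma~\ref{lem: compare Bs}, and the cross-ratio estimate~\eqref{eqn::crossratio} --- into the single factor $g_t'(x_k)^{1/2}(g_t(x_k)-W_t)^{-1}$, which is exactly the Radon--Nikodym density turning the $\SLE_3$ from $x_j$ to $\infty$ into the $\SLE_3$ from $x_j$ to $x_k$, times quantities that remain bounded away from $0$ and $\infty$ (the partition functions $\PartF^{(\ell)}_{\Ising}$ and $\PartF^{(N-\ell-1)}_{\Ising}$ evaluated on variables that stay separated, controlled by Propositions~\ref{prop::Z_total_asy_refined} and~\ref{prop::Z_total_ineq}). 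Consequently, near $T$ the curve $\eta$ is absolutely continuous --- with density bounded above and below --- with respect to an $\SLE_3$ aimed at $x_k$, which is a.s.\ continuous and terminates at $x_k$; this propagates continuity to $t=T$, gives $\eta(T)=x_k$, and shows $\eta$ is simple and meets $\R$ only at $x_j$ and $x_k$. To rule out the degenerate alternatives --- $W_t$ oscillating, $\eta$ swallowing a point of the same parity as $j$ (equivalently terminating at a non-marked boundary point, i.e.\ closing off a bubble), or several disjoint blocks of points colliding --- I would combine the convergence of the non-negative martingales $M_t$ and $N_t^\alpha$ with Proposition~\ref{prop::mart_wrong_zero} (and its relabelings): if $\eta$ terminated at $x_k$, then $N_t^\alpha\to 0$ for every $\alpha$ with $\link{j}{k}\notin\alpha$, forcing $\sum_{\alpha\colon\link{j}{k}\in\alpha}N_t^\alpha\to 1$, which is impossible unless some planar link pattern pairs $j$ with $k$, i.e.\ unless $k\in I_j$. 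I expect this to be the main obstacle of the proof; Proposition~\ref{prop::Z_total_ineq} is precisely what reduces it to a comparison with a single $\SLE_3$.

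Granting continuity, the probability formula~\eqref{eqn::Loewner_Ztotal_proba} follows by optional stopping. On $\{\eta(T)=x_k\}$ the previous step gives $\sum_{\alpha\colon\link{j}{k}\in\alpha}N_t^\alpha\to 1$, while on $\{\eta(T)\ne x_k\}$ each $\alpha$ with $\link{j}{k}\in\alpha$ fails to pair $j$ with the actual terminal point, so $N_t^\alpha\to 0$ by Proposition~\ref{prop::mart_wrong_zero} and hence $\sum_{\alpha\colon\link{j}{k}\in\alpha}N_t^\alpha\to 0$. Therefore $\one_{\{\eta(T)=x_k\}} = \lim_{t\uparrow T}\sum_{\alpha\colon\link{j}{k}\in\alpha}N_t^\alpha$ $\PP$-a.s., and since each $N_t^\alpha$ is a bounded martingale with $N_0^\alpha = \PartF_\alpha(x_1,\ldots,x_{2N})/\PartF^{(N)}_{\Ising}(x_1,\ldots,x_{2N})$, taking expectations yields $\PP[\eta(T)=x_k] = \sum_{\alpha\colon\link{j}{k}\in\alpha}\PartF_\alpha/\PartF^{(N)}_{\Ising}$; summing over $k\in I_j$ returns $1$, consistently with $\sum_\alpha N_0^\alpha = 1$.

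Finally, for the conditional law~\eqref{eqn::Loewner_Ztotal_RN}: let $\gamma$ be the $\SLE_3$ in $\HH$ from $x_j$ to $x_k$, whose law is that of the $\SLE_3$ from $x_j$ to $\infty$ reweighted by $D_t/D_0$, where $D_t := g_t'(x_k)^{1/2}(g_t(x_k)-W_t)^{-1}$ and $D_0 = |x_j-x_k|^{-1}$. On $[0,T)$ the density of $\PP$ with respect to the law of $\gamma$ equals $(D_0/M_0)\,M_t/D_t$, and by the cascade asymptotics of Proposition~\ref{prop::Z_total_asy_refined} together with conformal covariance~\eqref{eq:conformal_image}, as $t\uparrow T$ the ratio $M_t/D_t$ converges to $\PartF^{(\ell)}_{\Ising}(D_{\gamma}^R; x_{j+1},\ldots,x_{k-1})\,\PartF^{(N-\ell-1)}_{\Ising}(D_{\gamma}^L; x_{k+1},\ldots,x_{j-1})$, the product of the Ising partition functions of the two components cut off by $\gamma$. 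Dividing this limiting density by the probability $\PP[\eta(T)=x_k]$ from~\eqref{eqn::Loewner_Ztotal_proba} --- to pass from $\PP$ restricted to $\{\eta(T)=x_k\}$ to the conditional law --- and using $M_0 = \PartF^{(N)}_{\Ising}(x_1,\ldots,x_{2N})$ produces exactly the Radon--Nikodym derivative in~\eqref{eqn::Loewner_Ztotal_RN}. Its boundedness --- once more from Proposition~\ref{prop::Z_total_ineq}, the monotonicity~\eqref{eqn::poissonkernel_mono} of the boundary Poisson kernel, and~\eqref{eqn::crossratio} --- guarantees that the conditioned curve inherits from $\gamma$ the property of being a continuous simple curve meeting $\R$ only at $x_j$ and $x_k$.
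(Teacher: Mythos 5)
The parts of your outline that derive the hitting probability~\eqref{eqn::Loewner_Ztotal_proba} and the conditional law~\eqref{eqn::Loewner_Ztotal_RN} \emph{from} the continuity statement are sound: given that $\eta$ is a.s.\ continuous up to $T$ and terminates at a marked point of opposite parity, the bounded martingales $N_t^\alpha=\PartF_\alpha/\PartF^{(N)}_{\Ising}$ together with Proposition~\ref{prop::mart_wrong_zero} do yield~\eqref{eqn::Loewner_Ztotal_proba} by optional stopping (this is in fact how the paper later proves Theorem~\ref{thm::ising_crossingproba} in Section~\ref{ProofSec}, whereas for Theorem~\ref{thm::loewner_Ztotal_continuity} itself the paper routes the probability formula through the cascade relation of Lemma~\ref{lem::Ztotal_CAS}). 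The genuine gap is in the step you yourself flag as the heart of the matter. Your engine for continuity and identification of $\eta(T)$ is a comparison with the $\SLE_3$ from $x_j$ to $x_k$ with two-sided bounded density, but this is circular: you cannot choose the target $x_k$ before knowing where the curve terminates, and the claimed bound on $M_t/D_t$ near $T$ is precisely what needs proof. Proposition~\ref{prop::Z_total_ineq} sandwiches $M_t$ by $\LB^{(N)}$-quantities, but the residual cross-ratio products appearing in Lemma~\ref{lem: compare Bs} are controlled from above by~\eqref{eqn::crossratio} only, not from below, and the cascade asymptotics of Proposition~\ref{prop::Z_total_asy_refined} can be invoked only once one already knows that $W_t$ and the swallowed $g_t(x_i)$ converge to a common point. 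Moreover, Proposition~\ref{prop::mart_wrong_zero} is a statement about curves that \emph{do} terminate at a marked point $x_{2n}$; it cannot rule out the degenerate scenarios (oscillating driving function, accumulation at a non-marked boundary point, several separated blocks of points being swallowed) that your argument must exclude.

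The paper closes this gap with an induction on $N$ that your proposal lacks. Taking $j=1$, on the event that $\eta$ accumulates in $(x_1,x_{2N-1})$, Lemma~\ref{lem::Ztotal_continuity_ind} compares $\eta$ not with a target $\SLE_3$ but with the Loewner chain $\hat\eta$ associated to $\PartF^{(N-1)}_{\Ising}(x_1,\ldots,x_{2N-2})$, i.e.\ with the two \emph{far-away} marked points deleted. The Radon--Nikodym derivative $R_t$ then involves only those two points, and Proposition~\ref{prop::Z_total_ineq} together with~\eqref{eqn::btotal_compare} and~\eqref{eqn::poissonkernel_mono} shows it is bounded up to the stopping time at which $g_t(x_{2N-1})-g_t(x_{2N-2})\le\eps$; continuity up to and including $T$, termination at a marked point of opposite parity, and the exclusion of all degenerate behaviour are then inherited from the induction hypothesis on $\hat\eta$, and the remaining boundary arc is handled by a M\"obius transformation reducing it to the chain started from $x_3$. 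The probability formula and the conditional law are likewise propagated through the induction via Lemma~\ref{lem::Ztotal_CAS}. Without some such reduction to fewer marked points, the continuity step in your outline does not go through.
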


The outline of this section is as follows. 
In Section~\ref{subsec:cascade}, we decompose the symmetric partition function $\PartF_{\Ising}$ into two parts: one for the relevant left component $D_{\gamma}^L$ and one for the relevant right component $D_{\gamma}^R$ of $\Omega\setminus\gamma$ (see Lemma~\ref{lem::Ztotal_CAS}).
This will be used to find the Radon-Nikodym derivative~\eqref{eqn::Loewner_Ztotal_RN}.
In Section~\ref{subsec::TheoremProof}, we use several lemmas to prove 
Theorem~\ref{thm::loewner_Ztotal_continuity}:
Lemmas~\ref{lem::Ztotal_continuity_ind}--\ref{lem::Ztotal_RN_ind} treat the event that the curve $\eta$ accumulates in the open interval $(x_1, x_{2N-1})$, and  
Lemma~\ref{lem::Ztotal_continuity_ind2} the event that $\eta$ accumulates in $ [x_{2N-1}, \infty) \cup (-\infty, x_1)$. 
Corollary~\ref{cor::Ztotal_continuity_ind2} summarizes the induction step, which concludes the proof of Theorem~\ref{thm::loewner_Ztotal_continuity}.

\subsection{Cascade Relation for Partition Functions}
\label{subsec:cascade}

Fix a nice polygon $(\Omega; x_1, \ldots, x_{2N})$ and $\alpha \in \LP_N$.
Given any link $\link{a}{b} \in \alpha$, let $\gamma$ be the $\SLE_{3}$ curve in $\Omega$ from $x_{a}$ to $x_{b}$,
and assume that $a < b$ for notational simplicity.
Then, the link $\link{a}{b}$ divides the link pattern $\alpha$ into two sub-link patterns, connecting respectively the points
$\{a+1, \ldots, b-1\}$ and $\{b+1, \ldots, a-1\}$. After relabeling of the indices, we denote these two link patterns by $\alpha^R$ 
and $\alpha^L$. Then, we have the following cascade relation for the pure partition functions \cite[Proposition~3.5]{PeltolaWuGlobalMultipleSLEs}: 
\begin{align} 
\label{eqn::purepartition_CAS}
\PartF_{\alpha}(\Omega; x_1, \ldots, x_{2N}) 
= H_{\Omega}( x_{a}, x_{b})^{1/2} \, 
\E\big[\PartF_{\alpha^R}(D_{\gamma}^R; x_{a+1}, \ldots, x_{b-1})\times\PartF_{\alpha^L}(D_{\gamma}^L; x_{b+1},\ldots, x_{a-1})\big] .
\end{align}
As a consequence, we obtain a similar cascade relation for the symmetric partition function $\PartF_{\Ising}$. 

\begin{lemma} \label{lem::Ztotal_CAS}
Let $(\Omega; x_1, \ldots, x_{2N})$ be a nice polygon.
Fix $a,b \in \{1,2, \ldots, 2N\}$
with different parities, and let $\gamma$ be the $\SLE_3$ curve in $\Omega$ from $x_a$ to $x_b$. 
Then, we have 
\begin{align}
\label{eqn::Ztotal_CAS}
\begin{split}
\; & \sum_{\alpha\in\LP_N}\PartF_{\alpha}(\Omega; x_1, \ldots, x_{2N}) 
\, \one \{ \link{a}{b} \in \alpha \} \\
= \; & H_{\Omega}(x_a, x_b)^{1/2} \; \E\big[\PartF_{\Ising}^{(\ell)}(D_{\gamma}^R; x_{a+1}, \ldots, x_{b-1})\times\PartF_{\Ising}^{(N-\ell-1)}(D_{\gamma}^L; x_{b+1}, \ldots, x_{a-1})\big],
\end{split}
\end{align}
where $\ell = \ell(a,b) = \frac{1}{2}(|a - b| - 1)$.  
\end{lemma}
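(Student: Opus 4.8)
The plan is to obtain~\eqref{eqn::Ztotal_CAS} by summing the cascade relation~\eqref{eqn::purepartition_CAS} for the pure partition functions over all link patterns containing the link $\link{a}{b}$. Assume first $a < b$, and set $\ell = \frac{1}{2}(b-a-1)$, so that there are exactly $2\ell$ marked points strictly between $x_a$ and $x_b$ and $2(N-\ell-1)$ strictly outside. For each $\alpha \in \LP_N$ with $\link{a}{b} \in \alpha$, the link $\link{a}{b}$ splits $\alpha$ into sub-link patterns $\alpha^R \in \LP_{\ell}$ and $\alpha^L \in \LP_{N-\ell-1}$ on the points $\{a+1,\ldots,b-1\}$ and $\{b+1,\ldots,a-1\}$ respectively, and~\eqref{eqn::purepartition_CAS} reads
\[
\PartF_{\alpha}(\Omega; x_1,\ldots,x_{2N}) = H_{\Omega}(x_a,x_b)^{1/2}\, \E\left[ \PartF_{\alpha^R}(D_{\gamma}^{R}; x_{a+1},\ldots,x_{b-1}) \, \PartF_{\alpha^L}(D_{\gamma}^{L}; x_{b+1},\ldots,x_{a-1}) \right].
\]
Summing over $\{\alpha \in \LP_N : \link{a}{b} \in \alpha\}$ and pulling the $\alpha$-independent Poisson-kernel prefactor and the expectation (a finite sum) outside, the claim reduces to identifying $\sum_{\alpha : \link{a}{b} \in \alpha} \PartF_{\alpha^R}(D_{\gamma}^{R}; \,\cdot\,) \, \PartF_{\alpha^L}(D_{\gamma}^{L}; \,\cdot\,)$ with the product $\PartF_{\Ising}^{(\ell)}(D_{\gamma}^{R}; \,\cdot\,) \, \PartF_{\Ising}^{(N-\ell-1)}(D_{\gamma}^{L}; \,\cdot\,)$.

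For this, the key combinatorial observation is that $\alpha \mapsto (\alpha^R, \alpha^L)$ is a bijection from $\{\alpha \in \LP_N : \link{a}{b} \in \alpha\}$ onto $\LP_{\ell} \times \LP_{N-\ell-1}$: in a non-crossing pair partition, the presence of the link $\link{a}{b}$ forces the points strictly inside the arc to be paired planarly among themselves, and likewise for the points strictly outside, with no interaction between the two sides, while conversely any such pair of planar patterns reassembles into a unique $\alpha$. Consequently,
\[
\sum_{\substack{\alpha \in \LP_N \\ \link{a}{b} \in \alpha}} \PartF_{\alpha^R}(D_{\gamma}^{R}; \,\cdot\,) \, \PartF_{\alpha^L}(D_{\gamma}^{L}; \,\cdot\,) = \Bigg( \sum_{\beta \in \LP_{\ell}} \PartF_{\beta}(D_{\gamma}^{R}; \,\cdot\,) \Bigg) \Bigg( \sum_{\beta' \in \LP_{N-\ell-1}} \PartF_{\beta'}(D_{\gamma}^{L}; \,\cdot\,) \Bigg),
\]
and the two factors on the right are, by the definition~\eqref{eqn::zsymmetric_def}, equal to $\PartF_{\Ising}^{(\ell)}(D_{\gamma}^{R}; \,\cdot\,)$ and $\PartF_{\Ising}^{(N-\ell-1)}(D_{\gamma}^{L}; \,\cdot\,)$, with the convention $\PartF_{\emptyset} = 1$ handling the degenerate cases $\ell = 0$ or $N - \ell - 1 = 0$. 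Combining the last two displays yields~\eqref{eqn::Ztotal_CAS} in the case $a < b$.

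The general case $a > b$ reduces to the one just treated by a cyclic relabeling of the marked points that sends the pair $\{a, b\}$ to a pair $\{a', b'\}$ with $a' < b'$ (still of different parities); both sides of~\eqref{eqn::Ztotal_CAS}, the domains $D_{\gamma}^{R}, D_{\gamma}^{L}$, and the index $\ell$ all transform consistently under such a relabeling, so no new input is needed. Beyond the quoted cascade relation~\eqref{eqn::purepartition_CAS}, the argument has no analytic content; the only point requiring a little care — and the closest thing to an obstacle — is setting up the combinatorial bijection $\alpha \mapsto (\alpha^R, \alpha^L)$ cleanly, together with the bookkeeping of the degenerate cases and of the $a > b$ reduction, which is where I expect the bulk of the (elementary) work to lie.
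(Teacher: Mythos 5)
Your proposal is correct and is essentially identical to the paper's proof: both sum the cascade relation~\eqref{eqn::purepartition_CAS} over all $\alpha$ containing $\link{a}{b}$, use the bijection $\alpha \mapsto (\alpha^R,\alpha^L)$ to factor the resulting sum into $\PartF_{\Ising}^{(\ell)} \times \PartF_{\Ising}^{(N-\ell-1)}$, and dispose of the case $a>b$ by a harmless relabeling ("without loss of generality" in the paper). The extra care you devote to the bijection and the degenerate cases is fine but not a point of divergence.
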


Notice that for fixed $(\Omega; x_1, \ldots, x_{2N})$, the partition functions $\PartF_{\Ising}^{(\ell)}$ and $\PartF_{\Ising}^{(N-\ell-1)}$ in~\eqref{eqn::Ztotal_CAS} 
are bounded uniformly in the random domains $D_{\gamma}^R$ and $D_{\gamma}^L$, respectively:  
Proposition~\ref{prop::Z_total_ineq} and Remark~\ref{rem::btotal_mono} give 
\begin{align*}
\PartF_{\Ising}^{(\ell)}(D_{\gamma}^R; x_{a+1}, \ldots, x_{b-1}) 
\le \; & (2\ell-1)!! \; \LB^{(\ell)}(D_{\gamma}^R; x_{a+1}, \ldots, x_{b-1}) 
&&\text{[by~\eqref{eqn::Z_total_ineq}]} \\
\le \; & ((2\ell-1)!!)^3 \; \LB^{(\ell)}(\Omega;x_{a+1}, \ldots, x_{b-1}) , 
&&\text{[by~\eqref{eqn::btotal_mono}]} \\[1em]
\PartF_{\Ising}^{(N-\ell-1)}(D_{\gamma}^L; x_{b+1}, \ldots, x_{a-1})
\le \; & (2N-2\ell-3)!! \; \LB^{(N-\ell-1)}(D_{\gamma}^L; x_{b+1}, \ldots, x_{a-1}) 
&&\text{[by~\eqref{eqn::Z_total_ineq}]} \\
\le \; & ((2N-2\ell-3)!!)^3 \; \LB^{(N-\ell-1)}(\Omega;x_{b+1}, \ldots, x_{a-1}) .
&&\text{[by~\eqref{eqn::btotal_mono}]}
\end{align*}

\begin{proof}[Proof of Lemma~\ref{lem::Ztotal_CAS}]
Without loss of generality, we may assume that $a < b$.
The identity~\eqref{eqn::Ztotal_CAS} then follows by summing over all possible $\alpha^L$ and $\alpha^R$
in the cascade relation~\eqref{eqn::purepartition_CAS}:
\begin{align*}
& \;  \sum_{\alpha\in\LP_N}\PartF_{\alpha}(\Omega; x_1, \ldots, x_{2N})\, \one \{ \link{a}{b} \in \alpha \} \\
= & \; H_{\Omega}(x_a, x_b)^{1/2} \;
 \E \bigg[\bigg(\sum_{\alpha^R\in\LP_{\ell}}\PartF_{\alpha^R}(D_{\gamma}^R; x_{a+1}, \ldots, x_{b-1})\bigg)  \bigg(\sum_{\alpha^L\in\LP_{N-\ell-1}}\PartF_{\alpha^L}(D_{\gamma}^L; x_{b+1},\ldots, x_{a-1})\bigg)\bigg] \\
= & \; H_{\Omega}(x_a, x_b)^{1/2} \; \E\big[\PartF_{\Ising}^{(\ell)}(D_{\gamma}^R; x_{a+1}, \ldots, x_{b-1})\times\PartF_{\Ising}^{(N-\ell-1)}(D_{\gamma}^L; x_{b+1}, \ldots, x_{a-1})\big] .
\qedhere
\end{align*} 
\end{proof}

\subsection{Continuity of the Loewner Chain --- Proof of Theorem~\ref{thm::loewner_Ztotal_continuity}}
\label{subsec::TheoremProof}

We prove Theorem~\ref{thm::loewner_Ztotal_continuity} by induction on $N \geq 1$. 
There is nothing to prove in the initial case $N=1$, so we assume that $N \geq 2$ and that 
Theorem~\ref{thm::loewner_Ztotal_continuity} holds for all $j \in \{1,2,\ldots,2(N-1)\}$. 
For definiteness, we also assume that $j=1$ and prove Theorem~\ref{thm::loewner_Ztotal_continuity} for the
Loewner chain associated to $\PartF_{\Ising}$ with launching points $(x_1, \ldots, x_{2N})$ starting from $x_1$.
We denote by $(g_t, t \geq 0)$ the corresponding conformal maps, and we denote the Loewner chain by $\eta$ (with suggestive notation anticipating that it will be generated by a curve).
We break the proof into several lemmas, each addressing a part of the claim. 

\begin{lemma}\label{lem::Ztotal_continuity_ind}
Assume that Theorem~\ref{thm::loewner_Ztotal_continuity} holds for $N-1$. 
Define $\LE_{2N-1}$ to be the event that the
Loewner chain $\eta$ associated to $\PartF_{\Ising}$
with launching points $(x_1, \ldots, x_{2N})$ starting from $x_1$ accumulates in the open interval $(x_1, x_{2N-1})$. 
Then, on the event $\LE_{2N-1}$, the Loewner chain $\eta$ is almost surely generated by 
a continuous curve up to and including $T$.
This curve almost surely terminates at one of the points $\{x_2, x_4, \ldots, x_{2N-2}\}$ and touches $\R$ only at its two endpoints. 
\end{lemma}

\begin{proof}
Let $\hat{\eta}$ be the Loewner chain associated to $\PartF_{\Ising}^{(N-1)}$ with launching points $(x_1, \ldots, x_{2N-2})$ starting from $x_1$. 
From the discussion after~\eqref{eqn::loewnerchain_partition}, we see that 
the law of $\eta$ is the same as the law of $\hat{\eta}$ tilted by the following local martingale, for small enough $t$: 
\begin{align} \label{eq::MN}
R_t
:= g_t'(x_{2N-1})^{1/2}g_t'(x_{2N})^{1/2}
\; \frac{\PartF_{\Ising}^{(N)}(W_t,g_t(x_2), \ldots, g_t(x_{2N}))}{\PartF_{\Ising}^{(N-1)}(W_t,g_t(x_2), \ldots, g_t(x_{2N-2}))}.
\end{align}
\begin{itemize}
\item By the bounds~\eqref{eqn::Z_total_ineq} and~\eqref{eqn::btotal_compare}, we have 
\begin{align*}
R_t 
\le & \; ((2N-1)!!)^2 \; g_t'(x_{2N-1})^{1/2} g_t'(x_{2N})^{1/2} \; 
\frac{\LB^{(N)}(W_t,g_t(x_2), \ldots, g_t(x_{2N}))}{\LB^{(N-1)}(W_t,g_t(x_2), \ldots, g_t(x_{2N-2}))}
&& \text{[by~\eqref{eqn::Z_total_ineq}]} \\
\le & \; ((2N-1)!!)^2 \; \bigg(\frac{g_t'(x_{2N-1})^{1/2} g_t'(x_{2N})^{1/2}}{g_t(x_{2N})-g_t(x_{2N-1})} \bigg) 
\bigg(\frac{g_t(x_{2N})-g_t(x_{2N-2})}{g_t(x_{2N-1})-g_t(x_{2N-2})}\bigg) .
&& \text{[by~\eqref{eqn::btotal_compare}]} 
\end{align*}
\item By the monotonicity property~\eqref{eqn::poissonkernel_mono}, we have
\begin{align*}
\frac{g_t'(x_{2N-1})^{1/2}g_t'(x_{2N})^{1/2}}{g_t(x_{2N})-g_t(x_{2N-1})}\le \frac{1}{x_{2N}-x_{2N-1}} .
\end{align*}
\item The SDEs~\eqref{eqn::loewnerchain_partition} show that $g_t(x_{2N})-g_t(x_{2N-2})$ is decreasing in $t$, so 
\begin{align*}
g_t(x_{2N})-g_t(x_{2N-2})\le x_{2N}-x_{2N-2} .
\end{align*}
\end{itemize}
In conclusion, we obtain the following bound for $R$:
\begin{align*}
0 < R_t
\le ((2N-1)!!)^2 \bigg(\frac{x_{2N}-x_{2N-2}}{x_{2N}-x_{2N-1}}\bigg) \bigg(\frac{1}{g_t(x_{2N-1})-g_t(x_{2N-2})}\bigg) .
\end{align*}
From this, we see that for any $\eps>0$, the local martingale $R_t$
is bounded (and hence a true martingale) up to the stopping time
\begin{align} \label{def::induction_stoppingtime}
S_{\eps} := \inf\{t > 0 \colon g_t(x_{2N-1})-g_t(x_{2N-2})\le \eps\}.
\end{align}
Hence, for any $\eps>0$, the law of $\eta$ is absolutely continuous with respect to the law of $\hat{\eta}$ up to time~$S_{\eps}$. 
Therefore, on the event $\LE_{2N-1}$ that $\eta$ accumulates in the interval $(x_1, x_{2N-1})$, the law of $\eta$ is absolutely continuous 
with respect to the law of $\hat{\eta}$. 
Thus, the claim follows by the induction hypothesis on $\hat{\eta}$.
\end{proof}

\begin{lemma}\label{lem::Ztotal_proba_ind}
Assume the same setup as in Lemma~\ref{lem::Ztotal_continuity_ind}. 
Then, for any $\ell \in \{1,2,\ldots, N-1\}$, we have
\begin{align}\label{eqn::Ztotal_proba_ind}
\PP[\eta(T)=x_{2\ell}]=\sum_{\alpha\in\LP_N}\frac{\PartF_{\alpha}(x_1, \ldots, x_{2N})}{\PartF_{\Ising}^{(N)}(x_1,\ldots, x_{2N})} 
\, \one \{ \link{1}{2\ell} \in \alpha \}  .
\end{align}
\end{lemma}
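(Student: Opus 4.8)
The plan is to carry out one step of the induction behind Theorem~\ref{thm::loewner_Ztotal_continuity}: assuming that theorem for $N-1$, I would compare $\eta$ (with $j=1$) to the Loewner chain $\hat\eta$ associated to $\PartF_{\Ising}^{(N-1)}(x_1,\ldots,x_{2N-2})$ started from $x_1$, using the absolute continuity on the event $\LE_{2N-1}$ from Lemma~\ref{lem::Ztotal_continuity_ind}, the conditional laws of $\hat\eta$ supplied by the induction hypothesis, and the cascade identity of Lemma~\ref{lem::Ztotal_CAS}. First, since $x_{2n}\in(x_1,x_{2N-1})$, the event $\{\eta(T)=x_{2n}\}$ lies inside $\LE_{2N-1}$, so it suffices to work there. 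On $\LE_{2N-1}$ the local martingale $R_t$ of~\eqref{eq::MN} is bounded --- this is exactly the estimate from the proof of Lemma~\ref{lem::Ztotal_continuity_ind}, where $g_t(x_{2N-1})-g_t(x_{2N-2})$ stays bounded away from $0$ up to $T$ --- so the reweighting $\ud\PP_\eta/\ud\PP_{\hat\eta}=R_t/R_0$ extends up to and including $T$ on that event, with $R_0=\PartF_{\Ising}^{(N)}(x_1,\ldots,x_{2N})/\PartF_{\Ising}^{(N-1)}(x_1,\ldots,x_{2N-2})$ and with $\one_{\{\eta(T)=x_{2n}\}}=\one_{\{\hat\eta(T)=x_{2n}\}}$ on $\LE_{2N-1}$. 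Hence
\[\PP[\eta(T)=x_{2n}]=\frac{\PartF_{\Ising}^{(N-1)}(x_1,\ldots,x_{2N-2})}{\PartF_{\Ising}^{(N)}(x_1,\ldots,x_{2N})}\;\E_{\hat\eta}\!\left[R_T\,\one_{\{\hat\eta(T)=x_{2n}\}}\right].\]

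Next, I would evaluate the boundary value $R_T$ on $\{\hat\eta(T)=x_{2n}\}$. By the induction hypothesis $\hat\eta$ is then (conditionally) a continuous simple curve from $x_1$ to $x_{2n}$ meeting $\R$ only at its endpoints, so as $t\to T$ the first $2n$ arguments $W_t,g_t(x_2),\ldots,g_t(x_{2n})$ all converge to $W_T$, while $g_t(x_{2n+1}),\ldots,g_t(x_{2N})$ converge to distinct limits and $g_t'(x_{2N-1}),g_t'(x_{2N})$ converge. Writing the ratio $\PartF_{\Ising}^{(N)}/\PartF_{\Ising}^{(N-1)}$ in~\eqref{eq::MN} as the ratio of $\PartF_{\Ising}^{(N)}/\PartF_{\Ising}^{(n)}$ to $\PartF_{\Ising}^{(N-1)}/\PartF_{\Ising}^{(n)}$ with the common factor $\PartF_{\Ising}^{(n)}(W_t,g_t(x_2),\ldots,g_t(x_{2n}))$, Proposition~\ref{prop::Z_total_asy_refined} applies to each of the two ratios; combining this with conformal covariance~\eqref{eq:conformal_image} in the domain $D_{\hat\eta}^L$, the prefactors $g_t'(x_{2N-1})^{1/2}g_t'(x_{2N})^{1/2}$ cancel and one obtains
\[R_T=\frac{\PartF_{\Ising}^{(N-n)}(D_{\hat\eta}^L;x_{2n+1},\ldots,x_{2N})}{\PartF_{\Ising}^{(N-n-1)}(D_{\hat\eta}^L;x_{2n+1},\ldots,x_{2N-2})},\]
a quantity depending only on the curve $\hat\eta$.

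Finally, I would insert the induction hypothesis for $\hat\eta$ into $\E_{\hat\eta}[R_T\,\one_{\{\hat\eta(T)=x_{2n}\}}]$ --- both the formula for $\PP[\hat\eta(T)=x_{2n}]$ and, on that event, the conditional law of $\hat\eta$, which is that of an $\SLE_3$ curve $\gamma$ in $\HH$ from $x_1$ to $x_{2n}$ reweighted as in~\eqref{eqn::Loewner_Ztotal_RN} with $\ell=n-1$. In the resulting expectation over $\gamma$, the factor $\PartF_{\Ising}^{(N-n-1)}(D_\gamma^L;x_{2n+1},\ldots,x_{2N-2})$ in the Radon--Nikodym derivative cancels the denominator of $R_T$ (note $D_{\hat\eta}^L=D_\gamma^L$ here), and the normalizing sum $\sum_{\beta\in\LP_{N-1}}\PartF_\beta(x_1,\ldots,x_{2N-2})\one_{\{\link{1}{2n}\in\beta\}}$ cancels the prefactor $\PP[\hat\eta(T)=x_{2n}]$, leaving
\[\E_{\hat\eta}\!\left[R_T\,\one_{\{\hat\eta(T)=x_{2n}\}}\right]=\frac{\E_\gamma\!\left[\PartF_{\Ising}^{(n-1)}(D_\gamma^R;x_2,\ldots,x_{2n-1})\,\PartF_{\Ising}^{(N-n)}(D_\gamma^L;x_{2n+1},\ldots,x_{2N})\right]}{|x_1-x_{2n}|\;\PartF_{\Ising}^{(N-1)}(x_1,\ldots,x_{2N-2})}.\]
The cascade identity Lemma~\ref{lem::Ztotal_CAS} with $(\Omega;a,b)=(\HH;1,2n)$, $\ell=n-1$, and $H_{\HH}(x_1,x_{2n})^{1/2}=|x_1-x_{2n}|^{-1}$ identifies the numerator expectation as $|x_1-x_{2n}|\sum_{\alpha\in\LP_N}\PartF_\alpha(x_1,\ldots,x_{2N})\one_{\{\link{1}{2n}\in\alpha\}}$; substituting this and cancelling $\PartF_{\Ising}^{(N-1)}(x_1,\ldots,x_{2N-2})$ and $|x_1-x_{2n}|$ gives precisely~\eqref{eqn::Ztotal_proba_ind}.

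The step I expect to be the main obstacle is the reduction to the $\hat\eta$-expectation together with the evaluation of $R_T$: one must check that the change of measure genuinely survives up to and \emph{including} the swallowing time $T$ --- not merely up to the auxiliary stopping times $S_\eps$ of Lemma~\ref{lem::Ztotal_continuity_ind} --- and that the limit $R_T$ is as claimed. Both hinge on the uniform boundedness of $R_t$ on $\LE_{2N-1}$ and on knowing, from the induction hypothesis, the precise collision pattern of the $g_t$-images as $t\to T$, which is what makes Proposition~\ref{prop::Z_total_asy_refined} applicable; one also has to be careful that the domain $D_{\hat\eta}^L=D_\gamma^L$ entering $R_T$ is literally the one appearing in~\eqref{eqn::Loewner_Ztotal_RN}, so that the $\PartF_{\Ising}^{(N-n-1)}(D_\gamma^L;\cdot)$ factors really do cancel. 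The rest is routine bookkeeping with the cascade relation.
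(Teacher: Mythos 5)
Your proposal is correct and follows essentially the same route as the paper's proof: compare $\eta$ to the chain $\hat\eta$ associated to $\PartF_{\Ising}^{(N-1)}$ via the local martingale $R_t$ of~\eqref{eq::MN}, identify the boundary value $R_T$ on $\{\hat\eta(T)=x_{2n}\}$ through Proposition~\ref{prop::Z_total_asy_refined} and conformal covariance, justify passing to the limit by the boundedness/uniform integrability of $R_t$ on that event, and then feed in the induction hypothesis for the law of $\hat\eta$ together with the cascade identity of Lemma~\ref{lem::Ztotal_CAS}. The cancellations you describe are exactly those carried out in the paper, so the argument is complete.
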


\begin{proof}
Using the same notation as in the proof of Lemma~\ref{lem::Ztotal_continuity_ind},
we recall that $\eta$ has the law of $\hat{\eta}$ tilted by the local martingale~\eqref{eq::MN},
which we write in the form
\begin{align*}
R_t
= g_t'(x_{2N-1})^{1/2}g_t'(x_{2N})^{1/2} \;
\bigg( \frac{\PartF_{\Ising}^{(N)}(W_t,g_t(x_2), \ldots, g_t(x_{2N}))}{\PartF_{\Ising}^{(\ell)}(W_t, g_t(x_2), \ldots, g_t(x_{2\ell}))} \bigg)
\bigg( \frac{\PartF_{\Ising}^{(\ell)}(W_t, g_t(x_2), \ldots, g_t(x_{2\ell}))}{\PartF_{\Ising}^{(N-1)}(W_t,g_t(x_2), \ldots, g_t(x_{2N-2}))} \bigg) .
\end{align*}
By Lemma~\ref{lem::Ztotal_continuity_ind}, on the event $\{\eta(T)=x_{2\ell}\}$, the curve $\eta$ is continuous up to and including $T$.
Hence, Proposition~\ref{prop::Z_total_asy_refined} and the conformal covariance~\eqref{eq:conformal_image}
show that as $t\to T$, we have 
\begin{align*}
R_T := \lim_{t \to T} R_t 
= \; & g_T'(x_{2N-1})^{1/2}g_T'(x_{2N})^{1/2} \; \frac{\PartF_{\Ising}^{(N-\ell)}(g_T(x_{2\ell+1}), \ldots, g_T(x_{2N}))}{\PartF_{\Ising}^{(N-1-\ell)}(g_T(x_{2\ell+1}), \ldots, g_T(x_{2N-2}))} \\
= \; & \frac{\PartF_{\Ising}^{(N-\ell)}(D_{\hat{\eta}}^L; x_{2\ell+1}, \ldots, x_{2N})}{\PartF_{\Ising}^{(N-1-\ell)}(D_{\hat{\eta}}^L; x_{2\ell+1}, \ldots, x_{2N-2})} ,
\qquad \text{almost surely}.
\end{align*} 
Furthermore, the convergence also holds in $L^1(\hat{\PP}[\cdot\cond \hat{\eta}(T)=x_{2\ell}])$,
where $\hat{\PP}[\cdot\cond \hat{\eta}(T)=x_{2\ell}]$ denotes the law of $\hat{\eta}$ conditioned on the event $\{\hat{\eta}(T)=x_{2\ell}\}$,
because $R_{t\wedge T}$ is a positive and uniformly integrable martingale under the measure $\hat{\PP}[\cdot\cond \hat{\eta}(T)=x_{2\ell}]$
for the following reason. 
For $m \ge 1$, denote by $\hat{\PP}_m^*$ the law of $\hat{\eta}$ tilted 
by the martingale $\smash{R_{t \wedge T \wedge S_{1/m}}}$, 
where $\smash{S_{1/m}}$ is the stopping time defined in~\eqref{def::induction_stoppingtime} 
(so that $\smash{R_{t \wedge T \wedge S_{1/m}}}$ is bounded for all $t$). 
Then, $\hat{\PP}_m^*$ is the same as the law of $\eta$ up to time $\smash{S_{1/m}}$. 
On the one hand, the sequences $\hat{\PP}_m^*$ are manifestly  consistent in $m$, 
so by Kolmogorov's extension theorem, 
there exists a probability measure $\hat{\PP}^*$ such that  under $\hat{\PP}^*$ 
the curve has the same law as $\eta$ up to time $\smash{S_{1/m}}$, for any $m\ge 1$. 
On the other hand, by Lemma~\ref{lem::Ztotal_continuity_ind} 
the Loewner chain $\eta$ is almost surely continuous up to and including $T$ 
and only touches $\R$ at $x_1$ and $x_{2\ell}$. 
Thus, on the event $\{\eta(T)=x_{2\ell}\}$, there is almost surely a positive distance between $\eta[0,T]$ 
and the points $\{x_{2\ell+1}, x_{2\ell+2}, \ldots, x_{2N}\}$.  
In particular, almost surely on the event $\{\eta(T)=x_{2\ell}\}$, we have
\begin{align*}
g_t(x_{2N-1})-g_t(x_{2N-2}) \, \ge \, 
g_T(x_{2N-1})-g_T(x_{2N-2}) \, > \, 0 , 
\qquad \text{for all } t\in [0,T].
\end{align*}
It follows that (see, e.g.,~\cite[Theorem~5.3.3]{Durrett}) 
the law of $\eta$ conditioned on $\{\eta(T)=x_{2\ell}\}$ is the same as $\hat{\PP}^*$ conditioned on the same event, which is the same as $\hat{\PP}[\cdot\cond \hat{\eta}(T)=x_{2\ell}]$  tilted by $R_{t \wedge T}$. 
This gives the uniform integrability of the martingale $R_{t \wedge T}$. 
With the $L^1$-convergence, we conclude that 
\begin{align} \label{eqn::Ztotal_continuity_aux1}
\begin{split} 
\PP[\eta(T)=x_{2\ell}] 
= \; & \frac{1}{R_0} \, \hat{\E} \big[ \one \{\hat{\eta}(T)=x_{2\ell}\} R_T \big] \\
= \; &
\frac{1}{R_0}
\, \hat{\E} \bigg[ \one \{\hat{\eta}(T)=x_{2\ell}\} \frac{\PartF_{\Ising}^{(N-\ell)}(D_{\hat{\eta}}^L; x_{2\ell+1}, \ldots, x_{2N})}{\PartF_{\Ising}^{(N-1-\ell)}(D_{\hat{\eta}}^L; x_{2\ell+1}, \ldots, x_{2N-2})} \bigg].
\end{split} 
\end{align}
Now, by the induction hypothesis on $\hat{\eta}$, on the event $\{\hat{\eta}(T)=x_{2\ell}\}$, 
the law $\hat{\PP}$ of $\hat{\eta}$ is that of the $\SLE_3$ curve $\gamma$ in $\HH$ from $x_1$ to $x_{2\ell}$ weighted by the Radon-Nikodym derivative
\begin{align} \label{eqn::Ztotal_contiuity_aux2}
\frac{\PartF_{\Ising}^{(\ell-1)}(D_{\gamma}^R; x_2, \ldots, x_{2\ell-1})\times\PartF_{\Ising}^{(N-1-\ell)}(D_{\gamma}^L; x_{2\ell+1}, \ldots, x_{2N-2})}{(x_{2\ell}-x_1) 
\underset{\beta\in\LP_{N-1}}{\sum} \PartF_{\beta}(x_1, \ldots, x_{2N-2})\, \one \{ \link{1}{2\ell} \in \beta \} }.
\end{align}
Combining~\eqref{eqn::Ztotal_contiuity_aux2} with~\eqref{eqn::Ztotal_continuity_aux1}, we obtain\footnote{Here, we denote by $\mathsf{E} = \mathsf{E}(\HH;x_1, x_{2\ell})$ the expectation corresponding to the law $\mathsf{P} = \mathsf{P}(\HH;x_1, x_{2\ell})$ of $\gamma$. } 
\begin{align*}
& \; \PP[\eta(T)=x_{2\ell}] \\
= & \; \frac{1}{R_0}
\; \hat{\E} \bigg[\frac{\PartF_{\Ising}^{(N-\ell)}(D_{\hat{\eta}}^L; x_{2\ell+1}, \ldots, x_{2N})}{\PartF_{\Ising}^{(N-1-\ell)}(D_{\hat{\eta}}^L; x_{2\ell+1}, \ldots, x_{2N-2})} \; \Big| \;  {\hat{\eta}}(T)=x_{2\ell}\bigg]
\hat{\PP}[\hat{\eta}(T)=x_{2\ell}] \\
= & \;  
\frac{1}{R_0}
\; \frac{\mathsf{E} \big[\PartF_{\Ising}^{(N-\ell)}(D_{\gamma}^L; x_{2\ell+1}, \ldots, x_{2N})\times\PartF_{\Ising}^{(\ell-1)}(D_{\gamma}^R; x_2, \ldots, x_{2\ell-1})\big]}{(x_{2\ell}-x_1)\underset{\beta\in\LP_{N-1}}{\sum} \PartF_{\beta}(x_1, \ldots, x_{2N-2}) \, \one \{ \link{1}{2\ell} \in \beta \} } \; \hat{\PP}[\hat{\eta}(T)=x_{2\ell}] .
\end{align*}
Now, we use Lemma~\ref{lem::Ztotal_CAS} with $a=2\ell$ and $b=1$ to evaluate the numerator, obtaining
\begin{align*}
\PP[\eta(T)=x_{2\ell}] & =
\frac{1}{R_0}
\; \frac{\underset{\alpha\in\LP_N}{\sum} \PartF_{\alpha}(x_1, \ldots, x_{2N}) 
\, \one \{ \link{1}{2\ell} \in \alpha \} }{\underset{\beta\in\LP_{N-1}}{\sum} \PartF_{\beta}(x_1, \ldots, x_{2N-2} )\, \one \{ \link{1}{2\ell} \in \beta \} } \; \hat{\PP}[\hat{\eta}(T)=x_{2\ell}].
\end{align*}
On the other hand, by the induction hypothesis on $\hat{\eta}$, we know that 
\begin{align}\label{eqn::Ztotal_continuity_aux3}
\hat{\PP}[\hat{\eta}(T)=x_{2\ell}] = \sum_{\beta\in\LP_{N-1}} \frac{\PartF_{\beta}(x_1, \ldots, x_{2N-2}) }{\PartF_{\Ising}^{(N-1)}(x_1, \ldots, x_{2N-2})} \, \one \{ \link{1}{2\ell} \in \beta \} ,
\end{align}
and therefore, with $R_0 = \smash{\PartF_{\Ising}^{(N)} / \PartF_{\Ising}^{(N-1)}}$
we find the probability~\eqref{eqn::Ztotal_proba_ind} of interest. 
\end{proof}

\begin{lemma}\label{lem::Ztotal_RN_ind}
Assume the same setup as in Lemma~\ref{lem::Ztotal_continuity_ind}. 
Then, for all $\ell \in \{1,2, \ldots, N-1\}$, conditionally on the event $\{\eta(T)=x_{2\ell}\}$, the law of $\eta$ 
is that of the $\SLE_3$ curve $\gamma$ in $\HH$ from $x_1$ to $x_{2\ell}$ weighted by the Radon-Nikodym derivative
\begin{align}\label{eqn::Ztotal_RN_ind}
\frac{\PartF_{\Ising}^{(\ell-1)}(D_{\gamma}^R; x_2, \ldots, x_{2\ell-1})\times\PartF_{\Ising}^{(N-\ell)}(D_{\gamma}^L; x_{2\ell+1}, \ldots, x_{2N})}{(x_{2\ell}-x_1) \underset{\alpha\in\LP_{N}}{\sum} \PartF_{\alpha}(x_1, \ldots, x_{2N}) \, \one \{ \link{1}{2\ell} \in \alpha \} } .
\end{align}
\end{lemma}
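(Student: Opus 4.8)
The plan is to run an induction bookkeeping that combines the change of measure from $\eta$ to $\hat{\eta}$ already established in the proofs of Lemmas~\ref{lem::Ztotal_continuity_ind} and~\ref{lem::Ztotal_proba_ind} with the induction hypothesis applied to $\hat{\eta}$, and then to track the telescoping cancellations among the partition functions. Here $\hat{\eta}$ is the Loewner chain associated to $\PartF_{\Ising}^{(N-1)}(x_1, \ldots, x_{2N-2})$ started from $x_1$, and $\gamma$ is the $\SLE_3$ curve in $\HH$ from $x_1$ to $x_{2n}$; I test all the curve laws against an arbitrary bounded continuous functional $F$ of the curve.

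First, I recall the inputs from the earlier lemmas in this section. On $\{\eta(T)=x_{2n}\}$ the curve $\eta$ is continuous up to and including $T$ and stays at positive distance from $x_2, \ldots, x_{2N}$ except at its endpoint $x_{2n}$; the law of $\eta$ is the law of $\hat{\eta}$ reweighted by the martingale $R_t$ of~\eqref{eq::MN}; and $R_t$ is uniformly integrable on $\{\hat{\eta}(T)=x_{2n}\}$ with almost sure limit $R_T = \PartF_{\Ising}^{(N-n)}(D_{\hat{\eta}}^L; x_{2n+1}, \ldots, x_{2N}) / \PartF_{\Ising}^{(N-1-n)}(D_{\hat{\eta}}^L; x_{2n+1}, \ldots, x_{2N-2})$. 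Consequently the change of measure persists up to and including time $T$, so that $\E[F(\eta)\,\one_{\{\eta(T)=x_{2n}\}}] = R_0^{-1}\,\E[F(\hat{\eta})\,R_T\,\one_{\{\hat{\eta}(T)=x_{2n}\}}]$, where $R_0 = \PartF_{\Ising}^{(N)}(x_1, \ldots, x_{2N}) / \PartF_{\Ising}^{(N-1)}(x_1, \ldots, x_{2N-2})$.

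Next, I apply the induction hypothesis to $\hat{\eta}$: on $\{\hat{\eta}(T)=x_{2n}\}$ its law is that of $\gamma$ weighted by~\eqref{eqn::Ztotal_contiuity_aux2}, and $\PP[\hat{\eta}(T)=x_{2n}]$ is given by~\eqref{eqn::Ztotal_continuity_aux3}. Substituting this and replacing $\hat{\eta}$ by $\gamma$ inside the expectation, so that $R_T$ becomes the same ratio with $D_{\hat{\eta}}^L$ replaced by $D_{\gamma}^L$, the product of the two weights telescopes: the factor $\PartF_{\Ising}^{(N-1-n)}(D_{\gamma}^L; x_{2n+1}, \ldots, x_{2N-2})$ in the denominator of $R_T$ cancels the identical factor in the numerator of~\eqref{eqn::Ztotal_contiuity_aux2}, leaving the product weight
\[
\frac{\PartF_{\Ising}^{(n-1)}(D_{\gamma}^R; x_2, \ldots, x_{2n-1})\,\PartF_{\Ising}^{(N-n)}(D_{\gamma}^L; x_{2n+1}, \ldots, x_{2N})}{(x_{2n}-x_1)\sum_{\beta\in\LP_{N-1}}\PartF_{\beta}(x_1, \ldots, x_{2N-2})\,\one_{\{\link{1}{2n}\in\beta\}}} .
\]

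Finally, I divide by $\PP[\eta(T)=x_{2n}]$, which is given by~\eqref{eqn::Ztotal_continuity_aux4}, to pass from the joint law to the conditional law, and insert the values of $R_0$ and of $\PP[\hat{\eta}(T)=x_{2n}]$. The factors $\PartF_{\Ising}^{(N-1)}(x_1, \ldots, x_{2N-2})$, $\PartF_{\Ising}^{(N)}(x_1, \ldots, x_{2N})$ and $\sum_{\beta\in\LP_{N-1}}\PartF_{\beta}(x_1, \ldots, x_{2N-2})\one_{\{\link{1}{2n}\in\beta\}}$ all cancel, and what survives is precisely the weight~\eqref{eqn::Ztotal_RN_ind} asserted in the lemma (its total mass is $1$ by the same computation that produced~\eqref{eqn::Ztotal_continuity_aux4}). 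The only genuinely delicate input is the uniform integrability of $R_t$ on $\{\hat{\eta}(T)=x_{2n}\}$ together with the $L^1$-convergence $R_t\to R_T$, which is what legitimizes the change of measure up to and including $T$ and the replacement of $\hat{\eta}$ by $\gamma$; but this is already secured in the proof of Lemma~\ref{lem::Ztotal_proba_ind}, using the explicit bound on $R_t$ from Lemma~\ref{lem::Ztotal_continuity_ind} and the continuity of $\eta$ up to $T$. Everything else is algebraic bookkeeping of partition-function cancellations, so I expect no further obstacle.
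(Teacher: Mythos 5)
Your proposal is correct and follows essentially the same route as the paper: chain the two changes of measure $\eta\to\hat{\eta}\to\gamma$ using the $L^1$-limit $R_T$ of the martingale $R_t$ and the induction hypothesis~\eqref{eqn::Ztotal_contiuity_aux2}, cancel the telescoping factor $\PartF_{\Ising}^{(N-1-n)}(D_{\gamma}^L;\cdot)$, and normalize by the ratio of the probabilities~\eqref{eqn::Ztotal_continuity_aux3} and~\eqref{eqn::Ztotal_continuity_aux4}. Your bookkeeping of $R_0$ and the remaining cancellations matches the paper's computation exactly.
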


\begin{proof}
We still use the same notation as in the proofs of Lemmas~\ref{lem::Ztotal_continuity_ind} \&~\ref{lem::Ztotal_proba_ind}.
On the event $\{\eta(T)=x_{2\ell}\}$, the law of $\eta$ is the same as the law of ${\hat{\eta}}$ weighted by the Radon-Nikodym derivative
\begin{align*}
\frac{1}{R_0}
\; \frac{\PartF_{\Ising}^{(N-\ell)}(D_{\hat{\eta}}^L; x_{2\ell+1}, \ldots, x_{2N})}{\PartF_{\Ising}^{(N-1-\ell)}(D_{\hat{\eta}}^L; x_{2\ell+1}, \ldots, x_{2N-2})}.
\end{align*}
On the other hand, on the event $\{{\hat{\eta}}(T)=x_{2\ell}\}$, the law of ${\hat{\eta}}$ is the same as the law of $\gamma$ weighted by 
the Radon-Nikodym derivative~\eqref{eqn::Ztotal_contiuity_aux2}. Therefore, conditionally on the event $\{\eta(T)=x_{2\ell}\}$, 
the law of $\eta$ is the same as the law of $\gamma$ weighted by the Radon-Nikodym derivative 
\begin{align*}
\; & \frac{\hat{\PP}[{\hat{\eta}}(T)=x_{2\ell}]}{\PP[\eta(T)=x_{2\ell}]} \; 
\frac{1}{R_0}
\; \frac{\PartF_{\Ising}^{(\ell-1)}(D_{\gamma}^R; x_2, \ldots, x_{2\ell-1})\times\PartF_{\Ising}^{(N-\ell)}(D_{\gamma}^L; x_{2\ell+1}, \ldots, x_{2N})}{(x_{2\ell}-x_1) \underset{\beta\in\LP_{N-1}}{\sum} \PartF_{\beta}(x_1, \ldots, x_{2N-2}) \, \one \{ \link{1}{2\ell} \in \beta \} } .
\end{align*}
This gives the asserted formula~\eqref{eqn::Ztotal_RN_ind} due to~\eqref{eqn::Ztotal_continuity_aux3} and~\eqref{eqn::Ztotal_proba_ind}.
\end{proof}

With Lemmas~\ref{lem::Ztotal_continuity_ind}--\ref{lem::Ztotal_RN_ind} at hand, 
we have proved the conclusions in Theorem~\ref{thm::loewner_Ztotal_continuity} for the Loewner chain $\eta$ associated to 
$\PartF_{\Ising}$ with launching points $(x_1, \ldots, x_{2N})$ starting from $x_1$ on the event $\LE_{2N-1}$ that $\eta$ accumulates in the interval $(x_1, x_{2N-1})$. 
It remains to analyze the behavior of $\eta$ when it accumulates in 
$[x_{2N-1},\infty) \cup (-\infty,x_{1})$.
By the conformal covariance~\eqref{eq:general mobius} of the partition function $\PartF_{\Ising}$, 
we may apply 
a M\"{o}bius transformation $\varphi$ of $\HH$ such that $\varphi(x_{2N-1})<\varphi(x_{2N})<\varphi(x_1)<\varphi(x_2)<\cdots<\varphi(x_{2N-2})$. 
Then, $\varphi(\eta)$ is the Loewner chain starting from $\varphi(x_1)$, 
and the boundary segment  
$[x_{2N-1},\infty) \cup (-\infty,x_{1})$ becomes $[\varphi(x_{2N-1}), \varphi(x_{1}))$.
Thus, we may equivalently analyze the Loewner chain starting from $x_3$ when it accumulates in the boundary segment $[x_1, x_3)$. 
This remaining case is addressed in the following lemma. 

\begin{lemma} \label{lem::Ztotal_continuity_ind2}
Assume that Theorem~\ref{thm::loewner_Ztotal_continuity} holds for $N-1$.
Define $\tilde{\LE}_{2N-1}$ to be the event that the
Loewner chain $\eta$ associated to $\PartF_{\Ising}$ with launching points $(x_1, \ldots, x_{2N})$ starting from $x_3$ accumulates in the semi-open interval $[x_1, x_3)$.
Then, on the event $\tilde{\LE}_{2N-1}$, the Loewner chain $\eta$ is almost surely generated by 
a continuous simple curve up to and including $T$, which almost surely terminates at $x_2$
and touches $\R$ only at its two endpoints.
Moreover, the law of $\eta$ is that of the $\SLE_3$ curve $\gamma$ in $\HH$ from $x_3$ to $x_2$ weighted by the Radon-Nikodym derivative
\begin{align}\label{eqn::Ztotal_RN_ind2}
\frac{\PartF_{\Ising}^{(N-1)}(D_{\gamma}^R; x_1, x_4, x_5, \ldots, x_{2N})}{(x_3-x_2) \underset{\alpha\in\LP_N}{\sum} \PartF_{\alpha}(x_1, \ldots, x_{2N}) \, \one \{ \link{2}{3} \in \alpha \} } .
\end{align}
\end{lemma}

\begin{proof}
Let ${\hat{\eta}}$ be the Loewner chain associated to $\PartF_{\Ising}^{(N-1)}$ with launching points $(x_1, \ldots, x_{2N-2})$  starting from $x_3$, and denote its law by $\hat{\PP}$. 
Then, similarly as in the proof of Lemma~\ref{lem::Ztotal_continuity_ind}, we see that 
the law of $\eta$ is the same as the law of ${\hat{\eta}}$ tilted by the following local martingale, for small enough $t$: 
\begin{align*}
\tilde{R}_t
:= g_t'(x_{2N-1})^{1/2}g_t'(x_{2N})^{1/2} \; 
\frac{\PartF_{\Ising}^{(N)}(g_t(x_1), g_t(x_2), W_t, g_t(x_4), \ldots, g_t(x_{2N}))}{\PartF_{\Ising}^{(N-1)}(g_t(x_1), g_t(x_2), W_t, g_t(x_4), \ldots, g_t(x_{2N-2}))}.
\end{align*}
Thus, similar analysis as in the proof of Lemma~\ref{lem::Ztotal_continuity_ind} shows that,  
on the event $\tilde{\LE}_{2N-1}$, the law of $\eta$ is absolutely continuous with respect to the law of ${\hat{\eta}}$. 
In particular, $\eta$ is almost surely (on the event $\tilde{\LE}_{2N-1}$) generated by a continuous simple curve up to and including $T$, terminating almost surely at $x_2$.

Then, by similar analysis as in the proof of Lemma~\ref{lem::Ztotal_proba_ind}, we obtain 
\begin{align*}
\tilde{R}_T := \; & \lim_{t \to T} \tilde{R}_t \\
= \; & \lim_{t \to T}
g_t'(x_{2N-1})^{1/2}g_t'(x_{2N})^{1/2}
\bigg( \frac{\PartF_{\Ising}^{(1)}(g_t(x_2), W_t)}{\PartF_{\Ising}^{(1)}(g_t(x_2), W_t)} \bigg)
\bigg( \frac{\PartF_{\Ising}^{(N)}(g_t(x_1), g_t(x_2), W_t,  \ldots, g_t(x_{2N}))}{\PartF_{\Ising}^{(N-1)}(g_t(x_1), g_t(x_2), W_t,  \ldots, g_t(x_{2N-2}))} \bigg) \\
= \; & g_T'(x_{2N-1})^{1/2}g_T'(x_{2N})^{1/2}
\;  \frac{\PartF_{\Ising}^{(N-1)}(g_T(x_1), g_T(x_4), \ldots, g_T(x_{2N}))}{\PartF_{\Ising}^{(N-2)}(g_T(x_1), g_T(x_4), \ldots, g_T(x_{2N-2}))}. 
\qquad\qquad\qquad \text{[by~\eqref{eqn::partf_total_asy_refined}]} \\
= \; & \frac{\PartF_{\Ising}^{(N-1)}(D_{{\hat{\eta}}}^{R}; x_1, x_4, \ldots, x_{2N})}{\PartF_{\Ising}^{(N-2)}(D_{\hat{\eta}}^{R}; x_1, x_4, \ldots, x_{2N-2})}. 
\qquad\qquad\qquad\qquad\qquad\qquad \qquad\qquad\qquad \qquad\;\,  \text{[by~\eqref{eq:conformal_image}]}
\end{align*}
Thus, the law of $\eta$ is the same as the law of ${\hat{\eta}}$ tilted by $\tilde{R}_{t \wedge T}$. 
Also, by the induction hypothesis on ${\hat{\eta}}$, 
the law of ${\hat{\eta}}$ is the same as $\gamma$ weighted by 
\begin{align*} 
\frac{\PartF_{\Ising}^{(N-2)}(D_{\gamma}^R; x_1, x_4, x_5, \ldots, x_{2N-2})}{(x_3-x_2) \underset{\beta\in\LP_{N-1}}{\sum} \PartF_{\beta}(x_1, \ldots, x_{2N-2})\, \one \{ \link{2}{3} \in \beta \} }.
\end{align*}
Hence, we see that 
the law of $\eta$ is that of $\gamma$ weighted by the Radon-Nikodym derivative~\eqref{eqn::Ztotal_RN_ind2}. 
\end{proof}

\begin{corollary}\label{cor::Ztotal_continuity_ind2}
Assume that Theorem~\ref{thm::loewner_Ztotal_continuity} holds for $N-1$. 
Then, Theorem~\ref{thm::loewner_Ztotal_continuity} also holds for $N$. 
\end{corollary}

\begin{proof}
For definiteness, we assume that $j=1$. 
Lemmas~\ref{lem::Ztotal_continuity_ind} 
and~\ref{lem::Ztotal_continuity_ind2} together with the 
conformal covariance~\eqref{eq:general mobius} of the partition function $\PartF_{\Ising}$
show that the Loewner chain associated to $\PartF_{\Ising}$ with launching points $(x_1, \ldots, x_{2N})$ starting from $x_1$
is almost surely generated by a continuous simple curve
$(\eta(t), 0\le t\le T)$ up to and including $T$, and that
this curve almost surely terminates at one of the points 
$\{x_2, x_4, \ldots, x_{2N}\}$. 
Lemmas~\ref{lem::Ztotal_proba_ind} 
and~\ref{lem::Ztotal_continuity_ind2} then imply that  
for any $\ell \in \{1,2,\ldots, N-1\}$, we have
\begin{align*} 
\PP[\eta(T)=x_{2\ell}] = \sum_{\alpha\in\LP_N}\frac{\PartF_{\alpha}(x_1, \ldots, x_{2N})}{\PartF_{\Ising}^{(N)}(x_1,\ldots, x_{2N})} 
\, \one \{ \link{1}{2\ell} \in \alpha \}  .
\end{align*}
Since $\PartF_{\Ising} = \sum_\alpha \PartF_{\alpha}$ 
and since the total probabilities sum up to one, 
this also gives  
\begin{align*} 
\PP[\eta(T)=x_{2N}] 
= \; & 1 - \sum_{\ell=1}^{N-1} \PP[\eta(T)=x_{2\ell}] \\
= \; & \sum_{\alpha\in\LP_N} \frac{\PartF_{\alpha}(x_1, \ldots, x_{2N}) }{\PartF_{\Ising}^{(N)}(x_1,\ldots, x_{2N})}  
- \sum_{\ell=1}^{N-1}
\sum_{\alpha\in\LP_N} \frac{\PartF_{\alpha}(x_1, \ldots, x_{2N})}{\PartF_{\Ising}^{(N)}(x_1,\ldots, x_{2N})} 
\, \one \{ \link{1}{2\ell} \in \alpha \} \\
= \; & \sum_{\alpha\in\LP_N} \frac{\PartF_{\alpha}(x_1, \ldots, x_{2N}) }{\PartF_{\Ising}^{(N)}(x_1,\ldots, x_{2N})}  
\, \one \{ \link{1}{2N} \in \alpha \} .
\end{align*}
Thus, we conclude that the asserted formula~\eqref{eqn::Loewner_Ztotal_proba} with $j=1$ holds.

Lastly, Lemma~\ref{lem::Ztotal_RN_ind} shows that 
for all $\ell \in \{1,2, \ldots, N-1\}$, conditionally on the event $\{\eta(T)=x_{2\ell}\}$, the law of $\eta$ 
is that of the $\SLE_3$ curve $\gamma$ in $\HH$ from $x_1$ to $x_{2\ell}$ weighted by the Radon-Nikodym derivative~\eqref{eqn::Ztotal_RN_ind}, which gives the asserted formula~\eqref{eqn::Loewner_Ztotal_RN} on these events. 
On the other hand, 
Equation~\eqref{eqn::Ztotal_RN_ind2} from Lemma~\ref{lem::Ztotal_continuity_ind2} together with the covariance~\eqref{eq:general mobius} shows that, conditionally on the event $\{\eta(T)=x_{2N}\}$, the law of $\eta$ 
is that of the $\SLE_3$ curve $\gamma$ in $\HH$ from $x_1$ to $x_{2N}$ weighted by
\begin{align*} 
\frac{\PartF_{\Ising}^{(N-1)}(D_{\gamma}^R; x_{2}, \ldots, x_{2N-1})}{(x_{2N} - x_1) \; \underset{\alpha\in\LP_N}{\sum} \PartF_{\alpha}(x_1, \ldots, x_{2N}) 
\, \one \{ \link{1}{2N} \in \alpha \} } .
\end{align*}
This gives the asserted formula~\eqref{eqn::Loewner_Ztotal_RN} on this event and concludes the proof. 
\end{proof}

With Corollary~\ref{cor::Ztotal_continuity_ind2}, the proof of Theorem~\ref{thm::loewner_Ztotal_continuity} is complete by induction.

\section{Crossing Probabilities in the Critical Ising Model}
\label{sec::Ising_crossing_proba}
Let $\Omega^{\delta}$ be a family of finite subgraphs of the rescaled square lattice $\delta\Z^2$, for $\delta > 0$, together with
$2N$ fixed boundary points $x_1^{\delta}, \ldots, x_{2N}^{\delta}$ for each $\Omega^{\delta}$ in counterclockwise order.
As illustrated in Figure~\ref{fig::Ising} in Section~\ref{sec::intro}, we 
consider the critical Ising model on $(\Omega^{\delta}; x_1^{\delta}, \ldots, x_{2N}^{\delta})$ with alternating boundary conditions:
\begin{align} \label{eq::alternating}
  \begin{cases}
    \oplus \text{ on }(x_{2s-1}^{\delta} \, x_{2s}^{\delta}) ,   & \quad\text{for } s \in \{1,2, \ldots, N\} ,   \\
    \ominus \text{ on }(x_{2s}^{\delta} \,  x_{2s+1}^{\delta}) , & \quad \text{for } s \in \{0,1,\ldots, N\} , 
  \end{cases}
\end{align}
where $(x_i^{\delta} \, x_{i+1}^{\delta})$ stands for the counterclockwise boundary arc from $x_i^{\delta}$ to $x_{i+1}^{\delta}$, 
with   the   convention  that   $x_{2N}^{\delta}=x_{0}^{\delta}$   and
$x_{2N+1}^{\delta}=x_1^{\delta}$.
In this setup, each Ising model configuration on $\Omega^{\delta}$ contains 
$N$ random macroscopic interfaces which connect pairwise the $2N$ boundary points $x_1^{\delta}, \ldots, x_{2N}^{\delta}$. 
When $N \geq 2$, these interfaces can form more than one possible connectivity pattern, as illustrated in Figure~\ref{fig::Ising}.

Suppose that $(\Omega^{\delta}; x_1^{\delta}, \ldots, x_{2N}^{\delta})$ approximate some 
polygon $(\Omega; x_1, \ldots, x_{2N})$ as $\delta\to 0$, as detailed below.
K.~Izyurov proved in his article~\cite{IzyurovIsingMultiplyConnectedDomains} that ``locally", the scaling limits of these interfaces
are given by the Loewner chain~\eqref{eqn::loewnerchain_partition} with $\kappa=3$ and $\PartF = \PartF_{\Ising}$.
In Section~\ref{subsec::ising_Ztotal_global}, 
we briefly explain how
to extend this result to a ``global" one, 
that is, to establish the convergence for the \textit{whole} curves instead only up to a stopping time. 
The proof crucially relies on the continuity of the Loewner chain, 
Theorem~\ref{thm::loewner_Ztotal_continuity}.

In this article, we are primarily interested in the probability that the Ising interfaces form a given connectivity encoded in a link pattern $\alpha$.
In general, formulas for such crossing probabilities are not known --- a few special cases appear in~\cite{IzyurovObservableFree}.
Nevertheless, in this section we will prove Theorem~\ref{thm::ising_crossingproba}, which shows that
the critical Ising crossing probabilities do indeed have a conformally invariant scaling limit (cf.~Corollary~\ref{cor::CI}), 
specified as the ratio $\PartF_\alpha / \PartF_{\Ising}$ of partition functions discussed in Sections~\ref{sec::pre}--\ref{sec::Ising_partition_function}. Interestingly, this ratio also gives a 
characterization of the Ising crossing probabilities in terms of a $c = 1/2$ conformal field theory: 
indeed, the probability amplitudes $\PartF_\alpha$ can be seen as 
correlation functions of a degenerate field with conformal weight $h_{1,2} = 1/2$,
associated to the free fermion 
(or the energy density) on the boundary --- for more discussion on these concepts,
see, e.g., the textbooks~\cite{ID_book, DMS:CFT}, the article~\cite{PeltolaICMP}, 
and the results in~\cite{HonglerThesis, HonglerSmirnovIsingEnergy}.

\subsection{Ising Model} \label{sub::Ising_notation}

To begin, we fix notation to be used throughout. 
We consider finite subgraphs $\graph = (V(\graph), E(\graph))$ of 
the  (possibly translated, rotated, and rescaled) square lattice $\Z^2$. 
We call two vertices $v$ and $w$ \textit{neighbors} if their Euclidean distance equals one, and we then write $v \sim w$.
We denote the inner boundary of $\graph$ by 
\begin{align*}
\partial \graph = \{ v \in V(\graph) \, \colon \, \exists \; w \not\in V(\graph) \text{ such that } \edge{v}{w} \in E(\Z^2)\} .
\end{align*}

The \textit{dual lattice} $(\Z^2)^*$ is a translated version of $\Z^2$: its vertex set is $(1/2, 1/2)+\Z^2$ and its edges are given by all pairs of vertices that are neighbors. 
The vertices and edges of $(\Z^2)^*$ are called dual-vertices and dual-edges, 
while we sometimes call the vertices and edges of $\Z^2$ primal-vertices and primal-edges.
In particular, for each primal-edge $e$ of $\Z^2$, we associate a unique dual-edge, denoted by $e^*$, that crosses $e$ in the middle. 
For a subgraph $\graph$ of $\Z^2$, we define $\graph^*$ to be the subgraph of $(\Z^2)^*$ with edge set $E(\graph^*)=\{e^*: e\in E(\graph)\}$ and 
vertex set given by the endpoints of these dual-edges.

We define \textit{a discrete  Dobrushin domain} to be  a  triple $(\graph; v, w)$ with $v, w \in  \partial \graph$,  $v \neq w$,
where $\graph$ is  a  finite connected  subgraph  of  $\Z^2$ such that 
the complement of $\graph$ is also  connected (that is, $\graph$ is simply connected). 
The boundary  $\partial \graph$ is divided into  two arcs $(v \, w)$ and $(w \, v)$, where Dobrushin boundary conditions for the Ising model will be specified.
We also define  \textit{a  discrete  polygon} to be a $(2N+1)$-tuple $(\graph;  v_1,  \ldots,  v_{2N})$, 
where $v_1, \ldots, v_{2N} \in \partial  \graph$ are distinct boundary vertices in counterclockwise order. 
In this case, the boundary $\partial \graph$ is divided into
$2N$ arcs, where alternating boundary conditions will be specified.
We also  let  $\graph$ denote the simply connected domain formed  by all of the faces, edges, and vertices of~$\graph$.

\smallbreak

The \textit{Ising model} on $\graph$ is a random assignment
$\sigma = (\sigma_v)_{v \in V(\graph)} \in \{\ominus, \oplus\}^{V(\graph)} =: \Sigma_\graph$ of spins. 
With free boundary conditions, the probability measure of the Ising model is given by the Boltzmann measure 
\begin{align*}
  \mu^{\free}_{\beta,\graph}[\sigma]
  = \frac{\exp(-\beta H^{\free}_{\graph}(\sigma))}{Z^{\free}_{\beta, \graph}}, \qquad
  \text{where} \quad Z^{\free}_{\beta, \graph} 
  = \sum_{\sigma \in \Sigma_\graph}\exp(-\beta H^{\free}_{\graph}(\sigma)) ,
\end{align*}
with inverse-temperature $\beta>0$ and Hamiltonian 
\begin{align*}
H^{\free}_{\graph}(\sigma) = - \sum_{v \sim w}\sigma_v \sigma_w .
\end{align*}
Only the neigboring spins interact with each other. This model exhibits an order-disorder phase transition~\cite{McCoyWuIsing}:
there exists a critical temperature such that above it, the Ising configurations are disordered, 
and below it, large clusters of equal spins appear. 
For the square lattice, the critical inverse-temperature can be found exactly: 
\begin{align*}
\beta_{\textnormal{crit}}= \frac{\log(1+\sqrt{2})}{2} .
\end{align*}
At criticality, the system does not have a typical length scale, and using renormalization arguments, 
physicists argued, e.g., in~\cite{BelavinPolyakovZamolodchikovConformalSymmetry, CardyRenorm},
and mathematicians later proved in a series of works starting from~\cite{SmirnovConformalInvariance,SmirnovConformalInvarianceAnnals},
that the model becomes conformally invariant in the scaling limit.
In this article, we consider the scaling limit of the Ising model at criticality and verify a feature of its conformal invariance.

For $\varrho \in \{\ominus, \oplus\}^{\Z^2}$, we define the Ising model 
$\smash{\mu^{\varrho}_{\beta,\graph}}$
with boundary condition $\varrho$ via the Hamiltonian
\begin{align*}
  H^{\varrho}_{\graph}(\sigma) = - \sum_{\substack{v \sim w, \\ \edge{v}{w} \cap \graph \neq \emptyset}} \sigma_v \sigma_w, \qquad
  \text{where} \quad \sigma_v = \varrho_v ,  \text{ for all } v \not\in \graph .
\end{align*}

The Ising model satisfies the following domain \textit{Markov property}, which enables the martingale argument that will be used to prove Theorem~\ref{thm::ising_crossingproba} in Section~\ref{ProofSec}.
Suppose $\graph \subset \graph'$ and fix boundary condition $\varrho \in \{\ominus, \oplus\}^{\Z^2}$  
for the Ising model on $\graph'$.
If $X$ is a random variable which is measurable with respect to the status of the vertices of the smaller graph $\graph$, then we have
\begin{align*}
\mu^{\varrho}_{\beta,\graph'} \big[X \; | \; \sigma_v = \varrho_v \text{ for all } v \in V(\graph')\setminus V(\graph)\big] 
= \mu^{\varrho}_{\beta,\graph} [X] .
\end{align*}

If $(\graph; v, w)$ is a discrete Dobrushin domain, we may consider the Ising model with \textit{Dobrushin boundary conditions}
(domain-wall boundary conditions): we set
$\oplus$ along the arc $(v \, w)$, and $\ominus$ along the complementary arc $(w \, v)$. 
More generally, in a discrete polygon $(\graph;  v_1,  \ldots,  v_{2N})$, 
we consider alternating boundary conditions, where $\oplus$ and $\ominus$ alternate along the boundary as in~\eqref{eq::alternating} (see also Figure~\ref{fig::Ising}).

Note that the spins lie on the primal-vertices $v \in \graph$,  while interfaces lie on the dual lattice $\graph^*$. 
Let $v_1^*, \ldots, v_{2N}^*$ be dual-vertices nearest to $v_1, \ldots, v_{2N}$, respectively. 
Then, given $s \in \{1,2,\ldots,N\}$,  we  define   the  Ising \textit{interface} starting from $v_{2s}^*$ as follows. 
It starts from $v_{2s}^*$, traverses on the dual-edges, and turns at every dual-vertex in  such a  way that  it always  
has primal-vertices with  spin $\oplus$  on its  left and  spin $\ominus$ on its right. If there is an indetermination  when arriving at a vertex 
(this may  happen  on   the  square  lattice),  it  turns   left. 
The Ising interface starting from $v_{2s-1}^*$ is defined similarly with the left/right switched.

\smallbreak

We focus on \textit{scaling limits} of the Ising model on planar domains: we let $\graph = \Omega^{\delta}$ be a subgraph of 
the rescaled square lattice\footnote{In this article, we only consider square lattice  approximations of simply connected continuum domains.
However, more general results could also be derived, e.g.,  for so-called isoradial graphs~\cite{CS-discrete_complex_analysis_on_isoradial,ChelkakSmirnovIsing}.} $\delta \Z^2$ with small $\delta > 0$, which will tend to zero.
Our precise approximation scheme is the following~\cite[Section~4.3]{KarrilaConformalImage}. 
We say that a sequence of discrete polygons $(\Omega^{\delta}; x_1^{\delta}, \ldots, x_{2N}^{\delta})$  
converges as $\delta \to 0$ to a polygon $(\Omega; x_1, \ldots, x_{2N})$ in the \textit{close-Carath\'{e}odory sense} if it converges in the Carath\'{e}odory sense, and in addition, for each $j\in\{1,2, \ldots, 2N\}$, we have $x_j^{\delta}\to x_j$ as $\delta\to 0$ and the following is fulfilled: 
Given a reference point $z \in \Omega$ and 
$r>0$ small enough, let $S_r$ be the arc of $\partial B(x_j,r)\cap\Omega$ disconnecting (in $\Omega$) $x_j$ from $z$ and from all other arcs of this set. We require that, for each $r$ small enough and for all sufficiently small $\delta$ (depending on $r$), the boundary point $x_j^{\delta}$ is connected to the midpoint of $S_r$ inside $\Omega^{\delta}\cap B(x_j,r)$.

We emphasize that the Ising spins lie on the primal-vertices the interfaces traverse on the dual graph. 
However, we shall abuse notation by writing $\Omega^{\delta}$ for both $\Omega^{\delta}$ and $(\Omega^*)^{\delta}$, 
and $x^{\delta}$ for both $x^{\delta}$ and $(x^{*})^{\delta}$.

\subsection{Convergence of Interfaces} \label{subsec::ising_Ztotal_global}

In this section, we first summarize some existing results on the convergence of Ising interfaces, 
and then explain how to extend Izyurov's result~\cite{IzyurovIsingMultiplyConnectedDomains} on the local convergence of multiple interfaces to be global.
The convergence will take place weakly in the space of unparameterized curves with metric~\eqref{def::curves_metric}.

Starting from the celebrated work of S.~Smirnov~\cite{SmirnovConformalInvariance,SmirnovConformalInvarianceAnnals}, 
conformal invariance for 
correlations~\cite{ChelkakLzyurovSpinorIsing, HonglerSmirnovIsingEnergy, ChelkakHonglerLzyurovConformalInvarianceCorrelationIsing, 
ChelkakHonglerLzyurovConformalInvarianceCorrelationIsingGeneral}
and interfaces~\cite{HonglerKytolaIsingFree, 
CDCHKSConvergenceIsingSLE,  IzyurovIsingMultiplyConnectedDomains,
BenoistHonglerIsingCLE, BeffaraPeltolaWuUniqueness}
for the critical planar Ising model has now been verified. 
The key tool in this work is the so-called discrete holomorphic fermion, developed by Smirnov with D. Chelkak~\cite{ChelkakSmirnovIsing}. 
This led in particular to the convergence of the Ising interface in Dobrushin domains~\cite{CDCHKSConvergenceIsingSLE}: 
if $(\Omega^{\delta}; x^{\delta}, y^{\delta})$ is a sequence of discrete  Dobrushin domains converging  to  a Dobrushin  domain $(\Omega; x,  y)$  
in the close-Carath\'eodory sense, then, as $\delta\to  0$,  the  interface  of  the  critical  Ising  model  on $(\Omega^{\delta}; x^{\delta},  y^{\delta})$  with  
Dobrushin boundary conditions  converges weakly to the chordal  $\SLE_{3}$ in $(\Omega; x,y)$. 
Later, C.~Hongler, K.~Kyt\"{o}l\"{a}, and K.~Izyurov extended the discrete holomorphic fermion to 
more general settings~\cite{HonglerKytolaIsingFree, IzyurovObservableFree, IzyurovIsingMultiplyConnectedDomains}. 
In particular, it follows from Izyurov's work~\cite[Theorem~1.1]{IzyurovIsingMultiplyConnectedDomains}
that the Ising interfaces in discrete polygons 
with alternating boundary conditions~\eqref{eq::alternating} converge to multiple $\SLE_3$ curves in the following local sense.

\bigskip

\noindent\textbf{\textnormal{($\clubsuit$)} Setup.}
Let discrete polygons $(\Omega^{\delta}; x_1^{\delta}, \ldots, x_{2N}^{\delta})$ on $\delta\Z^2$ 
converge to a polygon $(\Omega; x_1, \ldots, x_{2N})$ as $\delta\to 0$ in the close-Carath\'{e}odory sense.
Consider the critical Ising model on $\Omega^{\delta}$ with alternating boundary conditions~\eqref{eq::alternating}. 
Let $\varphi_{\delta} \colon \HH \to \Omega^{\delta}$
and $\varphi \colon \HH \to \Omega$ be conformal maps such that
as $\delta \to 0$, we have 
$\varphi_{\delta} \to \varphi$ uniformly on any compact subset of $\HH$, and $\varphi_\delta^{-1}(x_j^{\delta}) \to \varphi^{-1}(x_j) =: \realpt_j$ for all $j$. 
Fix $j$ throughout, and let $\eta_j^{\delta}$ be the Ising interface in $\Omega^{\delta}$ starting from $x_j^{\delta}$. 
For $r>0$, define $T_r^\delta = T_r^\delta(j)$ to be the first time when $\eta_j^{\delta}$ gets within distance $r$ from the other marked points 
$\{x_1^{\delta}, \ldots, x_{j-1}^{\delta}, x_{j+1}^{\delta}, \ldots, x_{2N}^{\delta}\}$. 
Let $\eta$ be the Loewner chain 
associated to the partition function $\PartF_{\Ising}$ with launching points $(\realpt_1, \ldots, \realpt_{2N})$  
starting from $\realpt_j$, up to the stopping time 
$T_r$, i.e., the first time when $\eta$ gets within distance $r$ from the other marked points. 
The fact that the stopping times for the interface $\eta_j^{\delta}$ indeed converge in probability to a stopping time for $\eta$
can be proved using the Russo-Seymour-Welsh estimate --- see~\cite[Section~4]{GarbanWuFKIsing}.
Then, as $\delta\to 0$, the curves $(\varphi_\delta^{-1}(\eta_j^{\delta}(t)), 0\le t\le T_r^\delta)$ 
converge weakly to $(\eta(t), 0\le t\le T_r)$ in the metric~\eqref{def::curves_metric}~\cite[Theorem~1.1]{IzyurovIsingMultiplyConnectedDomains}. 
We call this \textit{local convergence} since the convergence only holds up to the cutoff time $T_r$.

\smallbreak

To establish the convergence globally, that is, without a cutoff time, we need three pieces of input: 
\begin{enumerate}
\item \label{1} the local convergence, as explained above; 

\item \label{2} the fact that the limiting curve $\tilde{\eta}_j$ of $\varphi_\delta^{-1}(\eta_j^{\delta}(t))$ is continuous up to and including 
the first swallowing time~\eqref{eq:total_swallowing_time} of one of the other marked points, denoted $T$; and 

\item \label{3} the fact that the Loewner chain $\eta$ is continuous up to and including the same swallowing time~\eqref{eq:total_swallowing_time}. 
\end{enumerate}
With these three facts at hand, we know that $\tilde{\eta}_j$ has the same law as $\eta$ up to the cutoff time $T_r$ by~Input~\ref{1}, 
and letting $r\to 0$, we find that $\tilde{\eta}_j$ and $\eta$ have the same law up to and including $T$ thanks to of~Inputs~\ref{2} and~\ref{3}. 
Among the three pieces of input, we have the local convergence (Input~\ref{1}) from~\cite{IzyurovIsingMultiplyConnectedDomains} and~\cite{KemppainenSmirnovRandomCurves}. 
The continuity of the scaling limit (Input~\ref{2}) is a consequence of the Russo-Seymour-Welsh estimate for the Ising model 
from~\cite[Corollary~1.7]{ChelkakDuminilHonglerCrossingprobaFKIsing}
combined with the results in~\cite{AizenmanBurchardHolderRegularity, KemppainenSmirnovRandomCurves}, 
as argued in~\cite[Remark~3.2]{IzyurovObservableFree}. 
Finally, we established the (non-trivial) continuity of the Loewner chain $\eta$ (Input~\ref{3})  
in Theorem~\ref{thm::loewner_Ztotal_continuity} in the previous section. In summary, we have the following convergence.

\begin{proposition}\label{prop::cvg_Ztotal_global}
In the above setup \textnormal{($\clubsuit$)} with fixed $j \in \{1,2,\ldots,2N\}$, consider
the Ising interface $\eta_j^{\delta}$ starting from $x_j^{\delta}$ 
up to the first time $T^{\delta} = T^{\delta}(j)$ when $\eta_j^{\delta}$ hits one of the other marked points $\{x_1^{\delta}, \ldots, x_{j-1}^{\delta}, x_{j+1}^{\delta}, \ldots, x_{2N}^{\delta}\}$. 
Then, $(\varphi_\delta^{-1} (\eta_j^{\delta}), 0\le t\le T^{\delta})$ converges weakly to $(\eta(t), 0\le t\le T)$ in the metric~\eqref{def::curves_metric}, where $T$ is the first swallowing time~\eqref{eq:total_swallowing_time} by $\eta$ of some spectator point.
\end{proposition}

\begin{remark}
In~\textnormal{\cite[Theorem~1.2]{BeffaraPeltolaWuUniqueness}}, 
it was proved that for each $\alpha \in \LP_N$, there exists a unique ``global'' multiple $\SLE_3$
associated to $\alpha$. This is a probability measure supported on families of curves with the given topological connectivity $\alpha$.
Furthermore, it follows from~\textnormal{\cite[Lemma~4.8 \& Proposition~4.9]{PeltolaWuGlobalMultipleSLEs}} that 
these curves are given  by Loewner chains whose driving functions satisfy the SDEs~\eqref{eqn::loewnerchain_partition}
with $\PartF$ the pure partition function $\PartF_\alpha$ for $\kappa = 3$.
On the other hand, since 
$\PartF_{\Ising} = \sum_{\alpha} \PartF_\alpha$, 
it follows from Proposition~\ref{prop::cvg_Ztotal_global} and the so-called local commutation property 
for multiple $\SLE$s \`a la Dub{\'e}dat~\textnormal{\cite{DubedatCommutationSLE}}
\textnormal{(}cf.~\textnormal{\cite[Sampling~Procedure~A.3]{KytolaPeltolaPurePartitionSLE}}
and~\textnormal{\cite[Corollary~1.2]{PeltolaWuGlobalMultipleSLEs}}\textnormal{)} 
that the probability measure of the collection of limit curves of the Ising interfaces 
is the global multiple $\SLE_3$ which is a convex combination
of the extremal multiple $\SLE_3$ probability measures associated to the various possible connectivity patterns of the 
interfaces --- see also the recent work~\textnormal{\cite{Karrila_multiple_SLE}}. 
From this, we see that the probability amplitudes $\PartF_\alpha$ can be thought of as a manifestation of Doob's h-transform. 
Making this precise, however, would require some further technical work. 
\end{remark}

\subsection{Crossing Probabilities --- Proof of Theorem~\ref{thm::ising_crossingproba}} \label{ProofSec}

Now we are ready to prove the main result of this article:
\IsingTHM*

\begin{proof}
We prove the claim by induction on $N \geq 1$. It is trivial for $N=1$ because both sides of~\eqref{eqn::ising_crossing_proba} equal one.  
Thus, we assume that the claim holds for $N-1$, fix $\alpha\in\LP_N$ and aim to prove the claim for $\PP^{\delta}[\conn^{\delta}=\alpha]$. 
The probabilities $(\PP^{\delta}[\conn^{\delta}=\alpha])_{\delta>0}$ form a sequence of numbers in $[0,1]$, so there is always a subsequential limit. 
To show~\eqref{eqn::ising_crossing_proba}, it hence suffices to prove that 
\begin{align*}
\lim_{n\to\infty}\PP^{\delta_n}[\conn^{\delta_n}=\alpha] \; = \; 
\frac{\PartF_{\alpha}(\Omega;x_{1},\ldots,x_{2N})}{\PartF^{(N)}_{\Ising}(\Omega;x_{1},\ldots,x_{2N})} 
\end{align*}
for any convergent subsequence. Note that the right-hand side is
conformally invariant by the M\"{o}bius covariance~\eqref{eq: multiple SLE Mobius covariance} with $h=1/2$.

For topological reasons, the link pattern $\alpha$ contains at least one link of type $\link{j}{j+1}$. For definiteness, we assume that $j=1$, so $\link{1}{2} \in \alpha$. 
From Proposition~\ref{prop::cvg_Ztotal_global}, we see that 
$(\varphi_{\delta_n}^{-1} (\eta_1^{\delta_n}), 0\le t\le T^{\delta_n})$ converges weakly in the metric~\eqref{def::curves_metric}  to $(\eta(t), 0\le t\le T)$, that is the Loewner chain 
associated to the partition function $\PartF_{\Ising}$ with launching points $(\realpt_1, \ldots, \realpt_{2N})$  
starting from $\realpt_1$. 
For convenience, we couple them by the Skorohod representation theorem in the same probability space so that they converge almost surely.

First, let us analyze the limit curve $\eta$. Using the notation $g_t(\realpt_i) = V_t^i$, for all $i \neq 1$, we define 
\begin{align} \label{eq: Ising martingale}
M_t := \frac{\PartF_{\alpha}(W_t, V_t^2, \ldots, V_t^{2N})}{\PartF_{\Ising}^{(N)}(W_t, V_t^2, \ldots, V_t^{2N})} , \qquad t<T .
\end{align}
The partial differential equations~\eqref{eq: multiple SLE PDEs} show that $M_t$ is a local martingale. 
Let us consider the limit of $M_t$ as $t\to T$. 
From Theorem~\ref{thm::loewner_Ztotal_continuity}, we know that the curve $\eta$ is almost surely continuous up to and including $T$ and
terminates at one of the points $\{\realpt_2, \realpt_4, \ldots, \realpt_{2N}\}$.  
Denote by $D_{\eta}$ the unbounded connected component of $\HH\setminus\eta[0,T]$, and by $\hat{\alpha} = \alpha\removeLink \link{1}{2}\in\LP_{N-1}$. 
On the event $\{\eta(T)=\realpt_2\}$, we see from 
the strong asymptotics properties~\eqref{eqn::partf_alpha_asy_refined} and~\eqref{eqn::partf_total_asy_refined} 
and the conformal covariance property~\eqref{eq:conformal_image} that
\begin{align*}
M_t 
= \; & \bigg( \frac{\PartF_{\alpha}(W_t, V_t^2, \ldots, V_t^{2N})}{\PartF_{\Ising}^{(1)}(W_t, V_t^2)} \bigg)
\bigg( \frac{\PartF_{\Ising}^{(1)}(W_t, V_t^2)}{\PartF_{\Ising}^{(N)}(W_t, V_t^2, \ldots, V_t^{2N})} \bigg) \\
\overset{t\to T}{\longrightarrow} \; & \;
\frac{\PartF_{\hat{\alpha}}(V_T^3, \ldots, V_T^{2N})}{\PartF_{\Ising}^{(N-1)}(V_T^3, \ldots, V_T^{2N})} 
\; = \; 
\frac{\PartF_{\hat{\alpha}}(D_{\eta}; \realpt_3, \ldots, \realpt_{2N})}{\PartF_{\Ising}^{(N-1)}(D_{\eta}; \realpt_3, \ldots, \realpt_{2N})} ,
\qquad \text{almost surely}. 
\end{align*}
On the other hand, for each $\ell \in \{2, 3, \ldots, N\}$, on the event $\{\eta(T)=\realpt_{2\ell}\}$ 
Proposition~\ref{prop::mart_wrong_zero} gives 
\begin{align*}
M_t 
= \frac{\PartF_{\alpha}(W_t, V_t^2, \ldots, V_t^{2N})}{\PartF_{\Ising}^{(N)}(W_t, V_t^2, \ldots, V_t^{2N})}
\quad \overset{t\to T}{\longrightarrow} \quad 
0 ,
\qquad \text{almost surely}.
\end{align*}
In summary, we have 
\begin{align*}
M_t 
\quad \overset{t\to T}{\longrightarrow} \quad
M_T :=\one \{\eta(T)=\realpt_2\} \frac{\PartF_{\hat{\alpha}}(D_{\eta}; \realpt_3, \ldots, \realpt_{2N})}{\PartF_{\Ising}^{(N-1)}(D_{\eta}; \realpt_3, \ldots, \realpt_{2N})} ,
\qquad \text{almost surely}.
\end{align*}
Since $0<\PartF_{\alpha}/\PartF_{\Ising}\le 1$ due to~\eqref{eqn::zsymmetric_def}, we see that $(M_t, t \leq T)$ is a bounded martingale. The optional stopping theorem then gives the identity $M_0=\E[M_T]$, that is,
\begin{align} \label{eqn::ising_crossing_aux}
\frac{\PartF_{\alpha}(\realpt_1, \ldots, \realpt_{2N})}{\PartF_{\Ising}^{(N)}(\realpt_1, \ldots, \realpt_{2N})}
= \E \bigg[\one \{\eta(T)=\realpt_2\}\frac{\PartF_{\hat{\alpha}}(D_{\eta}; \realpt_3, \ldots, \realpt_{2N})}{\PartF_{\Ising}^{(N-1)}(D_{\eta}; \realpt_3, \ldots, \realpt_{2N})}\bigg] .
\end{align}

Next, let us consider the discrete interface $\eta_1^{\delta_n}$. For simplicity of notation, we shall use the superscript ``$n$'' instead of ``$\delta_n$'' in what follows. 
On the event $\{\eta_1^{n}(T^{n}) = x_2^{n}\}$, 
we denote by $D^{n}$ the connected 
component of $\Omega^{n}\setminus\eta_1^{n}$ with $x_3^{n}, \ldots, x_{2N}^{n}$ on its boundary. 
Since $\varphi_{n}^{-1} (\eta_1^{n})$ 
converges to the continuous simple curve $\eta$ in $\HH$   that intersects the boundary $\R$ only at its two endpoints by Theorem~\ref{thm::loewner_Ztotal_continuity}, 
we see that, as $n \to \infty$, 
the polygon $(D^{n}; x_3^{n}, \ldots, x_{2N}^{n})$ converges almost surely to 
the polygon $(\varphi^{-1}(D_{\eta}); x_3, \ldots, x_{2N})$ in the close-Carath\'{e}odory sense. 
Hence, using the domain Markov property of the Ising model,
the induction hypothesis, and the conformal invariance of the right-hand side of~\eqref{eqn::ising_crossing_proba}  
and the conformal invariance of the $\SLE_{3}$ type curve $\eta$, we have
\begin{align} 
\nonumber
\E^{n} [\one{\{\conn^{n}=\alpha\}}\cond \eta_1^{n}]
= \;\; & \one{\{\eta_1^{n}(T^{n}) = x_2^{n}\}} \; 
\E^{n} [\one{\{\smash{\widehat{\conn}^{n}} = \hat{\alpha}\}} \cond \eta_1^{n} ] \\
\nonumber
= \;\; & \one{\{\eta_1^{n}(T^{n}) = x_2^{n}\}} \; 
\hat{\PP}^{n} [\smash{\widehat{\conn}^{n}} = \hat{\alpha} ] \\
\label{eqn::induction_hypo}
\overset{n \to \infty}{\longrightarrow} \; &
\one \{\eta(T)=\realpt_2 \} \;
\frac{\PartF_{\hat{\alpha}}(D_{\eta}; \realpt_3, \ldots, \realpt_{2N})}{\PartF_{\Ising}^{(N-1)}(D_{\eta}; \realpt_3, \ldots, \realpt_{2N})}  ,
\end{align}
where $\hat{\PP}^{n}$ is the law of the Ising interfaces 
on the random polygon $(D^{n}; x_3^{n}, \ldots, x_{2N}^{n})$, measurable with respect to $\eta_1^{n}$, 
which form a random connectivity pattern $\smash{\widehat{\conn}^{n}} \in \LP_{N-1}$, 
and where, by the Skorohod representation theorem, 
we couple all of the random variables on the same probability space so that the convergence takes place almost surely. 
Thus, we conclude (using the bounded convergence theorem) that 
\begin{align*}
\tilde{P}_\alpha 
:= \; & \lim_{n\to\infty}\PP^{n}[\conn^{n}=\alpha] && \\
= \; &  \lim_{n\to\infty} \E^{n} \big[\one{\{\eta_1^{n}(T^{n}) = x_2^{n}\}} \; \E^{n} [\one{\{\conn^{n}=\alpha\}}\cond \eta_1^{n}]\big]
&& \text{[by tower property]}  \\
=\; & \E \bigg[\one \{\eta(T)=\realpt_2 \} 
\frac{\PartF_{\hat{\alpha}}(D_{\eta}; \realpt_3, \ldots, \realpt_{2N})}{\PartF_{\Ising}^{(N-1)}(D_{\eta}; \realpt_3, \ldots, \realpt_{2N})}\bigg]
&& \text{[by~\eqref{eqn::induction_hypo}]}  \\
= \; & \frac{\PartF_{\alpha}(\realpt_1, \ldots, \realpt_{2N})}{\PartF_{\Ising}^{(N)}(\realpt_1, \ldots, \realpt_{2N})} ,
&& \text{[by~\eqref{eqn::ising_crossing_aux}]} 
\end{align*}
This completes the induction step and finishes the proof of Theorem~\ref{thm::ising_crossingproba}.
\end{proof}

\begin{proof}[Proof of Corollary~\ref{cor::CI}]
The asserted properties follow from the corresponding properties of the multiple $\SLE_3$ partition functions $\PartF_\alpha$ and $\PartF_{\Ising}$ by using the right-hand side of~\eqref{eqn::ising_crossing_proba}:
the conformal invariance follows from the M\"{o}bius covariance~\eqref{eq: multiple SLE Mobius covariance} with $h=1/2$;
the asymptotics is a consequence of~(\ref{eq: multiple SLE asymptotics},~\ref{eqn::partf_total_asy_refined})
with $h=1/2$;
and the PDEs~\eqref{eq:PDEIsing} are given by the PDEs~\eqref{eq: multiple SLE PDEs} with $\kappa=3$ and $h=1/2$.
\end{proof}

\newpage
\appendix
\section{Connection Probabilities for Level Lines of GFF}
\label{app::appendix_gff}
In the proof of Lemma~\ref{lem::b_total_ineq} in Section~\ref{sec::Ising_partition_function}, 
we use the following facts concerning the level lines of the Gaussian free field ($\GFF$).
We shall not define the $\GFF$ nor its level lines precisely, because they are not needed to understand the present article.
The reader may find background on this topic, e.g., in~\cite{SheffieldGFFMath, SchrammSheffieldContinuumGFF, WangWuLevellinesGFFI}.
Importantly, the level lines are $\SLE_\kappa$ type curves with $\kappa = 4$.

Fix a constant $\lambda=\pi/2$. Let $\gff$ be the $\GFF$ in $\HH$ with alternating boundary data: 
\begin{align*}
\lambda \text{ on }(x_{2s-1}, x_{2s}), \text{ for } s \in \{ 1, \ldots, N \} ,
\quad \text{and} \quad 
-\lambda \text{ on }(x_{2s}, x_{2s+1}) , \text{ for }  s \in \{ 0, 1, \ldots, N \} ,
\end{align*}
with the convention that $x_0=-\infty$ and $x_{2N+1}=\infty$. 
For $s \in \{1,2,\ldots,N\}$, let $\eta_s$ be the level line of $\gff$ starting from $x_{2s-1}$, 
considered as an oriented curve. 
Similarly as in the case of the Ising model, the endpoints of the level lines $(\eta_1, \ldots, \eta_N)$ 
give rise to a planar pair partition, which we encode in a link pattern $\LA = \LA(\eta_1, \ldots, \eta_N) \in \LP_N$.

\begin{theorem} \label{thm::multiple_sle_4}
\textnormal{\cite[Theorem~1.4]{PeltolaWuGlobalMultipleSLEs}}
Consider the level lines of the $\GFF$ in $(\HH; x_1, \ldots, x_{2N})$ with alternating boundary data. 
Denote by $\LA$ the random connectivity pattern in $\LP_N$ formed by the $N$ level lines. Then, we have
\begin{align}\label{eq::crossing_probabilities_for_kappa4}
\PP[\LA=\alpha] = \frac{\PartF_{\alpha}  (x_1, \ldots, x_{2N}) }
{\PartF^{(N)}_{\GFF}  (x_1, \ldots, x_{2N} ) }, 
\quad \text{ for all } 
\alpha\in \LP_N, \quad\text{where } \PartF^{(N)}_{\GFF} := \sum_{\alpha\in\LP_N} \PartF_{\alpha},
\end{align}
and $\{\PartF_{\alpha} \colon \alpha \in \LP_N\}$ 
is the collection of functions uniquely determined as the solution to the PDE boundary value problem
given in Definition~\ref{defn:PPF}, 
also known as pure partition functions of multiple $\SLE_\kappa$ with $\kappa = 4$. 
\end{theorem}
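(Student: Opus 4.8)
The plan is to run, essentially verbatim, the argument that proves Theorem~\ref{thm::ising_crossingproba}, with the Ising-specific ingredients replaced by their Gaussian-free-field analogues. I would argue by induction on $N\ge1$; the case $N=1$ is immediate since the level line from $x_1$ terminates almost surely at $x_2$, so both sides of~\eqref{eq::crossing_probabilities_for_kappa4} equal one. Assuming~\eqref{eq::crossing_probabilities_for_kappa4} for $N-1$, I fix $\alpha\in\LP_N$ and, without loss of generality, $\link{1}{2}\in\alpha$; write $\hat\alpha:=\alpha\removeLink\link{1}{2}\in\LP_{N-1}$ and let $\eta_1$ be the level line of $\gff$ starting from $x_1$.

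First I would collect the three facts about $\eta_1$ that are needed, all of which follow from the theory of GFF level lines with piecewise constant boundary data~\cite{SchrammSheffieldContinuumGFF, WangWuLevellinesGFFI} and which are in this respect \emph{easier} than their $\kappa=3$ counterparts, because here there is no lattice model to match. First, $\eta_1$ is almost surely a continuous simple curve up to and including the time $T$ at which it reaches its other endpoint, which is almost surely one of the even-labelled points $x_2,x_4,\dots,x_{2N}$ (those where the boundary data jumps down from $\lambda$ to $-\lambda$). Second, conditionally on $\{\eta_1(T)=x_{2k}\}$, the restriction of $\gff$ to each of the two components $D_{\eta_1}^R$ and $D_{\eta_1}^L$ of $\HH\setminus\eta_1[0,T]$ is again a GFF with alternating boundary data on the $2(k-1)$, resp.\ $2(N-k)$, remaining marked points (the trace of $\eta_1$ carrying constant data $\pm\lambda$), so that the remaining level lines in those components are the level lines of these restricted fields and their connectivities are conditionally independent of $\eta_1$ given the endpoints. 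Third, the Loewner driving function of $\eta_1$ is that of $\SLE_4$ in $\HH$ from $x_1$ to $\infty$ weighted by the local martingale $\prod_{i\neq1}g_t'(x_i)^{1/4}\,\PartF^{(N)}_{\GFF}(W_t,g_t(x_2),\dots,g_t(x_{2N}))$; equivalently, $\eta_1$ is the Loewner chain associated to $\PartF^{(N)}_{\GFF}$ in the sense of~\eqref{eqn::loewnerchain_partition}, which I would verify by matching the $\SLE_4(\underline\rho)$ drift produced by the boundary jumps with $4\,\partial_1\log\PartF^{(N)}_{\GFF}$.

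Granting these, the rest of the proof is the one in Section~\ref{ProofSec}. Along $\eta_1$, I would set $M_t:=\PartF_\alpha(W_t,V_t^2,\dots,V_t^{2N})/\PartF^{(N)}_{\GFF}(W_t,V_t^2,\dots,V_t^{2N})$; the PDE system~\eqref{eq: multiple SLE PDEs} makes $M_t$ a local martingale, and $0<\PartF_\alpha\le\PartF^{(N)}_{\GFF}$ makes it bounded, hence a genuine martingale. Using the refined asymptotics~\eqref{eqn::partf_alpha_asy_refined} and~\eqref{eqn::partf_total_asy_refined} and the continuity of $\eta_1$ up to $T$, one finds that on $\{\eta_1(T)=x_2\}$ the ratio $M_t$ converges to $\PartF_{\hat\alpha}(D_{\eta_1};x_3,\dots,x_{2N})/\PartF^{(N-1)}_{\GFF}(D_{\eta_1};x_3,\dots,x_{2N})$, where $D_{\eta_1}$ is the unbounded component of $\HH\setminus\eta_1[0,T]$, while on $\{\eta_1(T)=x_{2k}\}$ with $k\ge2$ it converges to $0$ (the $\kappa=4$ analogues of Propositions~\ref{prop::Z_total_asy_refined} and~\ref{prop::mart_wrong_zero}). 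Optional stopping then gives
\begin{align*}
\frac{\PartF_\alpha(x_1,\dots,x_{2N})}{\PartF^{(N)}_{\GFF}(x_1,\dots,x_{2N})}
= \E\!\left[\one_{\{\eta_1(T)=x_2\}}\,\frac{\PartF_{\hat\alpha}(D_{\eta_1};x_3,\dots,x_{2N})}{\PartF^{(N-1)}_{\GFF}(D_{\eta_1};x_3,\dots,x_{2N})}\right],
\end{align*}
and by the second fact together with the induction hypothesis applied inside $D_{\eta_1}$, the right-hand side equals $\E[\one_{\{\eta_1(T)=x_2\}}P_{\hat\alpha}(D_{\eta_1};x_3,\dots,x_{2N})]=\PP[\LA=\alpha]$, reading off the chain of equalities exactly as in~\eqref{Conclusion}.

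The hard part will be, as for $\kappa=3$, the boundary behaviour of the ratio $M_t=\PartF_\alpha/\PartF^{(N)}_{\GFF}$ as $t\to T$: one must establish the $\kappa=4$ versions of Propositions~\ref{prop::Z_total_asy_refined} and~\ref{prop::mart_wrong_zero}, i.e.\ that $\PartF^{(N)}_{\GFF}$ stays comparable to $\big(\LB^{(N)}\big)^{1/2}$ and cascades correctly when marked points collide. Here, however, $\PartF^{(N)}_{\GFF}$ for $\kappa=4$ admits an explicit product formula of Kenyon--Wilson type, so these become direct computations rather than requiring the Pfaffian/Haffnian analysis of Section~\ref{sec::Ising_partition_function}; and for the same reason the $\kappa=4$ case bypasses the delicate Theorem~\ref{thm::loewner_Ztotal_continuity} altogether, because the relevant Loewner chain is the GFF level line itself, whose continuity up to $T$ is already part of the first fact above.
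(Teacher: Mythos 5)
This theorem is not proved in the present paper --- it is quoted from \cite[Theorem~1.4]{PeltolaWuGlobalMultipleSLEs} --- and your sketch reproduces essentially the argument given there, which is also the template for the paper's own proof of Theorem~\ref{thm::ising_crossingproba}: induction on $N$ via the bounded martingale $\PartF_\alpha/\PartF^{(N)}_{\GFF}$ along the first level line, optional stopping, the refined asymptotics, and the conditional law of the GFF in the complementary components. Your observations that the explicit formula $\PartF^{(N)}_{\GFF}=\big(\LB^{(N)}\big)^{1/2}$ trivializes the $\kappa=4$ analogues of Propositions~\ref{prop::Z_total_asy_refined} and~\ref{prop::mart_wrong_zero}, and that the continuity input (Theorem~\ref{thm::loewner_Ztotal_continuity} for $\kappa=3$) is replaced by the known continuity of GFF level lines, are both correct and are exactly the simplifications exploited in that reference.
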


Moreover, for $a,b\in\{1,2,\ldots,2N\}$, where $a$ is odd and $b$ is even, the probability that the level line 
of the $\GFF$ starting from $x_a$ terminates at $x_b$ is given by~\cite[Proposition~5.6]{PeltolaWuGlobalMultipleSLEs}:
\begin{align}\label{eqn::levellines_proba_lk}
P^{(a,b)} (x_1, \ldots, x_{2N})
= \prod_{\substack{1\le i\le 2N, \\ i\neq a,b}} 
\Big| \frac{x_i - x_a}{x_i - x_b} \Big|^{(-1)^i} .
\end{align}

R.~Kenyon and D.~Wilson have found
explicit formulas for crossing probabilities in the double-dimer model~\cite{Kenyon-Wilson:Boundary_partitions_in_trees_and_dimers}.
These formulas are combinatorial expressions involving the inverse Kasteleyn matrix.
On the other hand, explicit formulas of similar type were obtained 
in~\cite{Kenyon-Wilson:Boundary_partitions_in_trees_and_dimers, PeltolaWuGlobalMultipleSLEs} 
for the connection probabilities of the GFF level lines appearing in~\eqref{eq::crossing_probabilities_for_kappa4}, 
where the inverse Kasteleyn matrix gets replaced by the boundary Poisson 
kernel.
Using Kenyon's results~\cite{Kenyon_domino_tiling}, it should be possible to explicitly check that in suitable approximations,
the double-dimer crossing probabilities converge in the scaling limit to~\eqref{eq::crossing_probabilities_for_kappa4}.
Note, however, that the convergence of double-dimer interfaces to the $\SLE_4$ still remains conjectural.
\newpage
\section{Connection Probabilities for Loop-Erased Random Walks}
\label{app::appendix_lerw}
R.~Kenyon and D.~Wilson also found in~\cite{Kenyon-Wilson:Boundary_partitions_in_trees_and_dimers}
determinantal formulas (analogous to~Fomin~\cite{Fomin-LERW_and_total_positivity}) 
for connectivity probabilities for multichordal loop-erased random walks (LERW). 
It follows from these formulas and the discrete complex analysis developed in~\cite{CFL-uber_die_PDE_der_mathphys,CS-discrete_complex_analysis_on_isoradial} 
that the multichordal LERW connectivity probabilities converge (when suitably renormalized) to the pure partition functions 
of multiple $\SLE_\kappa$ with $\kappa = 2$ --- for a proof, see, e.g.,~\cite[Theorems~3.16 and~4.1]{KarrilaKytolaPeltolaCorrelationsLERWUST} and~\cite[Theorem~2.2]{KarrilaUSTBranches}.

In these results, the multichordal LERWs are realized as boundary touching branches in a uniform spanning tree (UST) with wired boundary conditions\footnote{A~\textit{spanning tree} of a finite connected graph $\graph$ is a subgraph that is connected, has no cycles, and contains every vertex of $\graph$.
A~\textit{uniform} spanning tree in $\graph$ is a spanning tree chosen uniformly at random amongst all spanning trees in $\graph$.
A~uniform spanning tree in a finite planar graph $\graph$ with \textit{wired boundary conditions} is obtained by considering a~uniform spanning tree in the quotient graph obtained from $\graph$ by collapsing the outer boundary of $\graph$ into a single vertex. See, e.g.,~\cite{LawlerSchrammWernerLERWUST, KarrilaKytolaPeltolaCorrelationsLERWUST} for more details.}. 
Such curves converge in the scaling limit to multiple $\SLE_\kappa$ curves with 
$\kappa = 2$~\cite{SchrammScalinglimitsLERWUST, LawlerSchrammWernerLERWUST, Zhan-scaling_limits_of_planar_LERW, Karrila_multiple_SLE}. 
The reader can find the precise definitions of these objects, as well as the detailed setup for the scaling limit results, 
in~\cite[Section~3]{KarrilaKytolaPeltolaCorrelationsLERWUST} --- see also the recent~\cite{KarrilaUSTBranches}.
To give a satisfactory statement, we translate the notations used there into the notations used in the present article.
Recall from Section~\ref{sub::Ising_notation} that, for a finite subgraph $\graph = (V(\graph), E(\graph))$ of $\Z^2$, 
we denote the inner boundary of $\graph$ by 
$\partial \graph = \{ v \in V(\graph) \, \colon \, \exists \, w \not\in V(\graph) \text{ such that } \edge{v}{w} \in E(\Z^2)\}$.
We also define the outer boundary (called boundary in~\cite{KarrilaKytolaPeltolaCorrelationsLERWUST})
of $\graph$ as the vertices $w \not\in V(\graph)$ for which there exists a vertex $v \in V(\graph)$ such that $\edge{v}{w}\in E(\Z^2)$.
Then, as in~\cite[Section~3]{KarrilaKytolaPeltolaCorrelationsLERWUST},
we call the edges $e = \edge{v}{w}$ \textit{boundary edges} of $\graph$, and we denote $v = e^\circ$ and $w = e^\partial$.

Now, for a discrete polygon $(\graph;  v_1,  \ldots,  v_{2N})$, we may consider those branches in the UST that start from the vertices $v_1,  \ldots,  v_{2N}$.
Importantly, these discrete curves may form other topological configurations than those labeled by the link patterns
--- the curves can merge in various ways. 
However, for each link pattern
$\alpha =  \{ \link{a_1}{b_1} , \ldots, \link{a_N}{b_N}\} \in \LP_N$, 
we can consider the probability that the discrete curves form the connectivity pattern $\alpha$ in the following sense.
We group the marked vertices $v_1,  \ldots,  v_{2N}$ into two groups
$\{v_{2s-1} \colon s = 1,2, \ldots, N\}$ and $\{v_{2s} \colon s = 1,2, \ldots, N\}$. 
We form a modified discrete polygon by replacing the latter group by $\{w_{2s} \colon s = 1,2, \ldots, N\}$,
where $v_{2s}$ and $w_{2s}$ form a boundary edge $e_{2s} = \edge{v_{2s}}{w_{2s}}$, so that $v_{2s} = e_{2s}^\circ$ and $w_{2s} = e_{2s}^\partial$.
Then, we may consider the event that there exist $N$ branches in the UST connecting $v_{a_s}$ to $w_{b_s}$, for all $s \in \{1,2,\ldots,N\}$, with the convention that in $\alpha$, the indices $a_1, \ldots, a_N$ are odd and the indices $b_1, \ldots, b_N$ are even. 
Adapting the notations in~\cite{KarrilaKytolaPeltolaCorrelationsLERWUST}, for each $s \in \{1,2,\ldots,N\}$, we denote the corresponding event by
\begin{align*}
\{ \textnormal{ $\exists$ branch in the UST connecting $v_{a_s}$ to $w_{b_s}$ } \} = \{\pathfromto{v_{a_s}}{w_{b_s}}\} 
\end{align*}
so that the desired connectivity event is 
\begin{align*}
\{\textnormal{ connectivity of $v_1,  \ldots,  v_{2N}$ is $\alpha$ } \} = \underset{s=1}{\overset{N}{\bigcap}} \{\pathfromto{v_{a_s}}{w_{b_s}}\} .
\end{align*}

\begin{theorem} \label{thm::LERW_crossingproba}
\textnormal{\cite[Theorems~3.16 and~4.1]{KarrilaKytolaPeltolaCorrelationsLERWUST}}
Let discrete polygons
$(\Omega^{\delta}; x_1^{\delta}, \ldots, x_{2N}^{\delta})$ on $\delta\Z^2$ 
approximate a regular enough\footnote{The boundaries of $\Omega$ and $\Omega^\delta$ are locally horizontal or vertical line segments near the marked boundary points.} 
polygon
$(\Omega; x_1, \ldots, x_{2N})$ as $\delta\to 0$, as detailed in~\textnormal{\cite[Section~3]{KarrilaKytolaPeltolaCorrelationsLERWUST}}.
Consider the uniform spanning tree in $\Omega^{\delta}$ with wired boundary conditions. 
Then, we have
\begin{align}\label{eqn::lerw_crossing_proba}
\lim_{\delta\to 0} \delta^{-2N} 
\PP^{\delta}[\textnormal{ connectivity of $x_1^{\delta}, \ldots, x_{2N}^{\delta}$ is $\alpha$ }] =
\pi^{-N} \PartF_{\alpha}(\Omega;x_{1},\ldots,x_{2N}), \quad \text{for all }\alpha\in\LP_N,
\end{align}
and $\{\PartF_{\alpha} \colon \alpha \in \LP_N\}$ 
is the collection of functions uniquely determined as the solution to the PDE boundary value problem
given in Definition~\ref{defn:PPF}, 
also known as pure partition functions of multiple $\SLE_\kappa$ with $\kappa = 2$. 
\end{theorem}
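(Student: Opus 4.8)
The plan is to prove Theorem~\ref{thm::LERW_crossingproba} in the two steps reflected by its two sources~\textnormal{\cite[Theorems~3.16 and~4.1]{KarrilaKytolaPeltolaCorrelationsLERWUST}}: first show that the rescaled discrete connectivity probabilities converge to an explicit, conformally covariant expression built from the continuum boundary Poisson kernel $H_\Omega$, and then identify that expression with the multiple $\SLE_2$ pure partition function $\PartF_\alpha$ through the characterization by $\mathrm{(PDE)}$, $\mathrm{(COV)}$, $\mathrm{(ASY)}$ from Section~\ref{subsec::pre_partitionfucntions}.

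For the first step, I would start from Fomin's formula and its refinement in~\cite{Kenyon-Wilson:Boundary_partitions_in_trees_and_dimers, KarrilaKytolaPeltolaCorrelationsLERWUST}, which expresses the probability that the UST branches from $v_{a_1}, \ldots, v_{a_N}$ reach $w_{b_1}, \ldots, w_{b_N}$ with the prescribed non-crossing connectivity $\alpha$ as a signed combinatorial (inverse-meander / determinantal) sum of products of discrete boundary Poisson kernels $P^\delta(e_{a_i}^\delta, e_{b_j}^\delta)$ between the relevant boundary edges $e^\circ, e^\partial$. The core analytic input is then the convergence, from the discrete complex analysis of~\cite{CFL-uber_die_PDE_der_mathphys, CS-discrete_complex_analysis_on_isoradial} (see also~\cite[Section~3]{KarrilaKytolaPeltolaCorrelationsLERWUST}), of these discrete kernels to the continuum boundary Poisson kernel, of the form $\delta^{-2}\, P^\delta(e_{a}^\delta, e_{b}^\delta) \to \pi^{-1}\, H_\Omega(x_a, x_b)$, uniformly under Carath\'eodory convergence of the polygons and with an error controlled by Beurling-type estimates. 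Collecting one factor $\delta^{-2}$ and one factor $\pi^{-1}$ for each of the $N$ links produces the normalization $\delta^{-2N}$ and $\pi^{-N}$ of~\eqref{eqn::lerw_crossing_proba}, and shows that the limit equals $\pi^{-N}$ times the continuum expression $F_\alpha(\Omega; x_1, \ldots, x_{2N})$ obtained by substituting $H_\Omega$ for the discrete kernels in the same combinatorics.

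For the second step, I would verify that $F_\alpha$ satisfies the defining properties of $\PartF_\alpha$ with $\kappa = 2$, where $h = \frac{6-\kappa}{2\kappa} = 1$. M\"obius covariance $\mathrm{(COV)}$ is immediate since each $H_\Omega$ carries conformal weight $1$ at each argument. The asymptotics $\mathrm{(ASY)}$ follow since, as $x_j, x_{j+1} \to \xi$, one has $H_\HH(x_j, x_{j+1}) = (x_{j+1}-x_j)^{-2}$ with $2h = 2$, so dividing $F_\alpha$ by $(x_{j+1}-x_j)^{-2h}$ isolates exactly the term in which $\link{j}{j+1}$ is a link of $\alpha$ (which degenerates to the $(N-1)$-point expression $F_{\hat{\alpha}}$), all other terms being of strictly lower order, while if $\link{j}{j+1} \notin \alpha$ the whole expression is lower order and the limit is $0$. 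The last point is the second-order system~\eqref{eq: multiple SLE PDEs} with $\kappa = 2$, which can be checked directly on $\HH$ from $H_\HH(x,y) = (y-x)^{-2}$ together with the combinatorial identity underlying Fomin's formula (essentially present in~\cite{Kenyon-Wilson:Boundary_partitions_in_trees_and_dimers}). Uniqueness of the pure partition functions under the power-law bound of~\cite{FloresKlebanPDE1} then gives $F_\alpha = \PartF_\alpha$. Alternatively, one can bypass the direct PDE check by using the convergence of UST branches to multiple $\SLE_2$ curves~\cite{Zhan-scaling_limits_of_planar_LERW, Karrila_multiple_SLE}, growing one branch as a Loewner chain, and identifying $F_\alpha$ as a multiple $\SLE_2$ partition function via a discrete martingale observable passed to the limit, exactly in the spirit of the proof of Theorem~\ref{thm::ising_crossingproba}.

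The main obstacle is the first step: controlling the discrete-to-continuum convergence of Fomin's formula \emph{uniformly up to the marked boundary points} at the degenerate scale $\delta^{-2N}$. This requires boundary regularity estimates for discrete harmonic measure and discrete Poisson kernels near (rough) boundary arcs, together with crossing (Russo--Seymour--Welsh--type) and Beurling estimates for the UST ensuring that the signed combinatorial sum does not lose its leading order to cancellation; nondegeneracy of the limit is ultimately guaranteed by positivity, since $F_\alpha$ arises as a scaled limit of genuine probabilities.
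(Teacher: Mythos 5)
The paper does not actually prove this statement: it is imported verbatim from \cite[Theorems~3.16 and~4.1]{KarrilaKytolaPeltolaCorrelationsLERWUST}, and the surrounding text of Appendix~\ref{app::appendix_lerw} attributes the proof to exactly the route you describe --- the Fomin/Kenyon--Wilson determinantal formulas for UST connectivity probabilities from \cite{Kenyon-Wilson:Boundary_partitions_in_trees_and_dimers}, the discrete complex analysis of \cite{CFL-uber_die_PDE_der_mathphys, CS-discrete_complex_analysis_on_isoradial} giving $\delta^{-2}P^\delta \to \pi^{-1}H_\Omega$ for the boundary kernels, and the identification of the limiting determinantal expressions with the pure partition functions of multiple $\SLE_2$ via $\mathrm{(PDE)}$, $\mathrm{(COV)}$, $\mathrm{(ASY)}$ and the uniqueness result of \cite{FloresKlebanPDE1}. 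Your sketch is therefore a faithful reconstruction of the cited proof, and the difficulties you flag (uniform boundary control of the discrete kernels at the degenerate scale, and the verification of the PDE and asymptotics for the signed sum) are precisely the points carried out in detail in that reference.
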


We remark the normalization factor $\delta^{-2N}$ in the connection probability~\eqref{eqn::lerw_crossing_proba},
absent from Theorem~\ref{thm::ising_crossingproba} for the Ising case, as well as the absence of the symmetric partition function 
$\PartF_{\LERW} := \sum_\alpha \PartF_{\alpha}$. In fact, a formula for 
$\PartF_{\LERW}$ is 
known~\cite[Lemma~4.12]{PeltolaWuGlobalMultipleSLEs}, and one could also consider the UST \textit{conditioned on the event} that
the branches starting from the vertices $x_1^{\delta}, \ldots, x_{2N}^{\delta}$ connect according to some (random) link pattern $\conn^{\delta}$ that belongs to $\LP_N$.
This conditioning accounts to dividing by 
$\PartF_{\LERW}$, and with the conditioned UST probability measure $\tilde{\PP}^{\delta}_{\textrm{LERW}}$, we have
(see~\cite[Theorem~2.2]{KarrilaUSTBranches}) 
\begin{align}\label{eqn::lerw_crossing_proba2}
\lim_{\delta\to 0} \tilde{\PP}^{\delta}_{\textrm{LERW}} [\conn^{\delta}=\alpha] 
= \frac{\PartF_{\alpha}(\Omega;x_{1},\ldots,x_{2N})}{\PartF^{(N)}_{\LERW}(\Omega;x_{1},\ldots,x_{2N})}, \quad \text{for all }\alpha\in\LP_N .
\end{align}
Here, the powers $\delta^{-2N}$ are cancelled by the conditioning, resulting in the normalization factor 
$\PartF_{\LERW}$ on the right-hand side. 
Instead of Theorem~\ref{thm::LERW_crossingproba} which crucially relies on exact solvability of the crossing probabilities in the discrete model in terms of Fomin's formulas~\cite{Fomin-LERW_and_total_positivity},  
similar ideas as in the case of the Ising model could be used here. For such an approach, the main inputs would be the following:
\begin{enumerate}
\item \label{1LERW} local convergence of the branches to multiple $\SLE_2$ curves (proven in~\cite[Theorem~2.1]{KarrilaUSTBranches});
\item \label{2LERW} continuity of the limiting 
curve up to and including the swallowing time of the spectator points  (proven in~\cite[Theorem~6.8]{Karrila_multiple_SLE});~and
\item \label{3LERW} the fact that the Loewner chain associated to 
the partition function $\PartF_{\LERW}$  is continuous up to and including the same stopping time.
Item~\ref{3LERW} could be shown similarly as Theorem~\ref{thm::loewner_Ztotal_continuity}, 
provided that one first proves analogues of Propositions~\ref{prop::Z_total_asy_refined} and~\ref{prop::Z_total_ineq} for $\kappa = 2$. 
\end{enumerate}

\bigskip

{\small
\newcommand{\etalchar}[1]{$^{#1}$}


\begin{thebibliography}{CDCH{\etalchar{+}}14}
\bibitem[AB99]{AizenmanBurchardHolderRegularity}
Michael Aizenman and Almut Burchard.
\newblock H{\"o}lder regularity and dimension bounds for random curves.
\newblock {\em Duke Math. J.}, 99(3):419--453, 1999.

\bibitem[ASA02]{ASA02}
Louis-Pierre Arguin and Yvan Saint-Aubin. 
\newblock Non-unitary observables in the 2d critical {I}sing model. 
\newblock {\em Phys.~Lett.~B}, 541(3-4):384--389, 2002. 

\bibitem[BBK05]{KytolaMultipleSLE}
Michel Bauer, Denis Bernard, and Kalle Kyt{\"o}l{\"a}.
\newblock Multiple {S}chramm-{L}oewner evolutions and statistical mechanics martingales.
\newblock {\em J. Stat. Phys.}, 120(5-6):1125--1163, 2005.

\bibitem[BH19]{BenoistHonglerIsingCLE}
St{\'e}phane Benoist and Cl{\'e}ment Hongler.
\newblock The scaling limit of critical Ising interfaces is $\CLE(3)$.
\newblock {\em Ann. Probab.}, 47(4):2049--2086, 2019. 


\bibitem[BPW21]{BeffaraPeltolaWuUniqueness}
Vincent Beffara, Eveliina Peltola, and Hao Wu.
\newblock On the uniqueness of global multiple {SLE}s. 
\newblock {\em Ann. Probab.}, 49(1):400--434, 2021.

\bibitem[BPZ84]{BelavinPolyakovZamolodchikovConformalSymmetry}
Alexander~A. Belavin, Alexander~M. Polyakov, and Alexander~B. Zamolodchikov.
\newblock Infinite conformal symmetry of critical fluctuations in two
  dimensions.
\newblock {\em J.~Stat. Phys.}, 34(5-6):763--774, 1984.


\bibitem[Car92]{CardyPercolation}
John L.~Cardy.
\newblock Critical percolation in finite geometries.
\newblock {\em J. Phys. A}, 25(4):L201--206, 1992.

\bibitem[Car96]{CardyRenorm}
John L.~Cardy.
\newblock {\em Scaling and renormalization in statistical physics.}
\newblock Volume 5 of {\em Cambridge lecture notes in physics}. Cambridge University Press, 1996.

\bibitem[CDCH{\etalchar{+}}14]{CDCHKSConvergenceIsingSLE}
Dmitry Chelkak, Hugo Duminil-Copin, Cl{\'e}ment Hongler, Antti Kemppainen, and
  Stanislav Smirnov.
\newblock Convergence of {I}sing interfaces to {S}chramm's {SLE} curves.
\newblock {\em Comptes Rendus Mathematique}, 352(2):157--161, 2014.

\bibitem[CDCH16]{ChelkakDuminilHonglerCrossingprobaFKIsing}
Dmitry Chelkak, Hugo Duminil-Copin, and Cl{\'e}ment Hongler.
\newblock Crossing probabilities in topological rectangles for the critical
  planar {FK}-{I}sing model.
\newblock {\em Electron. J. Probab.}, 21:Paper No. 5, 28, 2016.

\bibitem[CFL28]{CFL-uber_die_PDE_der_mathphys}
Richard Courant, Kurt Friedrichs, and Hans Lewy.
\newblock \"Uber die partiellen Differenzengleichungen der mathematischen Physik.
\newblock {\em Math. Ann.}, 100(1):32--74, 1928.

\bibitem[CGN15]{CGN:Planar_Ising_magnetization_field1}
Federico Camia, Christophe Garban, and Charles M.~Newman.
\newblock Planar {I}sing magnetization field {I}. {U}niqueness of the critical scaling limit.
\newblock {\em Ann. Probab.}, 43(2):528--571, 2015.

\bibitem[CHI15]{ChelkakHonglerLzyurovConformalInvarianceCorrelationIsing}
Dmitry Chelkak, Cl\'ement Hongler, and Konstantin Izyurov.
\newblock Conformal invariance of spin correlations in the planar {I}sing
  model.
\newblock {\em Ann. of Math. (2)}, 181(3):1087--1138, 2015.

\bibitem[CHI21]{ChelkakHonglerLzyurovConformalInvarianceCorrelationIsingGeneral}
Dmitry Chelkak, Cl\'ement Hongler, and Konstantin Izyurov.
\newblock Correlations of primary fields in the critical Ising model. 
\newblock {\em Preprint in} arXiv:2103.10263.

\bibitem[CI13]{ChelkakLzyurovSpinorIsing}
Dmitry Chelkak and Konstantin Izyurov.
\newblock Holomorphic spinor observables in the critical {I}sing model.
\newblock {\em Comm. Math. Phys.}, 322(2):303--332, 2013.

\bibitem[CS11]{CS-discrete_complex_analysis_on_isoradial}
Dmitry Chelkak and Stanislav Smirnov.
\newblock  Discrete complex analysis on isoradial graphs.
\newblock  {\em Adv. Math.}, 228(3):1590--1630, 2011.
 
\bibitem[CS12]{ChelkakSmirnovIsing}
Dmitry Chelkak and Stanislav Smirnov.
\newblock Universality in the 2{D} {I}sing model and conformal invariance of
  fermionic observables.
\newblock {\em Invent. Math.}, 189(3):515--580, 2012.

\bibitem[DFMS97]{DMS:CFT}
Philippe Di~Francesco, Pierre Mathieu, and David S{\'e}n{\'e}chal.
\newblock {\em Conformal field theory}.
\newblock Springer-Verlag, New York, 1997.

\bibitem[Dub06]{Dubedat:Euler_integrals_for_commuting_SLEs}
Julien Dub{\'e}dat.
\newblock Euler integrals for commuting {SLE}s.
\newblock {\em J.~Stat. Phys.}, 123(6):1183--1218, 2006.

\bibitem[Dub07]{DubedatCommutationSLE}
Julien Dub{\'e}dat.
\newblock Commutation relations for {S}chramm-{L}oewner evolutions.
\newblock {\em Comm. Pure Appl. Math.}, 60(12):1792--1847, 2007.

\bibitem[Dur10]{Durrett}
Rick Durrett.
\newblock Probability: Theory and Examples, 4th Edition.
\newblock Cambridge University Press, 2010.

\bibitem[FK15a]{FloresKlebanPDE1}
Steven~M. Flores and Peter Kleban.
\newblock A solution space for a system of null-state partial differential
  equations: {P}art 1.
\newblock {\em Comm. Math. Phys.}, 333(1):389--434, 2015.

\bibitem[FK15b]{FloresKlebanPDE4}
Steven~M. Flores and Peter Kleban.
\newblock A solution space for a system of null-state partial differential
  equations: {P}art 4.
\newblock {\em Comm. Math. Phys.}, 333(2):669--715, 2015.


\bibitem[Fom01]{Fomin-LERW_and_total_positivity}
Sergey Fomin.
\newblock Loop-erased walks and total positivity.
\newblock {\em Trans. Amer. Math. Soc.}, 353(9):3363--3583, 2001.


\bibitem[FPW22]{FPW22}
Yu Feng, Eveliina Peltola, and Hao Wu.
\newblock Connection probabilities of multiple FK-Ising interfaces. 
\newblock {\em Preprint in} arXiv:2205.08800.

\bibitem[FSKZ17]{FSKZ-A_formula_for_crossing_probabilities_of_critical_systems_inside_polygons}
Steven~M. Flores, Jacob J.~H. Simmons, Peter Kleban, and Robert~M. Ziff.
\newblock A formula for crossing probabilities of critical systems inside polygons.
\newblock {\em J. Phys. A}, 50(6):064005, 2017.

\bibitem[GW20]{GarbanWuFKIsing}
Christophe Garban and Hao Wu.
\newblock On the convergence of {FK}-{I}sing percolation to {SLE}$(16/3, 16/3-6)$.
\newblock {\em J. Theor. Probab.}, 33:828--865, 2020.

\bibitem[HK13]{HonglerKytolaIsingFree}
Cl{\'e}ment Hongler and Kalle Kyt{\"o}l{\"a}.
\newblock Ising interfaces and free boundary conditions.
\newblock {\em J. Amer. Math. Soc.}, 26(4):1107--1189, 2013.

\bibitem[Hon10]{HonglerThesis}
Cl{\'e}ment Hongler.
\newblock Conformal invariance of {I}sing model correlations.
\newblock Ph.D. Thesis, Universit\'e de Gen\`eve, 2010.


\bibitem[HS13]{HonglerSmirnovIsingEnergy}
Cl{\'e}ment Hongler and Stanislav Smirnov.
\newblock The energy density in the planar Ising model.
\newblock {\em Acta Math.}, 211(2):191--225, 2013.
  
\bibitem[ID89]{ID_book}
Claude Itzykson and Jean-Michel Drouffe.
\newblock Statistical field theory 1.
\newblock Cambridge University Press, 1989.

\bibitem[Izy15]{IzyurovObservableFree}
Konstantin Izyurov.
\newblock Smirnov's observable for free boundary conditions, interfaces and
  crossing probabilities.
\newblock {\em Comm. Math. Phys.}, 337(1):225--252, 2015.

\bibitem[Izy17]{IzyurovIsingMultiplyConnectedDomains}
Konstantin Izyurov.
\newblock Critical {I}sing interfaces in multiply-connected domains.
\newblock {\em Probab. Theory Related Fields}, 167(1-2):379--415, 2017.

\bibitem[Izy20]{IzyurovMultipleFKIsing}
Konstantin Izyurov.
\newblock On multiple {SLE} for the {FK}-{I}sing model.
\newblock {\em Ann. Probab.}, 50(2):771--790, 2022.


\bibitem[Kar18]{KarrilaConformalImage}
Alex Karrila.
\newblock Limits of conformal images and conformal images of limits for planar random curves.
\newblock {\em Preprint in} arXiv:1810.05608.

\bibitem[Kar19]{Karrila_multiple_SLE}
Alex Karrila.
\newblock Multiple {SLE} type scaling limits: from local to global.
\newblock {\em Preprint in} arXiv:1903.10354.


\bibitem[Kar20]{KarrilaUSTBranches}
Alex Karrila.
\newblock U{ST} branches, martingales, and multiple {$\rm SLE(2)$}.
\newblock {\em Electron. J. Probab.}, 25:Paper No.~83, 2020.


\bibitem[Ken00]{Kenyon_domino_tiling}
Richard Kenyon.
\newblock Conformal invariance of domino tiling.
\newblock {\em Ann. Probab.}, 28(2):759--795, 2000.

  \bibitem[KKP20]{KarrilaKytolaPeltolaCorrelationsLERWUST}
  Alex Karrila, Kalle Kyt\"{o}l\"{a}, and Eveliina Peltola.
  \newblock Boundary correlations in planar {LERW} and {UST}.
  \newblock {\em Comm. Math. Phys.}, 376(3):2065--2145, 2020.

\bibitem[KL07]{KozdronLawlerMultipleSLEs}
Michael~J. Kozdron and Gregory~F. Lawler.
\newblock The configurational measure on mutually avoiding {SLE} paths.
\newblock In {\em Universality and renormalization}, volume~50 of {\em Fields
  Inst. Commun.}, pages 199--224. Amer. Math. Soc., Providence, RI, 2007.
  
\bibitem[KP16]{KytolaPeltolaPurePartitionSLE}
Kalle Kyt{\"o}l{\"a} and Eveliina Peltola.
\newblock Pure partition functions of multiple {SLE}s.
\newblock {\em Comm. Math. Phys.}, 346(1):237--292, 2016.

\bibitem[KS17]{KemppainenSmirnovRandomCurves}
Antti Kemppainen and Stanislav Smirnov.
\newblock Random curves, scaling limits and Loewner evolutions.
\newblock {\em Ann. Probab.}, 45(2):698--779, 2017.

\bibitem[KW11]{Kenyon-Wilson:Boundary_partitions_in_trees_and_dimers}
Richard~W. Kenyon and David~B. Wilson.
\newblock Boundary partitions in trees and dimers.
\newblock {\em Trans. Amer. Math. Soc.}, 363(3):1325--1364, 2011.

\bibitem[Law05]{LawlerConformallyInvariantProcesses}
Gregory~F. Lawler.
\newblock {\em Conformally invariant processes in the plane}, volume 114 of
  {\em Mathematical Surveys and Monographs}.
\newblock American Mathematical Society, Providence, RI, 2005.


\bibitem[Law09]{LawlerPartitionFunctionsSLE}
Gregory~F. Lawler.
\newblock Partition functions, loop measure, and versions of {SLE}.
\newblock {\em J.~Stat. Phys.}, 134(5-6):813--837, 2009.


\bibitem[LPW21]{lpwUST2021}
Mingchang Liu, Eveliina Peltola, and Hao Wu.
\newblock Uniform spanning tree in topological polygons, partition functions for $\SLE(8)$, and correlations in $c=-2$ logarithmic CFT.
\newblock {\em Preprint in} arXiv:2108.04421.

\bibitem[LPW22]{lpwperco2022}
Mingchang Liu, Eveliina Peltola, and Hao Wu.
\newblock In preparation.


\bibitem[LSW04]{LawlerSchrammWernerLERWUST}
Gregory~F. Lawler, Oded Schramm, and Wendelin Werner.
\newblock Conformal invariance of planar loop-erased random walks and uniform
  spanning trees.
\newblock {\em Ann. Probab.}, 32(1B):939--995, 2004.

\bibitem[MCW73]{McCoyWuIsing}
Barry M.~McCoy and Tai Tsun Wu.
\newblock {\em The two-dimensional Ising model.}
\newblock Harvard University Press, 1973.

\bibitem[Pel19]{PeltolaICMP}
Eveliina Peltola.
\newblock Towards a conformal field theory for {S}chramm-{L}oewner evolutions. 
\newblock {\em J. Math. Phys.}, 60(10):103305, 2019.


\bibitem[PW19]{PeltolaWuGlobalMultipleSLEs}
Eveliina Peltola and Hao Wu.
\newblock Global and local multiple {SLE}s for $\kappa \leq 4$ and connection
  probabilities for level lines of {GFF}.
\newblock {\em Comm. Math. Phys.}, 366(2):469--536, 2019.

\bibitem[RS05]{RohdeSchrammSLEBasicProperty}
Steffen Rohde and Oded Schramm.
\newblock Basic properties of {SLE}.
\newblock {\em Ann. of Math. (2)}, 161(2):883--924, 2005.

\bibitem[Sch00]{SchrammScalinglimitsLERWUST}
Oded Schramm.
\newblock Scaling limits of loop-erased random walks and uniform spanning
  trees.
\newblock {\em Israel J. Math.}, 118:221--288, 2000.

\bibitem[She07]{SheffieldGFFMath}
Scott Sheffield.
\newblock Gaussian free fields for mathematicians.
\newblock {\em Probab. Theory Related Fields}, 139(3-4):521--541, 2007.

\bibitem[Smi01]{SmirnovPercolationConformalInvariance}
Stanislav Smirnov.
\newblock Critical percolation in the plane: conformal invariance, {C}ardy's
  formula, scaling limits.
\newblock {\em C. R. Acad. Sci. Paris S\'er. I Math.}, 333(3):239--244, 2001.

\bibitem[Smi06]{SmirnovConformalInvariance}
Stanislav Smirnov.
\newblock Towards conformal invariance of $2{D}$ lattice models.
\newblock In {\em International {C}ongress of {M}athematicians. {V}ol. {II}},
  pages 1421--1451. Eur. Math. Soc., Z\"urich, 2006.

\bibitem[Smi10]{SmirnovConformalInvarianceAnnals}
Stanislav Smirnov.
\newblock Conformal invariance in random cluster models. I. Holomorphic fermions in the {I}sing model.
\newblock {\em Ann. of Math. (2)}, 172(2):1435--1467, 2010.

\bibitem[SS13]{SchrammSheffieldContinuumGFF}
Oded Schramm and Scott Sheffield.
\newblock A contour line of the continuum {G}aussian free field.
\newblock {\em Probab. Theory Related Fields}, 157(1-2):47--80, 2013.

\bibitem[Wu20]{WuHyperSLE}
Hao Wu.
\newblock Hypergeometric {SLE}: conformal Markov characterization and applications.
\newblock {\em Comm. Math. Phys.}, 374(2): 433-484, 2020.

\bibitem[WW17]{WangWuLevellinesGFFI}
Menglu Wang and Hao Wu.
\newblock Level lines of {G}aussian free field {I}: zero-boundary {GFF}.
\newblock {\em Stochastic Process. Appl.}, 127(4):1045--1124, 2017.

\bibitem[Zha08]{Zhan-scaling_limits_of_planar_LERW}
Dapeng Zhan.
\newblock The scaling limits of planar {LERW} in finitely connected domains.
\newblock {\em Ann. Probab.}, 36(2):467--529, 2008.

\end{thebibliography}
\end{document}